\documentclass[journal, onecolumn, draft]{IEEEtran}

%
\ifCLASSINFOpdf
\else
\fi

%
\usepackage[cmex10]{amsmath}

\usepackage{amssymb,amsthm}
\usepackage{color}
\usepackage{graphics}

\usepackage{multirow}
\usepackage{hhline}

\usepackage{tikz}
\usetikzlibrary{angles,calc,intersections,quotes,arrows.meta}
\usetikzlibrary{calc, arrows,backgrounds,positioning,fit}
\usetikzlibrary{decorations.markings}
\usetikzlibrary{shapes.geometric}
\usetikzlibrary{shapes.misc, fit}
\usetikzlibrary{decorations.pathreplacing}
\usepackage{pgfplots}
\tikzset{middlearrow/.style={
        decoration={markings,
            mark= at position 0.6 with {\arrow{#1}} ,
        },
        postaction={decorate}
    }
}

\allowdisplaybreaks

\usepackage[noadjust]{cite}

\makeatletter
\newtheorem*{rep@theorem}{\rep@title}
\newcommand{\newreptheorem}[2]{%
\newenvironment{rep#1}[1]{%
 \def\rep@title{#2 \ref{##1}}%
 \begin{rep@theorem}}%
 {\end{rep@theorem}}}
\makeatother

\theoremstyle{definition}
\newtheorem{definition}{Definition}
\newtheorem{construction}{Construction}
\newtheorem{example}{Example}

\theoremstyle{plain}
\newtheorem{theorem}{Theorem}
\newreptheorem{theorem}{Theorem}
\newtheorem{proposition}{Proposition}
\newtheorem{lemma}{Lemma}
\newtheorem{remark}{Remark}
\newtheorem{corollary}{Corollary}

\tikzset{%
pics/.cd,
lower/.style args={#1#2}{
  code={\node (a) {#2};
       \draw[thick,#1] (a.south west) |- (a.south east);
  }
},
}

\tikzset{%
pics/.cd,
upper/.style args={#1#2}{
  code={\node (a) {#2};
       \draw[thick,#1] (a.north west) |- (a.north east);
  }
},
}

\tikzset{%
pics/.cd,
left/.style args={#1#2}{
  code={\node (a) {#2};
       \draw[thick,#1] (a.south west) -| (a.north west);
  }
},
}

\tikzset{%
pics/.cd,
right/.style args={#1#2}{
  code={\node (a) {#2};
       \draw[thick,#1] (a.south east) -| (a.north east);
  }
},
}

\newcommand{\upmapsto}{\rotatebox[origin=c]{-270}{$\scriptstyle\mapsto$}\mkern2mu}

\DeclareMathOperator{\Rk}{Rk}
\DeclareMathOperator{\diag}{Diag}

\DeclareMathOperator{\dd}{d}
\DeclareMathOperator{\wt}{wt}


\begin{document}
%
\title{Universal and Dynamic Locally Repairable Codes with Maximal Recoverability via Sum-Rank Codes}
%
%
%
\author{Umberto~Mart{\'i}nez-Pe\~{n}as,~\IEEEmembership{Member,~IEEE,} 
		and Frank~R.~Kschischang,~\IEEEmembership{Fellow,~IEEE,}
\thanks{This work is supported by The Independent Research Fund Denmark under Grant No. DFF-7027-00053B. }
\thanks{Parts of this paper were presented at the 56th Annual Allerton Conference on Communication, Control, and Computing, Monticello, IL, USA, 2018.}
\thanks{U. Mart{\'i}nez-Pe\~{n}as and F.~R.~Kschischang are with The Edward
S. Rogers Sr. Department of Electrical and Computer Engineering, University
of Toronto, Toronto, ON M5S 3G4, Canada. (e-mail: umberto@ece.utoronto.ca;
frank@ece.utoronto.ca). }}


%
%

\markboth{}%
{}
%



\maketitle



\begin{abstract}
Locally repairable codes (LRCs) are considered with equal or unequal localities, local distances and local field sizes. An explicit two-layer architecture with a sum-rank outer code is obtained, having disjoint local groups and achieving maximal recoverability (MR) for all families of local linear codes (MDS or not) simultaneously, up to a specified maximum locality $ r $. Furthermore, the local linear codes (thus the localities, local distances and local fields) can be efficiently and dynamically modified without global recoding or changes in architecture or outer code, while preserving the MR property, easily adapting to new configurations in storage or new hot and cold data. In addition, local groups and file components can be added, removed or updated without global recoding. The construction requires global fields of size roughly $ g^r $, for $ g $ local groups and maximum or specified locality $ r $. For equal localities, these global fields are smaller than those of previous MR-LRCs when $ r \leq h $ (global parities). For unequal localities, they provide an exponential field size reduction on all previous best known MR-LRCs. For bounded localities and a large number of local groups, the global erasure-correction complexity of the given construction is comparable to that of Tamo-Barg codes or Reed-Solomon codes with local replication, while local repair is as efficient as for the Cartesian product of the local codes. Reed-Solomon codes with local replication and Cartesian products are recovered from the given construction when $ r=1 $ and $ h = 0 $, respectively. The given construction can also be adapted to provide hierarchical MR-LRCs for all types of hierarchies and parameters. Finally, subextension subcodes and sum-rank alternant codes are introduced to obtain further exponential field size reductions, at the expense of lower information rates.
\end{abstract}

\begin{IEEEkeywords} 
Distributed storage, hierarchical locality, linearized Reed-Solomon codes, locally repairable codes, maximally recoverable codes, partial MDS codes, sum-rank codes, unequal localities.
\end{IEEEkeywords}

%
\IEEEpeerreviewmaketitle

\section{Introduction} \label{sec intro}
%
%
%
%

\IEEEPARstart{D}{istributed} storage systems (DSSs) are of increasing importance for various cloud-based services and other applications, but are usually vulnerable to node erasures (due to disk failures). This has recently motivated several interesting and highly non-trivial coding-theoretic problems. A simple solution is data replication, but it suffers from low information rate. Optimal information rates are achieved by maximum distance separable (MDS) codes. However, repairing a single node with MDS codes requires contacting a large number of nodes and decoding all information symbols, resulting in high repair latency. Hybrid solutions are MDS codes with local replication and Cartesian products of MDS codes, but these suffer from low information rate and low global erasure-correction capability, respectively.

Among different proposals, \textit{locally repairable codes} (LRCs) \cite{pyramid, oggier, gopalan} have attracted considerable attention recently, since they allow a failed node to be repaired by contacting only a small number $ r $ (called \textit{locality}) of other nodes while simultaneously having a good global erasure-correction capability. LRCs have already been implemented in practice by Microsoft \cite{azure} and Facebook \cite{xoring}. Singleton-type bounds on the global distance of LRCs were given in \cite{gopalan, kamath}. \textit{Optimal LRCs}, meaning LRCs whose global distance attain such bounds, were obtained in \cite{gopalan, kamath, lastras, rawat, papailio-LRC, song, tamo-matroid}, and the first general construction with linear field sizes (i.e., scaling linearly with block length) was obtained in \cite{tamo-barg}. Recently, optimal LRCs with larger code lengths than the field size were obtained for certain choices of global distance, local distance and/or locality in \cite{BargAGlrc, li-lrc, luo-lrc, howlonglrc}. 

Later, LRCs where each local group has a different locality $ r_i $ (depending on the local group index $ i $) were introduced independently in \cite{kadhe} and \cite{zeh-multiple}. The main motivation is that different storage configurations may be required, or some nodes may need faster local repair or access (\textit{hot data}), while global erasure correction is improved by also considering the different non-maximum localities. Including also multiple local distances $ \delta_i \geq 2 $ was considered independently in \cite{chen-hao} and \cite{kim}. In particular, \cite{kadhe, kim} obtain optimal LRCs with multiple localities (and local distances in \cite{kim}), for arbitrary parameters, by adapting the construction from \cite{rawat} based on Gabidulin codes \cite{gabidulin, roth}, which requires field sizes that are exponential in the code length. General optimal LRCs with multiple localities, local distances and subexponential field sizes are not known yet, to the best of our knowledge. 

In another line of research, LRCs with \textit{maximal recoverability} (MR-LRCs, also known as \textit{partial MDS} or \textit{PMDS codes}) have been introduced successively in \cite{MR-property, blaum-RAID, gopalan-MR}. MR-LRCs are a strictly stronger class of LRCs than ``optimal'' LRCs, in spite of the terminology. Not only do they attain optimal global distance, but they can correct any erasure pattern that is information-theoretically correctable given the local constraints (while optimal LRCs, such as \cite{tamo-barg}, cannot). MR-LRCs over relatively small fields are significantly harder to obtain than optimal LRCs. In fact, certain parameters of MR-LRCs require super-linear field sizes \cite{gopi}, and slightly more complex topologies require super-polynomial field sizes \cite{grid-like}. Not many explicit constructions are known. Some families for certain parameters are given in \cite{blaum-RAID, gopalan-MR, blaum-twoparities, hu}. Constructions for general parameters are given in \cite{calis, gabrys, neri, guruswami-MR}. See Section \ref{sec comparisons} for a detailed description of the code parameters and field sizes achieved by the works \cite{tamo-barg, blaum-RAID, blaum-twoparities, hu, gopalan-MR, calis, gabrys, neri, guruswami-MR}. 

In this work, we propose replacing Gabidulin codes \cite{gabidulin, roth} by \textit{linearized Reed-Solomon codes} \cite{linearizedRS} in known MR-LRC constructions \cite{rawat, kadhe, kim, calis}. The key idea is that the only property of Gabidulin codes used here is that they are \textit{maximum sum-rank distance} (MSRD) block codes for the sum-rank length partition $ N = \sum_{i=1}^g r_i $ ($ = gr $ for equal localities) for $ g $ local groups, see Section \ref{sec preliminaries sum-rank}. Linearized Reed-Solomon codes are a hybrid between Gabidulin codes \cite{gabidulin, roth} and Reed-Solomon codes \cite{reed-solomon} that are MSRD and attain the minimum field-size exponent, $ r = \max_i r_i $, for the corresponding sum-rank length partition (see Proposition \ref{prop minimum m for universal LRC}). 

As a consequence, we obtain new general MR-LRCs for any choice of (equal or unequal) localities up to a specified number $ r $, arbitrary (equal or unequal) local distances, and with local field sizes of order $ \mathcal{O}(r) $. The global field size is roughly $ g^r $, for $ g $ local groups, independent of the code \textit{dimension} $ k $ or the number of \textit{global parities} $ h = gr - k $, and global erasure correction has quadratic complexity in $ gr $ over such fields. For bounded and small localities $ r $ and large $ g $, the global decoding complexity becomes comparable to that of Tamo-Barg codes \cite{tamo-barg} and Reed-Solomon codes \cite{reed-solomon} with local replication (see Section \ref{sec comparisons}). Interestingly, the latter are recovered when $ r = 1 $. Moreover, local field sizes and complexity of local repair are actually the same as those of Cartesian products of $ r $-dimensional MDS codes (see Example \ref{example intro} in Subsection \ref{subsec partitioning initial localities}), which are recovered when $ h = 0 $. Note that local repair is assumed to be more frequent, whereas global repair is reserved to catastrophic erasures. With this construction:

1) We obtain the first general MR-LRCs for arbitrary unequal localities and local distances with global field sizes that are not exponential in $ gr $, in contrast with \cite{kadhe, kim}.

2) We obtain further field size reductions on MR-LRCs compared to \cite{gabrys} (which assumes equal localities and local distances) whenever $ r \leq h $ (see Subsection \ref{subsec other LRC}). Both small $ r $ and $ h $ are desirable in DSSs applications. Which regime, $ r \leq h $ or $ h < r $, is more desirable depends on the particular application. Observe however that large $ r $ defeats the purpose of LRCs, and $ h $ is the extra number of correctable erasures compared to the Cartesian product of the local codes (the case $ h = 0 $), hence is expected to grow somehow as $ gr $ grows. See Example \ref{example second intro} in Subsection \ref{subsec other LRC}. 

3) In contrast with most LRCs (e.g., \cite{gopalan, kamath, lastras, papailio-LRC, song, tamo-matroid, tamo-barg, zeh-multiple, chen-hao, gopalan-MR, blaum-RAID, blaum-twoparities, hu, gabrys, neri, guruswami-MR}), our construction is a) \textit{Universal}: The same architecture and outer code admits any family of $ g $ local linear codes (MDS or not) of dimensions up to $ r $; and b) \textit{Dynamic}: Arbitrary changes of local linear codes are possible, always preserving the MR condition and optimal global distance, without global recoding or changes in architecure or outer code; one simply needs to perform efficient local recodings, usually over the much smaller local fields. See Section \ref{sec dynamic properties} for details. Although MR-LRCs with $ \delta > 2 $ (e.g., \cite{blaum-RAID, gopalan-MR, gabrys}) admit puncturing, hence changes in local distances (this seems not to be possible without the MR condition), changing localities without global recoding seems difficult. In addition, our construction allows file components and local groups to be added, removed or updated without global recoding. Universality and dynamism are of interest in DSSs, where one may want to adapt to new configurations over time without recoding all of the stored data. Universality and dynamism are shared by the particular cases $ r = 1 $ (Reed-Solomon codes with local replication) and $ h = 0 $ (Cartesian products).

4) The universality in Item 3 implies that our construction admits any combination and any number of layers of (equal or unequal) \textit{multi-layer or hierarchical localities}. This means that the local codes may be in turn MR-LRCs, and in such a case, each node is protected by a small local code (\textit{lower level}), and simultaneously by a larger local MR-LRC (\textit{medium level}) and by the global MR-LRC (\textit{higher level}), in the two-level case, and similarly for more levels. Lower-level codes may repair fewer erasures, but require contacting a smaller number of nodes and typically use a smaller field size. Hierarchical locality was introduced in \cite{hierarchical}. The concept of MR-LRCs with hierarchical localities has been itnroduced parallel to the present work in \cite{nair}. The comparison between the field sizes between the constructions in \cite{nair} and this work is analogous to the case of simple MR-LRCs as in Item 2. Observe that the work \cite{nair} considers only two-level hierarchies and equal localities and local distances for each level.  

5) We show that universal MR-LRCs as in Item 3 with localities $ r_i $ up to a specified number $ r $ are equivalent to global codes with the given architecture plus MSRD outer codes with sum-rank length partition $ N = \sum_{i=1}^g r_i $. In particular, $ r = \max_i r_i $ is the smallest field extension that allows this type of universality. Hence MR-LRCs over smaller fields always need to coordinate their global and local codes in some way, reducing universality and dynamism.

6) Following the idea in the previous item, we introduce \textit{subextension subcodes} and \textit{sum-rank alternant codes}. As in the Hamming-metric case ($ r = 1 $), sum-rank alternant codes enable exponential field size reductions with the same global erasure-correction capability, at the expense of lower information rates.

The remainder of this paper is organized as follows. In Section \ref{sec preliminaries sum-rank}, we give some preliminaries on sum-rank codes, including some new results. In Section \ref{sec main construction}, we describe our main MR-LRC construction. In Section \ref{sec universal lrc arbitrary local codes}, we study MR-LRCs where local codes can be arbitrary linear codes over some (local) subfields, including the proof that the given construction is MR. We show in Section \ref{sec dynamic properties} how to perform local linear recodings, partition localities, obtain multi-layer or hierarchical MR-LRCs, and update localities, file components and number of local groups. In Section \ref{sec comparisons}, we compare the achieved global and local fields and decoding complexities of the proposed construction with LRCs from the literature that cover general parameters. In Section \ref{sec further field reductions}, we introduce subextension subcodes and sum-rank alternant codes to obtain similar LRCs, which allows us to obtain exponential field size reductions at the expense of reducing information rates. Section \ref{sec conclusion} concludes the paper.

\subsection*{Notation}

For a field $ \mathbb{F} $, we denote by $ \mathbb{F}^{m \times n} $ the set of $ m \times n $ matrices with entries in $ \mathbb{F} $, and we denote $ \mathbb{F}^n = \mathbb{F}^{1 \times n} $. For a prime power $ q $, we denote by $ \mathbb{F}_q $ the finite field with $ q $ elements.

For a positive integer, we denote $ [n] = \{ 1,2, \ldots, n\} $. Given $ \mathcal{R} \subseteq [n] $, we denote by $ \mathbf{c}_\mathcal{R} \in \mathbb{F}^{|\mathcal{R}|} $, $ A|_\mathcal{R} \in \mathbb{F}^{m \times | \mathcal{R} |} $ and $ \mathcal{C}_\mathcal{R} \subseteq \mathbb{F}^{| \mathcal{R} |} $ the restrictions of a vector $ \mathbf{c} \in \mathbb{F}^n $, a matrix $ A \in \mathbb{F}^{m \times n} $ and a code $ \mathcal{C} \subseteq \mathbb{F}^n $, respectively, to the coordinates indexed by $ \mathcal{R} $.

In general, the term \textit{complexity $ \mathcal{O}(N) $} means complexity of $ \mathcal{O}(N) $ operations over the corresponding field.

\section{Preliminaries on Sum-rank Codes} \label{sec preliminaries sum-rank}

The sum-rank metric was introduced in \cite{multishot} for error-correction in multishot network coding. It was implicitly considered earlier in the space-time coding literature (see \cite[Sec. III]{space-time-kumar}). In Subsection \ref{subsec sum-rank codes}, we collect basic properties of sum-rank codes, including several new results. In Subsection \ref{subsec lin RS codes}, we review the construction of \textit{linearized Reed-Solomon codes} \cite{linearizedRS}, which is the only known general family of \textit{maximum sum-rank distance} (MSRD) block codes with subexponential field sizes in the code length.

\subsection{Sum-rank Codes} \label{subsec sum-rank codes}

Let $ q $ denote a prime power and fix a positive integer $ m $. Fix an ordered basis $ \mathcal{A} = \{ \alpha_1, \alpha_2, \ldots, \alpha_m \} $ of $ \mathbb{F}_{q^m} $ over $ \mathbb{F}_q $. For any non-negative integer $ s $, we denote by $ M_\mathcal{A} :
\mathbb{F}_{q^m}^s \longrightarrow \mathbb{F}_q^{m \times s} $ the corresponding \textit{matrix representation} map, given by 
\begin{equation}
M_\mathcal{A} \left( \sum_{i=1}^m \alpha_i \mathbf{c}_i \right) = \left( \begin{array}{cccc}
c_{11} & c_{12} & \ldots & c_{1s} \\
c_{21} & c_{22} & \ldots & c_{2s} \\
\vdots & \vdots & \ddots & \vdots \\
c_{m1} & c_{m2} & \ldots & c_{ms} \\
\end{array} \right),
\label{eq def matrix representation map}
\end{equation}
where $ \mathbf{c}_i = (c_{i,1}, c_{i,2}, \ldots, c_{i,s}) \in \mathbb{F}_q^s $, for $ i = 1,2, \ldots, m $. 

Fix positive integers $ g $ and $ N = r_1 + r_2 + \cdots + r_g $. The integer $ g $ will be called the \textit{initial number of local groups}, and $ r_1, r_2, \ldots, r_g $, \textit{initial localities} (not necessarily equal).

\begin{definition} [\textbf{Sum-rank metric \cite{multishot}}]
Let $ \mathbf{c} = (\mathbf{c}^{(1)}, $ $ \mathbf{c}^{(2)}, $ $ \ldots,
$ $ \mathbf{c}^{(g)}) \in \mathbb{F}_{q^m}^N $, where $
\mathbf{c}^{(i)} \in \mathbb{F}_{q^m}^{r_i} $, for $ i = 1,2, \ldots,
g $. We define the sum-rank weight of $ \mathbf{c} $ as
$$ \wt_{SR}(\mathbf{c}) = \sum_{i=1}^g {\rm
Rk}(M_{\mathcal{A}}(\mathbf{c}^{(i)})). $$

Finally, we define the sum-rank metric $ \dd_{SR} : (\mathbb{F}_{q^m}^N)^2 \longrightarrow \mathbb{N} $ as $ \dd_{SR}(\mathbf{c}, \mathbf{d}) = {\rm wt}_{SR}(\mathbf{c} - \mathbf{d}) $, for all $ \mathbf{c}, \mathbf{d} \in \mathbb{F}_{q^m}^N $. We will also say that $ N = r_1 + r_2 + \cdots + r_g $ is a \textit{sum-rank length partition}. A sum-rank length partition is thus the same as a number of initial local groups and initial localities.
\end{definition}

As usual, for a code $ \mathcal{C} \subseteq \mathbb{F}_{q^m}^N $ (linear or non-linear), we define its minimum sum-rank distance as
\begin{equation}
{\rm d}_{SR}(\mathcal{C}) = \min \{ {\rm d}_{SR}(\mathbf{c}, \mathbf{d}) \mid \mathbf{c}, \mathbf{d} \in \mathcal{C}, \mathbf{c} \neq \mathbf{d} \}. 
\label{eq def min sum-rank distance}
\end{equation}

Observe that the Hamming metric \cite{hamming} and the rank metric \cite{delsartebilinear, gabidulin, roth} are recovered from the sum-rank metric by setting $ r_1 = r_2 = \ldots = r_g = 1 $ and $ g = 1 $, respectively.

The crucial fact about the minimum sum-rank distance for (universal) global erasure correction in LRCs is that it gives the worst-case erasure-correction capability after any possible local linear recoding on disjoint local groups. This is given by the following result, which we will use throughout the paper and is of interest in its own right.

\begin{theorem} \label{th sum-rank distance is min among hamming distances}
Given a code $ \mathcal{C} \subseteq \mathbb{F}_{q^m}^N $ (linear or non-linear), it holds that
\begin{equation}
\begin{split}
{\rm d}_{SR}(\mathcal{C}) = \min \{ & {\rm d}_H(\mathcal{C} A) \mid A = \diag(A_1, A_2, \ldots, A_g), \\
& A_i \in \mathbb{F}_q^{r_i \times r_i} \textrm{ invertible}, 1 \leq i \leq g \}.
\end{split}
\end{equation}
Here, $ d_H(\mathcal{C}A) $ denotes the minimum Hamming distance of the code $ \mathcal{C}A \subseteq \mathbb{F}_{q^m}^N $, where the Hamming distance between two codewords $ \mathbf{c}, \mathbf{d} \in \mathbb{F}_{q^m}^N $ is defined as $ {\rm d}_H(\mathbf{c}, \mathbf{d}) = {\rm wt}_H(\mathbf{c} - \mathbf{d}) $, where
$$ {\rm wt}_H(\mathbf{e}) = |\{ i \in [N] \mid e_i \neq 0 \}| , $$
for any vector $ \mathbf{e} = (e_1, e_2, \ldots, e_N) \in \mathbb{F}_{q^m}^N $, where $ e_i \in \mathbb{F}_{q^m} $, for $ i = 1,2, \ldots, N $.  
\end{theorem}
\begin{proof}
We first prove the inequality $ \leq $. Since multiplying by such block-diagonal matrices $ A $ constitutes a linear sum-rank isometry, and sum-rank distances are upper bounded by Hamming distances, we deduce that
$$ {\rm d}_{SR}(\mathcal{C}) = {\rm d}_{SR}(\mathcal{C} A) \leq {\rm d}_H(\mathcal{C}A), $$
for all such matrices, and the inequality follows. 

We now prove the inequality $ \geq $. Let $ \mathbf{c}, \mathbf{d} \in \mathcal{C} $ be such that $ \mathbf{c} \neq \mathbf{d} $ and $ {\rm d}_{SR} (\mathcal{C}) = {\rm d}_{SR}(\mathbf{c}, \mathbf{d}) $. Let $ \mathbf{c} = (\mathbf{c}^{(1)}, \mathbf{c}^{(2)}, \ldots, \mathbf{c}^{(g)}) $ and $ \mathbf{d} = (\mathbf{d}^{(1)}, \mathbf{d}^{(2)}, \ldots, \mathbf{d}^{(g)}) $, where $ \mathbf{c}^{(i)}, \mathbf{d}^{(i)} \in \mathbb{F}_{q^m}^{r_i} $, for $ i = 1,2, \ldots, g $. By column reduction, there exists an invertible matrix $ A_i \in \mathbb{F}_q^{r_i \times r_i} $ such that
\begin{equation*}
\begin{split}
M_\mathcal{A}((\mathbf{c}^{(i)} - \mathbf{d}^{(i)}) A_i) & = M_\mathcal{A}(\mathbf{c}^{(i)} - \mathbf{d}^{(i)}) A_i \\
& = (B_i , 0_{r_i - w_i}) \in \mathbb{F}_q^{m \times r_i},
\end{split}
\end{equation*}
for certain full-rank matrix $ B_i \in \mathbb{F}_q^{m \times w_i} $, where $ w_i = \Rk ( M_\mathcal{A}(\mathbf{c}^{(i)} - \mathbf{d}^{(i)})) $, for $ i = 1,2, \ldots, g $. In particular, we deduce that $ \wt_H ((\mathbf{c}^{(i)} - \mathbf{d}^{(i)})A_i) = w_i $, for $ i = 1,2, \ldots, g $.

Define $ A = \diag(A_1, A_2, \ldots, A_g) \in \mathbb{F}_q^{N \times N} $. It follows that $ {\rm wt}_{SR}((\mathbf{c} - \mathbf{d})A) = \sum_{i=1}^g w_i = {\rm w}_H((\mathbf{c} - \mathbf{d})A) $. Hence 
\begin{equation*}
\begin{split}
{\rm d}_{SR}(\mathcal{C}) = {\rm d}_{SR}(\mathcal{C}A) & = {\rm d}_{SR}(\mathbf{c}A, \mathbf{d}A) = {\rm w}_{SR}((\mathbf{c} - \mathbf{d})A) \\
 & = {\rm w}_H((\mathbf{c} - \mathbf{d})A) = {\rm d}_H(\mathbf{c}A, \mathbf{d}A) \\
 & \geq {\rm d}_H(\mathcal{C} A),
\end{split}
\end{equation*}
and the inequality is proven.
\end{proof}

The following will be the main tool for global erasure-correction of locally repairable codes based on sum-rank codes. It follows from Theorem \ref{th sum-rank distance is min among hamming distances}. It may also be deduced from \cite[Th. 1]{secure-multishot}.

\begin{corollary}[\textbf{Erasure correction}] \label{cor sum-rank erasure correction}
Let $ \mathcal{C} \subseteq \mathbb{F}_{q^m}^N $ be a code (linear or non-linear), and let $ 0 \leq \rho < N $. The following are equivalent:
\begin{enumerate}
\item
$ \rho < {\rm d}_{SR}(\mathcal{C}) $.
\item
For all integers $ n_i \geq 1 $ and all matrices $ A_i \in \mathbb{F}_q^{r_i \times n_i} $, for $ i = 1,2, \ldots, g $, such that
$$ N - \sum_{i=1}^g {\rm Rk}(A_i) \leq \rho, $$ 
there exists a decoder 
$$ D : \mathcal{C} \diag(A_1, A_2, \ldots, A_g) \longrightarrow \mathcal{C} $$ 
(depending on the matrices $ A_1 $, $ A_2 $, \ldots, $ A_g $), such that $ D(\mathbf{c} \diag(A_1, A_2, \ldots, A_g)) = \mathbf{c} $, for all $ \mathbf{c} \in \mathcal{C} $.
\end{enumerate}
\end{corollary}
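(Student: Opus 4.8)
The plan is to prove the equivalence by connecting it to Theorem~\ref{th sum-rank distance is min among hamming distances} and the standard relationship between minimum distance and erasure correction. The key observation is that multiplying by a block-diagonal matrix $ \diag(A_1, \ldots, A_g) $ with $ A_i \in \mathbb{F}_q^{r_i \times n_i} $ acts, on each block, as a $ \mathbb{F}_q $-linear map from $ \mathbb{F}_{q^m}^{r_i} $ to $ \mathbb{F}_{q^m}^{n_i} $; the quantity $ N - \sum_i \Rk(A_i) $ measures how much ``sum-rank information'' is lost under this map, which is exactly the right notion of an erasure pattern in the sum-rank setting.

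For the implication $ (1) \Rightarrow (2) $: Suppose $ \rho < \dd_{SR}(\mathcal{C}) $, and fix matrices $ A_i \in \mathbb{F}_q^{r_i \times n_i} $ with $ N - \sum_i \Rk(A_i) \leq \rho $. It suffices to show the map $ \mathbf{c} \mapsto \mathbf{c}\diag(A_1, \ldots, A_g) $ is injective on $ \mathcal{C} $, since then one may take $ D $ to be its inverse on the image. So suppose $ \mathbf{c}, \mathbf{c}' \in \mathcal{C} $ satisfy $ (\mathbf{c} - \mathbf{c}')\diag(A_1, \ldots, A_g) = 0 $; writing $ \mathbf{e} = \mathbf{c} - \mathbf{c}' = (\mathbf{e}^{(1)}, \ldots, \mathbf{e}^{(g)}) $, each block satisfies $ \mathbf{e}^{(i)} A_i = 0 $, so (after transposing to the matrix representation) each row of $ M_{\mathcal{A}}(\mathbf{e}^{(i)}) $ lies in the left kernel of $ A_i $, whence $ \Rk(M_{\mathcal{A}}(\mathbf{e}^{(i)})) \leq r_i - \Rk(A_i) $. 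Summing over $ i $ gives $ \wt_{SR}(\mathbf{e}) \leq N - \sum_i \Rk(A_i) \leq \rho < \dd_{SR}(\mathcal{C}) $, forcing $ \mathbf{e} = 0 $, i.e. $ \mathbf{c} = \mathbf{c}' $.

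For the contrapositive of $ (2) \Rightarrow (1) $: Suppose $ \rho \geq \dd_{SR}(\mathcal{C}) $. By Theorem~\ref{th sum-rank distance is min among hamming distances}, there is a block-diagonal $ A = \diag(A_1, \ldots, A_g) $ with each $ A_i \in \mathbb{F}_q^{r_i \times r_i} $ invertible such that $ \dd_H(\mathcal{C}A) = \dd_{SR}(\mathcal{C}) \leq \rho $. Pick distinct codewords whose images differ in exactly $ \dd_{SR}(\mathcal{C}) =: d $ coordinates, let $ \mathcal{E} \subseteq [N] $ be the support of that difference (so $ |\mathcal{E}| = d $), and puncture: replace $ A_i $ by the submatrix $ A_i' $ obtained by deleting the columns of $ A_i $ indexed by $ \mathcal{E} $ within block $ i $. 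Then $ N - \sum_i \Rk(A_i') = |\mathcal{E}| = d \leq \rho $, yet the two chosen codewords have the same image under $ \diag(A_1', \ldots, A_g') $, so no decoder can recover them; this contradicts $ (2) $.

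I expect the main obstacle to be bookkeeping rather than conceptual: carefully relating the rank drop $ r_i - \Rk(A_i) $ to the rank of $ M_{\mathcal{A}}(\mathbf{e}^{(i)}) $ via the correct kernel argument (and being consistent about whether vectors multiply matrices on the right and how this transposes under $ M_{\mathcal{A}} $), and, in the converse, verifying that puncturing the already-block-diagonal invertible $ A $ at the minimum-weight support yields matrices with the claimed rank deficiency. Alternatively, as the statement notes, the whole corollary can be obtained from \cite[Th. 1]{secure-multishot}; I would mention this as a remark but give the self-contained argument above as the primary proof.
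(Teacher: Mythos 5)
Your proof is correct. The paper gives no explicit argument for this corollary (it merely asserts that it follows from Theorem \ref{th sum-rank distance is min among hamming distances}), and your write-up supplies exactly the missing details along the intended lines: the converse direction invokes Theorem \ref{th sum-rank distance is min among hamming distances} to produce an uncorrectable pattern by puncturing at a minimum-weight support, and the forward direction's kernel computation (each row of $M_{\mathcal{A}}(\mathbf{e}^{(i)})$ lies in the left null space of $A_i$, so $\Rk(M_{\mathcal{A}}(\mathbf{e}^{(i)})) \leq r_i - \Rk(A_i)$, using $M_{\mathcal{A}}(\mathbf{e}^{(i)}A_i) = M_{\mathcal{A}}(\mathbf{e}^{(i)})A_i$ since $A_i$ has entries in $\mathbb{F}_q$) is the standard and correct way to handle arbitrary, possibly rank-deficient $A_i$.
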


From Theorem \ref{th sum-rank distance is min among hamming distances} and the Hamming-metric Singleton bound \cite{singleton}, we also obtain the following result. It may also be deduced from \cite[Th. 5]{secure-multishot}.

\begin{corollary} [\textbf{First Singleton bound}] \label{cor sum rank singleton}
Let $ \mathcal{C} \subseteq \mathbb{F}_{q^m}^N $ be a (linear or non-linear) code. It holds that
\begin{equation}
| \mathcal{C} | \leq q^{m(N - {\rm d}_{SR}(\mathcal{C}) + 1)}.
\label{eq sum-rank singleton bound}
\end{equation}
Furthermore, equality holds if, and only if, $ \mathcal{C} A \subseteq \mathbb{F}_{q^m}^N $ is MDS, for all $ A = \diag(A_1, A_2, \ldots, A_g) \in \mathbb{F}_q^{N \times N} $, such that $ A_i \in \mathbb{F}_q^{r_i \times r_i} $ is invertible, for $ i = 1,2, \ldots, g $.
\end{corollary}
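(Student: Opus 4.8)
The plan is to obtain both the bound \eqref{eq sum-rank singleton bound} and the equality characterization directly from Theorem \ref{th sum-rank distance is min among hamming distances} together with the classical Hamming-metric Singleton bound. First I would pick a block-diagonal isometry $ A = \diag(A_1, \ldots, A_g) $ with each $ A_i \in \mathbb{F}_q^{r_i \times r_i} $ invertible that achieves the minimum in Theorem \ref{th sum-rank distance is min among hamming distances}, i.e. $ {\rm d}_H(\mathcal{C}A) = {\rm d}_{SR}(\mathcal{C}) $. Since $ \mathbf{c} \mapsto \mathbf{c}A $ is a bijection on $ \mathbb{F}_{q^m}^N $ (each $ A_i $ is invertible), we have $ |\mathcal{C}A| = |\mathcal{C}| $, and $ \mathcal{C}A \subseteq \mathbb{F}_{q^m}^N $ is a code of the same size. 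Applying the Hamming-metric Singleton bound \cite{singleton} to $ \mathcal{C}A $ gives
\[
|\mathcal{C}| = |\mathcal{C}A| \leq (q^m)^{N - {\rm d}_H(\mathcal{C}A) + 1} = q^{m(N - {\rm d}_{SR}(\mathcal{C}) + 1)},
\]
which is exactly \eqref{eq sum-rank singleton bound}.

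For the equality characterization, I would argue both implications. Suppose first that equality holds in \eqref{eq sum-rank singleton bound}. Let $ A = \diag(A_1, \ldots, A_g) $ be \emph{any} block-diagonal matrix with each $ A_i \in \mathbb{F}_q^{r_i \times r_i} $ invertible. By Theorem \ref{th sum-rank distance is min among hamming distances}, $ {\rm d}_H(\mathcal{C}A) \geq {\rm d}_{SR}(\mathcal{C}) $, hence
\[
|\mathcal{C}A| = |\mathcal{C}| = q^{m(N - {\rm d}_{SR}(\mathcal{C}) + 1)} \geq (q^m)^{N - {\rm d}_H(\mathcal{C}A) + 1}.
\]
Comparing with the Hamming Singleton bound applied to $ \mathcal{C}A $, the inequality must be an equality throughout, so $ \mathcal{C}A $ meets the Hamming-metric Singleton bound with equality, i.e. $ \mathcal{C}A $ is MDS. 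Conversely, suppose $ \mathcal{C}A $ is MDS for every such block-diagonal invertible $ A $. Choosing in particular the optimal $ A $ from Theorem \ref{th sum-rank distance is min among hamming distances} (for which $ {\rm d}_H(\mathcal{C}A) = {\rm d}_{SR}(\mathcal{C}) $), the MDS property of $ \mathcal{C}A $ yields $ |\mathcal{C}| = |\mathcal{C}A| = (q^m)^{N - {\rm d}_H(\mathcal{C}A) + 1} = q^{m(N - {\rm d}_{SR}(\mathcal{C}) + 1)} $, so equality holds in \eqref{eq sum-rank singleton bound}.

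I do not expect any serious obstacle here: the whole argument is a transport of the Hamming-metric Singleton bound across the sum-rank isometry group, and Theorem \ref{th sum-rank distance is min among hamming distances} does all the heavy lifting by guaranteeing that the Hamming distance never drops below the sum-rank distance under these block-diagonal recodings, with equality for at least one choice. The only minor point to be careful about is the direction of the two inequalities: Theorem \ref{th sum-rank distance is min among hamming distances} gives $ {\rm d}_H(\mathcal{C}A) \geq {\rm d}_{SR}(\mathcal{C}) $ for all such $ A $ (needed for the ``only if'' direction and for the bound), while the \emph{existence} of an $ A $ attaining equality is what lets us conclude in the ``if'' direction and what makes the bound tight when it is. One should also note that if $ \mathcal{C} $ is non-linear the notion of MDS still makes sense via the Singleton bound $ |\mathcal{C}A| = (q^m)^{N - {\rm d}_H(\mathcal{C}A)+1} $, so the statement and proof go through verbatim without a linearity assumption.
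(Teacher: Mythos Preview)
Your proposal is correct and is exactly the argument the paper has in mind: the paper merely states that the corollary follows from Theorem \ref{th sum-rank distance is min among hamming distances} and the Hamming-metric Singleton bound, and you have filled in precisely those details, including the equality characterization in both directions.
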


A code satisfying equality in (\ref{eq sum-rank singleton bound}) is called \textit{maximum sum-rank distance} (MSRD). 

We now show that, when the sublengths are equal, $ m = N/g $ is the smallest possible extension degree of $ \mathbb{F}_{q^m} $ over $ \mathbb{F}_q $ for the existence of MSRD codes.

\begin{corollary} [\textbf{Second Singleton bound}] \label{cor minimum m for MSRD}
Assuming that $ r_1 = r_2 = \ldots = r_g = N/g $, then any (linear or non-linear) code $ \mathcal{C} \subseteq \mathbb{F}_{q^m}^N $ satisfies the bound
\begin{equation}
|\mathcal{C}| \leq \left( q^{N/g} \right)^{gm - {\rm d}_{SR}(\mathcal{C}) + 1}.
\label{eq alternative singleton bound}
\end{equation}
In particular, there exists an MSRD code $ \mathcal{C} \subsetneqq \mathbb{F}_{q^m}^N $ with $ \dd_{SR}(\mathcal{C}) > 1 $ over $ \mathbb{F}_q $ only if $ m \geq N/g $.
\end{corollary}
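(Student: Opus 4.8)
The plan is to apply the First Singleton bound (Corollary \ref{cor sum rank singleton}) together with an elementary counting argument exploiting the equal-sublength hypothesis. First I would observe that \eqref{eq sum-rank singleton bound} already gives $|\mathcal{C}| \le q^{m(N - \dd_{SR}(\mathcal{C})+1)}$, so to derive \eqref{eq alternative singleton bound} it suffices to show $m(N - \dd_{SR}(\mathcal{C})+1) \le (N/g)(gm - \dd_{SR}(\mathcal{C})+1)$. Writing $d = \dd_{SR}(\mathcal{C})$ and $r = N/g$, the right-hand exponent is $r(gm - d + 1) = Nm - r(d-1)$, while the left-hand exponent is $m(N-d+1) = Nm - m(d-1)$. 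So the desired inequality is $Nm - m(d-1) \le Nm - r(d-1)$, i.e. $r(d-1) \le m(d-1)$, which holds precisely when $r \le m$ (using $d \ge 1$, with equality when $d = 1$). This suggests I should first establish the key structural fact that an MSRD code with $d > 1$ forces $m \ge r = N/g$, and then the bound \eqref{eq alternative singleton bound} will follow, or alternatively prove the bound directly and extract the $m \ge N/g$ conclusion as a corollary.

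The cleaner route is to prove the structural fact first. Suppose $\mathcal{C} \subseteq \mathbb{F}_{q^m}^N$ is MSRD with $\dd_{SR}(\mathcal{C}) > 1$ and $r_1 = \cdots = r_g = N/g =: r$. By the equality condition in Corollary \ref{cor sum rank singleton}, $\mathcal{C}A$ is MDS for every block-diagonal $A = \diag(A_1,\ldots,A_g)$ with each $A_i \in \mathbb{F}_q^{r\times r}$ invertible; in particular $\mathcal{C}$ itself is MDS (take $A = \Id$). An MDS code of length $N$ over $\mathbb{F}_{q^m}$ with minimum Hamming distance $d = \dd_{SR}(\mathcal{C}) > 1$ has dimension $N - d + 1 < N$, so there is a nonzero codeword $\mathbf{c} \in \mathcal{C}$ of minimum Hamming weight $d$; equivalently its support has size exactly $d$, and being MSRD-extremal one can arrange (by choosing $A$ appropriately, or directly from the sum-rank structure) that this codeword is supported within a bounded number of blocks. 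The essential point: within any single block of length $r$, the entries of $\mathbf{c}^{(i)}$ live in $\mathbb{F}_{q^m}$, and the matrix $M_\mathcal{A}(\mathbf{c}^{(i)}) \in \mathbb{F}_q^{m \times r}$ has rank equal to the Hamming weight contribution after an optimal column reduction (this is exactly the mechanism in the proof of Theorem \ref{th sum-rank distance is min among hamming distances}); for this rank to be able to reach $r$ — which MSRD-ness with minimal $d$ forces on a block where the codeword is fully supported — we need $m \ge r$.

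The main obstacle will be pinning down precisely why $m \ge r$ is forced, rather than merely $m \ge 1$: one must produce, or show the existence of, a codeword whose matrix representation on some block genuinely requires $m$ rows to carry rank $r$. I expect the slickest argument avoids constructing such a codeword explicitly: instead, prove the numerical bound \eqref{eq alternative singleton bound} first as sketched above (pure arithmetic from Corollary \ref{cor sum rank singleton}, valid for all $m$), and then argue that if $m < N/g = r$ then the exponent $gm - d + 1$ in \eqref{eq alternative singleton bound}, compared against the trivial bound $|\mathcal{C}| \le q^{mN} = (q^{N/g})^{gm}$, is compatible only with $d \le 1$; more carefully, one compares \eqref{eq alternative singleton bound} with the requirement that a code with $d > 1$ be a proper subcode. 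Concretely: if $d = \dd_{SR}(\mathcal{C}) > 1$, then an MSRD code attains \eqref{eq sum-rank singleton bound}, hence $|\mathcal{C}| = q^{m(N-d+1)}$; but viewing $\mathcal{C}$ as a code over $\mathbb{F}_{q^m}$ of length $N$, its Hamming-metric Singleton bound is the same, and consistency with the sum-rank refinement forces $r \le m$ via the arithmetic identity above — whenever $r > m$ and $d > 1$, the bound \eqref{eq alternative singleton bound} would read $|\mathcal{C}| \le (q^r)^{gm - d + 1}$ with $gm - d + 1 < gm - d + 1$... here I must be careful that the contradiction is genuine, so the real work is checking that \eqref{eq alternative singleton bound} with $r > m$ is strictly violated by any MSRD code having $d > 1$, which reduces to the single inequality $r(d-1) > m(d-1)$ contradicting $|\mathcal{C}| = q^{m(N-d+1)}$ only after correctly tracking which bound is tighter. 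I would budget most of the write-up effort on getting this final comparison airtight.
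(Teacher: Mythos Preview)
Your approach has a genuine gap. You try to derive the bound \eqref{eq alternative singleton bound} from \eqref{eq sum-rank singleton bound} by comparing exponents: you correctly compute that $m(N-d+1) \le r(gm-d+1)$ is equivalent to $r(d-1) \le m(d-1)$, hence (for $d>1$) to $r \le m$. But the corollary asserts \eqref{eq alternative singleton bound} for \emph{every} code $\mathcal{C} \subseteq \mathbb{F}_{q^m}^N$, with no hypothesis on the relative sizes of $m$ and $r=N/g$. When $m<r$, the bound \eqref{eq alternative singleton bound} is \emph{strictly tighter} than \eqref{eq sum-rank singleton bound}, so it cannot be obtained from \eqref{eq sum-rank singleton bound} by arithmetic alone. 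Your fallback plan---first prove $m\ge r$ for MSRD codes, then deduce \eqref{eq alternative singleton bound}---is circular: you invoke \eqref{eq alternative singleton bound} (or an equivalent comparison) in the course of establishing $m\ge r$, and your sketch of a direct structural argument (``a codeword whose matrix representation on some block genuinely requires $m$ rows to carry rank $r$'') never becomes concrete.

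The missing idea is \emph{transposition}. Each codeword of $\mathcal{C}$ is a $g$-tuple of matrices in $\mathbb{F}_q^{m\times r}$ via $M_\mathcal{A}$; transposing each block yields a $g$-tuple in $\mathbb{F}_q^{r\times m}$, which one may read as a vector in $(\mathbb{F}_{q^r}^m)^g = \mathbb{F}_{q^r}^{gm}$. Call the resulting code $\mathcal{C}^T$. Since rank is invariant under transpose, $\dd_{SR}(\mathcal{C}^T)=\dd_{SR}(\mathcal{C})$, and applying Corollary~\ref{cor sum rank singleton} to $\mathcal{C}^T \subseteq \mathbb{F}_{q^r}^{gm}$ gives \eqref{eq alternative singleton bound} directly, for all $m$. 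The second assertion then falls out by the exponent comparison you already wrote down: if $m<r$ and $d>1$, then $(q^r)^{gm-d+1} < q^{m(N-d+1)}$, so no code can attain equality in \eqref{eq sum-rank singleton bound}.
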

\begin{proof}
Let $ \mathcal{C} \subseteq \mathbb{F}_{q^m}^N = (\mathbb{F}_{q^m}^{N/g})^g $ be an arbitrary code. Define $ \mathcal{C}^T $ as the code obtained by transposing the matrix representation (\ref{eq def matrix representation map}) of each block of $ N/g $ coordinates, for each codeword in $ \mathcal{C} $. We may consider that the code $ \mathcal{C}^T $ lies in $ (\mathbb{F}_{q^{N/g}}^m)^g = \mathbb{F}_{q^{N/g}}^{gm} $. Therefore, it follows from (\ref{eq sum-rank singleton bound}) that
$$ |\mathcal{C}| = |\mathcal{C}^T| \leq \left( q^{N/g} \right)^{gm - {\rm d}_{SR}(\mathcal{C}^T) + 1}. $$
Since $ d = {\rm d}_{SR}(\mathcal{C}^T) = {\rm d}_{SR}(\mathcal{C}) $, the bound (\ref{eq alternative singleton bound}) follows.

Finally, if $ m < N/g $ and $ d > 1 $, then we have that
\begin{equation*}
\begin{split}
\left( q^{N/g} \right)^{gm - d + 1} & = q^{mN - (d-1)N/g} \\
& < q^{mN - (d-1)m} = q^{m(N - d + 1)},
\end{split}
\end{equation*}
and the code $ \mathcal{C} $ cannot attain (\ref{eq sum-rank singleton bound}), hence cannot be MSRD.
\end{proof} 

As we will see in the next subsection, linearized Reed-Solomon codes \cite{linearizedRS} achieve this minimum extension degree.

As shown later in Theorem \ref{th sum-rank partitions}, a maximum rank distance (MRD) code in $ \mathbb{F}_{q^m}^N $, such as a Gabidulin code \cite{gabidulin, roth}, is also MSRD for \textit{any} sum-rank length partition $ N = r_1 + r_2 + \cdots + r_g $. However, by taking $ g=1 $ in the previous corollary, MRD codes can only exist if $ m \geq N $. For this reason, the use of linearized Reed-Solomon codes will imply a reduction in field sizes on Gabidulin-based LRCs \cite{rawat, kadhe, kim, calis}. See also Subsection \ref{subsec smalles field initial localities}.

\subsection{Linearized Reed-Solomon Codes} \label{subsec lin RS codes}

In this subsection, we review the construction of linearized Reed-Solomon codes from \cite{linearizedRS} (see also \cite[Sec. IV]{secure-multishot}). 

Assume that $ 1 \leq g \leq q-1 $ and $ 1 \leq r_i \leq m $, for $ i = 1,2, \ldots , g $. Therefore $ N \leq (q-1)m $. Let $ \sigma : \mathbb{F}_{q^m} \longrightarrow \mathbb{F}_{q^m} $ be given by $ \sigma(a) = a^q $, for all $ a \in \mathbb{F}_{q^m} $. We need to define linear operators as in \cite[Def.~20]{linearizedRS}. 

\begin{definition}[\textbf{Linear operators \cite{linearizedRS}}] \label{def linearized operators}
Fix $ a \in \mathbb{F}_{q^m} $, and define its $ i $th norm as $ N_i(a)
= \sigma^{i-1}(a) \cdots \sigma(a)a $ for $ i \in \mathbb{N} $. Now define the $ \mathbb{F}_q
$-linear operator $ \mathcal{D}_a^i : \mathbb{F}_{q^m} \longrightarrow
\mathbb{F}_{q^m} $ by
\begin{equation}
\mathcal{D}_a^i(b) = \sigma^i(b) N_i(a) ,
\label{eq definition linear operator}
\end{equation}
for all $ b \in \mathbb{F}_{q^m} $, and all $ i \in \mathbb{N} $. Define also $ \mathcal{D}_a = \mathcal{D}_a^1 $ and observe that $ \mathcal{D}_a^{i+1} = \mathcal{D}_a \circ \mathcal{D}_a^i $, for $ i \in \mathbb{N} $. 
\end{definition}

We say that $ a,b \in \mathbb{F}_{q^m} $ are \textit{conjugate} if there exists $ c \in \mathbb{F}_{q^m}^* $ such that $ b = \sigma(c)c^{-1} a $. See \cite{lam} and \cite[Eq.~(2.5)]{lam-leroy}. Take now a primitive element $ \gamma $ of $ \mathbb{F}_{q^m} $, and note that
$$ \gamma^j \neq \sigma(c)c^{-1} \gamma^i, $$ 
for all $ c \in \mathbb{F}_{q^m}^* $ and all $ 0 \leq i < j \leq q-2 $. Hence $ \gamma^0, $ $ \gamma^1, $ $ \gamma^2, $ $ \ldots , $ $ \gamma^{q-2} $ constitute the representatives of all non-trivial disjoint conjugacy classes. Finally, take a basis $ \mathcal{B} = \{ \beta_1, \beta_2, \ldots, \beta_m \} $ of $ \mathbb{F}_{q^m} $ over $ \mathbb{F}_q $, and define the matrices
\begin{equation*}
D_i = \left( \begin{array}{cccc}
\beta_1 & \beta_2 & \ldots & \beta_{r_i} \\
\mathcal{D}_{\gamma^{i-1}} \left( \beta_1 \right) & \mathcal{D}_{\gamma^{i-1}} \left( \beta_2 \right) & \ldots & \mathcal{D}_{\gamma^{i-1}} \left( \beta_{r_i} \right) \\
\mathcal{D}_{\gamma^{i-1}}^2 \left( \beta_1 \right) & \mathcal{D}_{\gamma^{i-1}}^2 \left( \beta_2 \right) & \ldots & \mathcal{D}_{\gamma^{i-1}}^2 \left( \beta_{r_i} \right) \\
\vdots & \vdots & \ddots & \vdots \\
\mathcal{D}_{\gamma^{i-1}}^{k-1} \left( \beta_1 \right) & \mathcal{D}_{\gamma^{i-1}}^{k-1} \left( \beta_2 \right) & \ldots & \mathcal{D}_{\gamma^{i-1}}^{k-1} \left( \beta_{r_i} \right) \\
\end{array} \right),
\end{equation*}
for $ i = 1,2, \ldots, g $. The following definition is a particular
case of \cite[Def.~31]{linearizedRS}.

\begin{definition} [\textbf{Linearized Reed-Solomon codes \cite{linearizedRS}}] \label{def lin RS codes}
We define the linearized Reed-Solomon code of dimension $ k $, primitive element $ \gamma $ and basis $ \mathcal{B} $, as the linear code $
\mathcal{C}^\sigma_{L,k}(\mathcal{B},\gamma) \subseteq \mathbb{F}_{q^m}^N $ with generator matrix given by 
\begin{equation}
D = (D_1 | D_2 | \ldots | D_g) \in \mathbb{F}_{q^m}^{k \times N}.
\label{eq gen matrix lin RS codes}
\end{equation}
\end{definition}

The following result is \cite[Th. 4]{linearizedRS}.

\begin{proposition}[\textbf{\cite{linearizedRS}}] \label{prop linRS are MSRD}
The linearized Reed-Solomon code $ \mathcal{C}^\sigma_{L,k}(\mathcal{B},\gamma) \subseteq \mathbb{F}_{q^m}^N $ in Definition \ref{def lin RS codes} is a $ k $-dimensional linear MSRD code for the sum-rank length partition $ N = r_1 + r_2 + \cdots + r_g $. That is, $ \dd_{SR}(\mathcal{C}^\sigma_{L,k}(\mathcal{B},\gamma)) = N - k + 1 $.
\end{proposition}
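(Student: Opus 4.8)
The plan is to use the ``only if'' part of Corollary~\ref{cor sum rank singleton} to convert the MSRD claim into an MDS claim that is stable under block‑diagonal recoding, rewrite that as a bound on the number of ``operator roots'' of a skew polynomial, and prove the bound by induction on the degree, crucially exploiting that $ \gamma^0, \gamma^1, \ldots, \gamma^{g-1} $ lie in pairwise distinct $ \sigma $-conjugacy classes. Concretely, by Corollary~\ref{cor sum rank singleton} it suffices to show that $ \mathcal{C}^\sigma_{L,k}(\mathcal{B},\gamma)\,A $ is an $ [N,k] $ MDS code for every $ A = \diag(A_1,\ldots,A_g) $ with each $ A_i \in \mathbb{F}_q^{r_i\times r_i} $ invertible; once this is known, $ |\mathcal{C}^\sigma_{L,k}(\mathcal{B},\gamma)| = q^{mk} = q^{m(N-\dd_{SR}+1)} $ forces $ \dd_{SR} = N-k+1 $. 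The generator matrix of $ \mathcal{C}^\sigma_{L,k}(\mathcal{B},\gamma)\,A $ is $ (D_1A_1\mid\cdots\mid D_gA_g) $, and since every $ \mathcal{D}_i^j $ is $ \mathbb{F}_q $-linear and the entries of $ A_i $ lie in $ \mathbb{F}_q $, the $ \ell $-th column of $ D_iA_i $ is $ (\beta_\ell^{(i)},\mathcal{D}_i(\beta_\ell^{(i)}),\ldots,\mathcal{D}_i^{k-1}(\beta_\ell^{(i)}))^\top $ with $ \beta_\ell^{(i)}=\sum_{\ell'}(A_i)_{\ell',\ell}\beta_{\ell'} $; invertibility of $ A_i $ keeps $ \beta_1^{(i)},\ldots,\beta_{r_i}^{(i)} $ $ \mathbb{F}_q $-linearly independent. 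So I only need: for \emph{arbitrary} $ \mathbb{F}_q $-linearly independent tuples $ \beta_1^{(i)},\ldots,\beta_{r_i}^{(i)}\in\mathbb{F}_{q^m} $, the matrix with these Moore-type columns generates an $ [N,k] $ MDS code.

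Next I would reformulate this in the Ore ring $ R=\mathbb{F}_{q^m}[x;\sigma] $. Attach to $ F=\sum_{j=0}^{k-1}f_jx^j\in R $ the operator $ F(\mathcal{D}_a)=\sum_jf_j\mathcal{D}_a^j\in\mathrm{End}_{\mathbb{F}_q}(\mathbb{F}_{q^m}) $; because $ \mathcal{D}_a^{i+1}=\mathcal{D}_a\circ\mathcal{D}_a^i $ and $ \mathcal{D}_a\circ(\lambda\cdot\Id)=\sigma(\lambda)\mathcal{D}_a $, the assignment $ F\mapsto F(\mathcal{D}_a) $ is a ring homomorphism (equivalently, $ \mathbb{F}_{q^m} $ is a left $ R $-module with $ x $ acting as $ \mathcal{D}_a $). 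With $ \mathcal{D}_i=\mathcal{D}_{\gamma^{i-1}} $, the $ (i,\ell) $ entry of the codeword $ \mathbf{f}\,(D_1A_1\mid\cdots\mid D_gA_g) $ is exactly $ F(\mathcal{D}_i)(\beta_\ell^{(i)}) $; hence for $ \mathbf{f}\neq 0 $ (so $ F\neq 0 $, $ \deg F\le k-1 $) the number of zero coordinates of this codeword is $ \sum_{i=1}^g\#\{\ell:\beta_\ell^{(i)}\in\ker F(\mathcal{D}_i)\}\le\sum_{i=1}^g\dim_{\mathbb{F}_q}\ker F(\mathcal{D}_i) $, using $ \mathbb{F}_q $-linear independence within each block. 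So the whole proposition reduces to the \emph{key bound}: if $ F\in R $ is nonzero and $ a_1,\ldots,a_g $ lie in pairwise distinct $ \sigma $-conjugacy classes, then $ \sum_{i=1}^g\dim_{\mathbb{F}_q}\ker F(\mathcal{D}_{a_i})\le\deg F $. Granting it for $ a_i=\gamma^{i-1} $, every nonzero codeword of $ \mathcal{C}^\sigma_{L,k}(\mathcal{B},\gamma)A $ has at most $ k-1 $ zeros, hence (as $ N\ge k $) is nonzero of Hamming weight $ \ge N-k+1 $, which with the Singleton bound makes the code $ [N,k] $ MDS, completing the reduction.

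Finally, I would prove the key bound by induction on $ d=\deg F $. For $ d=0 $ each $ F(\mathcal{D}_{a_i}) $ is a nonzero scalar operator, so all kernels are trivial. For $ d\ge 1 $, assuming some $ \ker F(\mathcal{D}_{a_{i_0}})\neq 0 $, pick $ 0\neq b $ in it and set $ c=\sigma(b)b^{-1}a_{i_0} $, a conjugate of $ a_{i_0} $; a direct check gives $ (\mathcal{D}_{a_{i_0}}-c\,\Id)(b)=0 $, i.e.\ $ (x-c)\cdot b=0 $, and since $ R(x-c) $ is a maximal left ideal contained in the (proper) annihilator of $ b $ it equals that annihilator, so $ F\in R(x-c) $, say $ F=F_1(x-c) $ with $ \deg F_1=d-1 $. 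Applying the ring homomorphism, $ F(\mathcal{D}_{a_i})=F_1(\mathcal{D}_{a_i})\circ(\mathcal{D}_{a_i}-c\,\Id) $, whence $ \dim_{\mathbb{F}_q}\ker F(\mathcal{D}_{a_i})\le\dim_{\mathbb{F}_q}\ker F_1(\mathcal{D}_{a_i})+\dim_{\mathbb{F}_q}\ker(\mathcal{D}_{a_i}-c\,\Id) $. Now $ \ker(\mathcal{D}_{a_i}-c\,\Id)=\{v:\sigma(v)a_i=cv\} $ is $ \{0\} $ unless $ c $ is conjugate to $ a_i $; since $ c $ is conjugate to $ a_{i_0} $ and the classes of $ a_1,\ldots,a_g $ are distinct, this kernel is $ \{0\} $ for $ i\neq i_0 $ and equals $ \mathbb{F}_q b $, of dimension $ 1 $, for $ i=i_0 $. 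Summing over $ i $ and invoking the inductive hypothesis for $ F_1 $ gives $ \sum_i\dim_{\mathbb{F}_q}\ker F(\mathcal{D}_{a_i})\le(d-1)+1=d $. I expect the main obstacle to be precisely this key bound, and inside it the vanishing of the ``collision'' kernels $ \ker(\mathcal{D}_{a_i}-c\,\Id) $ for $ i\neq i_0 $: this is exactly where distinctness of the conjugacy classes $ \gamma^0,\ldots,\gamma^{q-2} $ (hence the hypothesis $ g\le q-1 $) is indispensable, ensuring that operator roots coming from different blocks add up without ever exceeding $ \deg F $. The remaining pieces — the passage through Corollary~\ref{cor sum rank singleton}, the $ \mathbb{F}_q $-linear rewriting of the columns, and the module/ring-homomorphism bookkeeping — are routine once this point is in place.
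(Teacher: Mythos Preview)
Your argument is correct and self-contained. Note, however, that the paper does not actually prove this proposition: it simply cites \cite[Th.~4]{linearizedRS} and moves on. So there is no ``paper's own proof'' to compare against here.

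That said, your route is essentially the one taken in \cite{linearizedRS} (and going back to the Lam--Leroy theory of Vandermonde/Wronskian matrices over skew polynomial rings): interpret codewords as evaluations $F(\mathcal{D}_{a_i})(\beta_\ell^{(i)})$ of a skew polynomial $F\in\mathbb{F}_{q^m}[x;\sigma]$, and bound the total $\mathbb{F}_q$-dimension of the kernels $\ker F(\mathcal{D}_{a_i})$ by $\deg F$ via right division by a linear factor $x-c$. The induction step is exactly the right place to invoke pairwise non-conjugacy of $\gamma^0,\ldots,\gamma^{g-1}$, and your computation that $\ker(\mathcal{D}_{a_i}-c\,\Id)$ is trivial for $i\neq i_0$ and one-dimensional for $i=i_0$ is correct. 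Two minor remarks: (i) the maximal-left-ideal argument for $F\in R(x-c)$ is fine but could be replaced by a one-line right Euclidean division $F=F_1(x-c)+r$ with $r\in\mathbb{F}_{q^m}$, then evaluate at $b$ to get $r=0$; (ii) you implicitly establish $\dim=k$ by showing the encoding map $\mathbf{f}\mapsto\mathbf{f}DA$ is injective (nonzero $\mathbf{f}$ yields weight $\ge N-k+1\ge 1$), which is worth stating explicitly before invoking Corollary~\ref{cor sum rank singleton}.
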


Observe that $ m \geq r = \max_i r_i $. Therefore linearized Reed-Solomon codes achieve the minimum extension degree over $ \mathbb{F}_q $ for equal localities by Corollary \ref{cor minimum m for MSRD}. See also Proposition \ref{prop minimum m for universal LRC}.

As observed in \cite[Sec. 3]{linearizedRS} and \cite[Subsec. IV-A]{secure-multishot}, linearized Reed-Solomon codes recover Gabidulin codes \cite{gabidulin, roth} when $ g=1 $, and they recover Reed-Solomon codes \cite{reed-solomon} when $ m = r_1 = r_2 = \ldots = r_g = 1 $. These are the cases when the sum-rank metric particularizes to the rank metric and Hamming metric, respectively. The second choice of parameters explains why setting $ m = r_1 = r_2 = \ldots = r_g = 1 $ in this paper recovers Reed-Solomon codes with local replication (one-dimensional local codes).

\section{Main Construction of MR-LRCs} \label{sec main construction}

In this section, we briefly recall the definitions of locally repairable codes \cite{gopalan, kamath, kadhe, zeh-multiple, chen-hao, kim} and maximal recoverability \cite{blaum-RAID, gopalan-MR}, and give our main construction. Proofs and further properties of our construction are left to the following sections.

Let $ \mathbb{F} $ be a finite field. In this work, we will consider disjoint local groups, which is usual in the maximal recoverability or PMDS literature \cite{blaum-RAID, gopalan-MR}.

\begin{definition}[\textbf{Locally repairable codes}] \label{def LRC}
Fix integers $ g, r_i, \delta_i \geq 1 $, for $ i = 1,2, \ldots, g $. We say that a code $ \mathcal{C} \subseteq \mathbb{F}^n $ is an $ (n,k) $ locally repairable code (LRC) with $ (\Gamma_i, r_i, \delta_i)_{i=1}^g $-localities, or $ (r_i, \delta_i)_{i=1}^g $-localities for short, if $ k = \log_{|\mathbb{F}|}|\mathcal{C}| $, $ [n] = \Gamma_1 \cup \Gamma_2 \cup \ldots \cup \Gamma_g $, $ \Gamma_i \cap \Gamma_j = \varnothing $ if $ i \neq j $ (that is, the sets $ \Gamma_1, \Gamma_2, \ldots, \Gamma_g $ form a partition of $ [n] $), and 
\begin{enumerate}
\item
$ | \Gamma_i | \leq r_i + \delta_i - 1 $,
\item
$ {\rm d}_H(\mathcal{C}_{\Gamma_i}) \geq \delta_i $,
\end{enumerate}
for $ i = 1,2, \ldots, g $. The set $ \Gamma_i $ is called the $ i $th local group. In many occasions, we only use the term \textit{locality} for the number $ r_i $, whereas $ \delta_i $ is called the \textit{local distance}.
\end{definition}

Figs. \ref{fig LRCs equal} and \ref{fig LRCs unequal} below depict systematic LRCs.

\begin{figure} [!h]
\hspace*{-1.8em}
\begin{center}
\begin{tabular}{c@{\extracolsep{1cm}}c}
\begin{tikzpicture}[
square/.style = {draw, rectangle, 
                 minimum size=\m, outer sep=0, inner sep=0, font=\small,
                 },
                        ]
\def\m{18pt}
\def\w{7}
\def\h{5}
\def\loc{3}
    \pgfmathsetmacro\uw{int(\w/2)}
    \pgfmathsetmacro\uh{int(\h/2)}

\def\i{10}
  \foreach \x in {1,...,\w}
    \foreach \y in {1,...,\h}
       {    
           \ifnum\y>\loc
               \node [square, fill=gray!50]  (\x,\y) at (\x*\m + \i*\m,-\y*\m) {$ c^{(\x)}_{\y} $};
           \else
               \node [square, fill=white]  (\x,\y) at (\x*\m + \i*\m,-\y*\m) {$ x^{(\x)}_{\y} $};
           \fi
       }
       
  \foreach \y in {1,...,\loc}
       {
           \node [square, fill=black!70]  (\w,\y) at (\w*\m + \i*\m,-\y*\m) {{\color{white}$ c^{(\w)}_{\y} $}};
       }

   \draw [decorate, thick,decoration={brace,amplitude=5pt}]
   (10.2*\m,-3.5*\m) -- (10.2*\m,-0.5*\m) node[midway,xshift=-3.3em]{\begin{tabular}{c}Information\\ symbols\\ (white) \end{tabular}}; 

   \draw [decorate, thick,decoration={brace,amplitude=5pt}]
   (10.2*\m,-5.5*\m) -- (10.2*\m,-3.5*\m) node[midway,xshift=-3.3em]{\begin{tabular}{c}Local\\ parities \\ (light grey) \end{tabular}};    

   \draw [decorate, thick,decoration={brace,amplitude=5pt}]
   (17.8*\m,-0.5*\m) -- (17.8*\m,-3.5*\m) node[midway,xshift=3.3em]{\begin{tabular}{c}Global\\ parities\\ (dark grey) \end{tabular}}; 

   \node at (16.5*\m,-6*\m) {\rotatebox[origin=c]{90}{$\Rsh$} Each column forms a local codeword};

\end{tikzpicture}

\end{tabular}
\end{center}

\caption{Illustration of a systematic LRC with equal localities ($ r = 3 $) and local distances ($ \delta = 3 $), which allows to represent symbols in a rectangular array. Each box represents a node storing a symbol in $ \mathbb{F} $. The $ i $th column forms the symbols stored in the $ i $th local group $ \Gamma_i $.  The $ x $'s denote information symbols and the $ c $'s denote parities. Each local group has $ 2 $ local parities, hence any $ \delta - 1 = 2 $ erasures inside a column can be corrected from the remaining $ r = 3 $ symbols in that column. The dimension is $ k = 18 $. The number of global parities is $ h = gr - k = 3 $. These parities allow us to correct erasure patterns where more than two erasures occur in a single column (see Fig. \ref{fig erasure pattern for MR}).   }
\label{fig LRCs equal}

\end{figure}
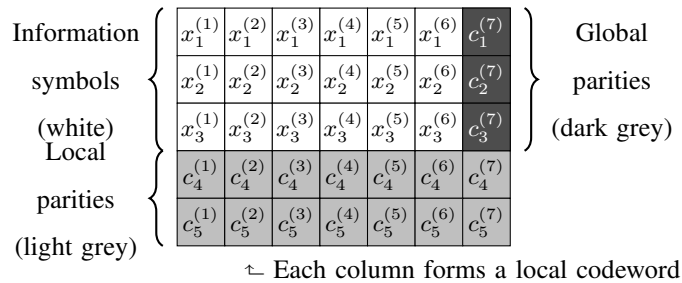

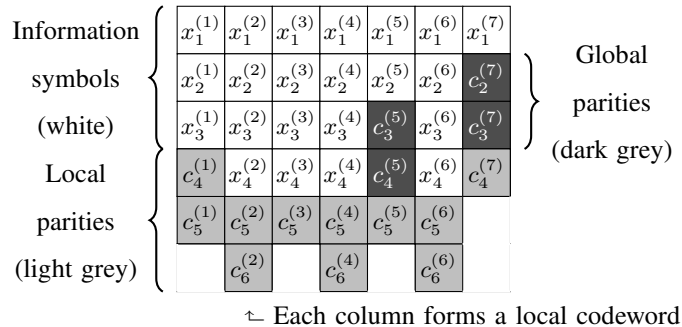
\begin{figure} [!h]
\hspace*{-1.8em}
\begin{center}
\begin{tabular}{c@{\extracolsep{1cm}}c}
\begin{tikzpicture}[
square/.style = {draw, rectangle, 
                 minimum size=\m, outer sep=0, inner sep=0, font=\small,
                 },
                        ]
\def\m{18pt}
\def\w{7}
\def\h{6}
\def\loc{4}
    \pgfmathsetmacro\uw{int(\w/2)}
    \pgfmathsetmacro\uh{int(\h/2)}

\def\i{10}
  \foreach \x in {1,...,\w}
    \foreach \y in {1,...,\h}
       {    
           \ifnum\y>\loc
               \node [square, fill=gray!50]  (\x,\y) at (\x*\m + \i*\m,-\y*\m) {$ c^{(\x)}_{\y} $};
           \else
               \node [square, fill=white]  (\x,\y) at (\x*\m + \i*\m,-\y*\m) {$ x^{(\x)}_{\y} $};
           \fi
       }
       
  \foreach \y in {2,...,\loc}
       {
           \node [square, fill=black!70]  (\w,\y) at (\w*\m + \i*\m,-\y*\m) {{\color{white}$ c^{(\w)}_{\y} $}};
       }
       
  \foreach \y in {3,...,\loc}
       {
           \node [square, fill=black!70]  (5,\y) at (5*\m + \i*\m,-\y*\m) {{\color{white}$ c^{(5)}_{\y} $}};
       }

   \draw [decorate, thick,decoration={brace,amplitude=5pt}]
   (10.2*\m,-3.5*\m) -- (10.2*\m,-0.5*\m) node[midway,xshift=-3.3em]{\begin{tabular}{c}Information\\ symbols\\ (white) \end{tabular}}; 

   \draw [decorate, thick,decoration={brace,amplitude=5pt}]
   (10.2*\m,-6.5*\m) -- (10.2*\m,-3.5*\m) node[midway,xshift=-3.3em]{\begin{tabular}{c}Local\\ parities \\ (light grey) \end{tabular}};    

   \draw [decorate, thick,decoration={brace,amplitude=5pt}]
   (17.8*\m,-1.5*\m) -- (17.8*\m,-3.5*\m) node[midway,xshift=3.3em]{\begin{tabular}{c}Global\\ parities\\ (dark grey) \end{tabular}}; 

   \node at (16.5*\m,-7*\m) {\rotatebox[origin=c]{90}{$\Rsh$} Each column forms a local codeword};

\node [square, fill=gray!50]  at (1*\m + \i*\m,-4*\m) {$ c^{(1)}_4 $};

\node [square, fill=white!100]  at (1*\m + \i*\m,-6*\m) {$  $};
\pic [square, fill=white!100] at (1*\m + \i*\m,-6*\m) {lower={white}{$ $}};
\pic [square, fill=white!100] at (1*\m + \i*\m,-6*\m) {left={white}{$ $}};

\node [square, fill=white!100]  at (3*\m + \i*\m,-6*\m) {$  $};
\pic [square, fill=white!100] at (3*\m + \i*\m,-6*\m) {lower={white}{$ $}};

\node [square, fill=white!100]  at (5*\m + \i*\m,-6*\m) {$  $};
\pic [square, fill=white!100] at (5*\m + \i*\m,-6*\m) {lower={white}{$ $}};

\node [square, fill=gray!50]  at (7*\m + \i*\m,-4*\m) {$ c^{(7)}_4 $};

\node [square, fill=white!100]  at (7*\m + \i*\m,-5*\m) {$  $};
\pic [square, fill=white!100] at (7*\m + \i*\m,-5*\m) {lower={white}{$ $}};
\pic [square, fill=white!100] at (7*\m + \i*\m,-5*\m) {right={white}{$ $}};

\node [square, fill=white!100]  at (7*\m + \i*\m,-6*\m) {$  $};
\pic [square, fill=white!100] at (7*\m + \i*\m,-6*\m) {upper={white}{$ $}};
\pic [square, fill=white!100] at (7*\m + \i*\m,-6*\m) {lower={white}{$ $}};
\pic [square, fill=white!100] at (7*\m + \i*\m,-6*\m) {right={white}{$ $}};

\end{tikzpicture}

\end{tabular}
\end{center}

\caption{Illustration of a systematic LRC with unequal localities and local distances, with the same notation as in Fig. \ref{fig LRCs equal}. For instance, the first local group has locality $ r_1 = 3 $ and local distance $ \delta_1 = 3 $, whereas for the third local group, $ r_3 = 4 $ and $ \delta_3 = 2 $. Note that global parities may be arbitrarily distributed among the local groups.  }
\label{fig LRCs unequal}

\end{figure}

Every code is an LRC for any partition of $ [n] $ if $ \delta_i = 1 $ for $ i = 1,2, \ldots, g $, which includes locality but not repair. Every code with distance $ d $ is also an LRC for $ \Gamma_1 = [n] $, $ r_1 = n - d + 1 $ and $ \delta_1 = d $, which includes repair but not locality. 

An $ (n,k) $ MDS code can only have these types of locality, and localities where a local group with distance $ \delta_i > 1 $ must satisfy $ r_i \geq k $. To see this, just observe that any other type of localities imply that there exists some set of $ k $ symbols with some redundancy, thus cannot be an information set. For this reason, MDS codes are not good candidates as LRCs.

Finally, observe that $ r_1 = r_2 = \ldots = r_g = 1 $ means $ \delta_i $-replication of the $ i $th symbol.

We now extend the concept of maximal recoverability from \cite[Def. 2.1]{blaum-RAID} and \cite[Def. 6]{gopalan-MR} to unequal localities and local distances.

\begin{definition} [\textbf{Maximal recoverability}] \label{def MR}
We say that an LRC $ \mathcal{C} \subseteq \mathbb{F}^n $ with $ (\Gamma_i, r_i, \delta_i)_{i=1}^g $-localities is maximally recoverable (MR) if, for any $ \Delta_i \subseteq \Gamma_i $ with $ | \Gamma_i \setminus \Delta_i | = \delta_i - 1 $, for $ i = 1,2, \ldots,g $, the restricted code $ \mathcal{C}_\Delta \subseteq \mathbb{F}^{|\Delta|} $ is MDS, where $ \Delta = \bigcup_{i=1}^g \Delta_i $.
\end{definition} 

An example of an erasure pattern correctable by an MR-LRC is depicted in Fig. \ref{fig erasure pattern for MR} below.

\begin{figure} [!h]
\hspace*{-1.8em}
\begin{center}
\begin{tabular}{c@{\extracolsep{1cm}}c}
\begin{tikzpicture}[
square/.style = {draw, rectangle, 
                 minimum size=\m, outer sep=0, inner sep=0, font=\small,
                 },
                        ]
\def\m{18pt}
\def\w{7}
\def\h{6}
\def\loc{4}
    \pgfmathsetmacro\uw{int(\w/2)}
    \pgfmathsetmacro\uh{int(\h/2)}

\def\i{10}
  \foreach \x in {1,...,\w}
    \foreach \y in {1,...,\h}
       {    
           \ifnum\y>\loc
               \node [square, fill=gray!50]  (\x,\y) at (\x*\m + \i*\m,-\y*\m) {$ c^{(\x)}_{\y} $};
           \else
               \node [square, fill=white]  (\x,\y) at (\x*\m + \i*\m,-\y*\m) {$ x^{(\x)}_{\y} $};
           \fi
       }
       
  \foreach \y in {2,...,\loc}
       {
           \node [square, fill=black!70]  (\w,\y) at (\w*\m + \i*\m,-\y*\m) {{\color{white}$ c^{(\w)}_{\y} $}};
       }
       
  \foreach \y in {3,...,\loc}
       {
           \node [square, fill=black!70]  (5,\y) at (5*\m + \i*\m,-\y*\m) {{\color{white}$ c^{(5)}_{\y} $}};
       }

   \draw [decorate, thick,decoration={brace,amplitude=5pt}]
   (10.2*\m,-3.5*\m) -- (10.2*\m,-0.5*\m) node[midway,xshift=-3.3em]{\begin{tabular}{c}Information\\ symbols\\ (white) \end{tabular}}; 

   \draw [decorate, thick,decoration={brace,amplitude=5pt}]
   (10.2*\m,-6.5*\m) -- (10.2*\m,-3.5*\m) node[midway,xshift=-3.3em]{\begin{tabular}{c}Local\\ parities \\ (light grey) \end{tabular}};    

   \draw [decorate, thick,decoration={brace,amplitude=5pt}]
   (17.8*\m,-1.5*\m) -- (17.8*\m,-3.5*\m) node[midway,xshift=3.3em]{\begin{tabular}{c}Global\\ parities\\ (dark grey) \end{tabular}};

\node [square, fill=gray!50]  at (1*\m + \i*\m,-4*\m) {$ c^{(1)}_4 $};

\node [square, fill=white!100]  at (1*\m + \i*\m,-6*\m) {$  $};
\pic [square, fill=white!100] at (1*\m + \i*\m,-6*\m) {lower={white}{$ $}};
\pic [square, fill=white!100] at (1*\m + \i*\m,-6*\m) {left={white}{$ $}};

\node [square, fill=white!100]  at (3*\m + \i*\m,-6*\m) {$  $};
\pic [square, fill=white!100] at (3*\m + \i*\m,-6*\m) {lower={white}{$ $}};

\node [square, fill=white!100]  at (5*\m + \i*\m,-6*\m) {$  $};
\pic [square, fill=white!100] at (5*\m + \i*\m,-6*\m) {lower={white}{$ $}};

\node [square, fill=gray!50]  at (7*\m + \i*\m,-4*\m) {$ c^{(7)}_4 $};

\node [square, fill=white!100]  at (7*\m + \i*\m,-5*\m) {$  $};
\pic [square, fill=white!100] at (7*\m + \i*\m,-5*\m) {lower={white}{$ $}};
\pic [square, fill=white!100] at (7*\m + \i*\m,-5*\m) {right={white}{$ $}};

\node [square, fill=white!100]  at (7*\m + \i*\m,-6*\m) {$  $};
\pic [square, fill=white!100] at (7*\m + \i*\m,-6*\m) {upper={white}{$ $}};
\pic [square, fill=white!100] at (7*\m + \i*\m,-6*\m) {lower={white}{$ $}};
\pic [square, fill=white!100] at (7*\m + \i*\m,-6*\m) {right={white}{$ $}};

\node [square, fill=white]  at (1*\m + \i*\m,-3*\m) {$ \spadesuit $}; 
\node [square, fill=gray!50]  at (1*\m + \i*\m,-5*\m) {$ \spadesuit $};  
 
\node [square, fill=white]  at (2*\m + \i*\m,-2*\m) {$ \spadesuit $}; 
\node [square, fill=gray!50]  at (2*\m + \i*\m,-5*\m) {$ \spadesuit $}; 

\node [square, fill=white]  at (3*\m + \i*\m,-4*\m) {$ \spadesuit $}; 

\node [square, fill=white]  at (4*\m + \i*\m,-1*\m) {$ \spadesuit $}; 
\node [square, fill=white]  at (4*\m + \i*\m,-4*\m) {$ \spadesuit $}; 

\node [square, fill=white]  at (5*\m + \i*\m,-2*\m) {$ \spadesuit $}; 

\node [square, fill=gray!50]  at (6*\m + \i*\m,-5*\m) {$ \spadesuit $}; 
\node [square, fill=gray!50]  at (6*\m + \i*\m,-6*\m) {$ \spadesuit $}; 

\node [square, fill=black!70]  at (7*\m + \i*\m,-3*\m) {{\color{white}$ \spadesuit $}}; 

\node [square, fill=white]  at (2*\m + \i*\m,-3*\m) {$ \clubsuit $}; 
\node [square, fill=white]  at (3*\m + \i*\m,-1*\m) {$ \clubsuit $}; 
\node [square, fill=gray!50]  at (4*\m + \i*\m,-5*\m) {$ \clubsuit $}; 
\node [square, fill=white]  at (7*\m + \i*\m,-1*\m) {$ \clubsuit $};

\end{tikzpicture}

\end{tabular}
\end{center}

\caption{Illustration of an erasure pattern correctable by an MR-LRC with parameters as in Fig. \ref{fig LRCs unequal}. The erasure patterns are those consisting of $ \delta_i-1 $ erasures in the $ i $th local group (depicted by $ \spadesuit $), plus any $ h = N -k = 4 $ extra erasures placed anywhere (depicted by $ \clubsuit $), where $ N = r_1 + r_2 + \cdots + r_g $. This is because, after removing the $ \delta_i-1 $ erasures from the $ i $th local group, the restricted code is an $ (N,k) $ MDS code, and hence must be able to correct any $ h $ erasures in the remaining $ N $ nodes.  }
\label{fig erasure pattern for MR}

\end{figure}
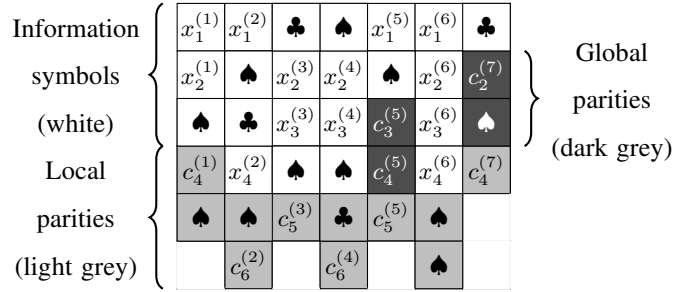

We next introduce our construction of MR-LRCs based on linearized Reed-Solomon codes (Definition \ref{def lin RS codes}). 

\begin{construction} \label{construction 1}
Fix the initial number of local groups $ g $ and initial localities $ r_1 $, $ r_2 $, \ldots, $ r_g $. Now choose any \textit{base field} size $ q $ and any extension degree $ m $ satisfying $ q > g $ and $ m \geq \max_i r_i $, and define the \textit{global field} $ \mathbb{F} = \mathbb{F}_{q^m} $. Next choose:
\begin{enumerate}
\item
\textit{Outer code}: Any $ (N,k) $ code $ \mathcal{C}_{out} \subseteq \mathbb{F}_{q^m}^N $ that is MSRD over $ \mathbb{F}_q $ for the sum-rank length $ N = \sum_{i=1}^g r_i $, such as a linearized Reed-Solomon code (Definition \ref{def lin RS codes}).
\item
\textit{Local codes}: Any $ (r_i+\delta_i-1, r_i) $ MDS code $ \mathcal{C}_{loc}^{(i)} \subseteq \mathbb{F}_{q_i}^{r_i + \delta_i - 1} $, linear over \textit{local fields} $ \mathbb{F}_{q_i} $, where $ q $ is a power of $ q_i $, for $ i = 1,2, \ldots, g $.
\end{enumerate}
The \textit{global code} is then given as follows.
\begin{enumerate}
\item[3)]
\textit{Global code}: Let $ \mathcal{C}_{glob} \subseteq \mathbb{F}_{q^m}^n $, with $ n = \sum_{i=1}^g (r_i + \delta_i - 1) = N + \sum_{i=1}^g (\delta_i - 1) $, be given by
$$ \mathcal{C}_{glob} = \mathcal{C}_{out} \diag(A_1, A_2, \ldots, A_g), $$
where $ A_i \in \mathbb{F}_{q_i}^{r_i \times (r_i + \delta_i - 1)} $ is any generator matrix of $ \mathcal{C}_{loc}^{(i)} $, for $ i = 1,2, \ldots, g $.
\end{enumerate}
The encoding procedure for $ \mathcal{C}_{glob} $ using first the outer code and then the local codes is depicted in Fig. \ref{fig encoding for construction I}. 
\end{construction}

\begin{figure} [!h]
\begin{center}
\begin{tabular}{c@{\extracolsep{1cm}}c}
\begin{tikzpicture}[
square/.style = {draw, rectangle, 
                 minimum size=\m, outer sep=8, inner sep=8, font=\normalsize,
                 },
                        ]
\def\m{25pt}

\node at (0,1.5) {$ \mathbf{x} \in \mathbb{F}_{q^m}^k $};

\draw[|-, ultra thick] (0,1) -- (0,0.5);

\node [square, thick] at (0,0) {Outer code $ \mathcal{C}_{out} \subseteq \mathbb{F}_{q^m}^N $};

\draw[-stealth, ultra thick] (0,-0.5) -- (0,-1);

\node [] at (0,-1.5) {$ \mathbf{c}_{out} = (\mathbf{c}^{(1)}, \mathbf{c}^{(2)}, \ldots, \mathbf{c}^{(g)}) \in \mathbb{F}_{q^m}^N $};

\draw[|-, ultra thick] (-1.15,-2) -- (-1.25,-2.5);

\draw[-stealth, ultra thick] (-1.45,-3.5) -- (-1.55,-4);

\node [square, thick, fill=white] at (-2,-3) {$ \mathcal{C}_{loc}^{(1)} \subseteq \mathbb{F}_{q_i}^{r_1 + \delta_1 - 1} $};

\node [] at (0,-3) {$ \ldots $};

\draw[|-, ultra thick] (1.15,-2) -- (1.25,-2.5);

\draw[-stealth, ultra thick] (1.45,-3.5) -- (1.55,-4);

\node [square, thick, fill = white] at (2,-3) {$ \mathcal{C}_{loc}^{(g)} \subseteq \mathbb{F}_{q_i}^{r_g + \delta_g - 1} $};

\node [] at (0,-4.8) {$ \mathbf{c}_{glob} = ( \underbrace{\mathbf{c}^{(1)} A_1}_\textrm{Local group 1} , \underbrace{\mathbf{c}^{(2)} A_2}_\textrm{Local group 2}, \ldots, \underbrace{\mathbf{c}^{(g)} A_g}_\textrm{Local group g}) \in \mathbb{F}_{q^m}^n $};

\end{tikzpicture}

\end{tabular}
\end{center}

\caption{Illustration of the encoding procedure for $ \mathcal{C}_{glob} $ in Construction \ref{construction 1}. Let $ \mathbf{x} \in \mathbb{F}_{q^m}^k $ be $ k $ symbols over $ \mathbb{F}_{q^m} $ of the file. We first encode them with the outer code $ \mathcal{C}_{out} $ to form $ \mathbf{c}_{out} \in \mathbb{F}_{q^m}^N $. We then partition the outer codeword as $ \mathbf{c}_{out} = (\mathbf{c}^{(1)}, \mathbf{c}^{(2)}, \ldots, \mathbf{c}^{(g)}) $, where $ \mathbf{c}^{(i)} \in \mathbb{F}_{q^m}^{r_i} $. Finally, we encode each $ \mathbf{c}^{(i)} $ using a generator matrix $ A_i \in \mathbb{F}_{q_i}^{r_i \times (r_i + \delta_i - 1)} $ of the $ i $th local code, and store $ \mathbf{c}^{(i)} A_i \in \mathbb{F}_{q^m}^{r_i + \delta_i - 1} $ in the $ i $th local group of nodes.  }
\label{fig encoding for construction I}

\end{figure}
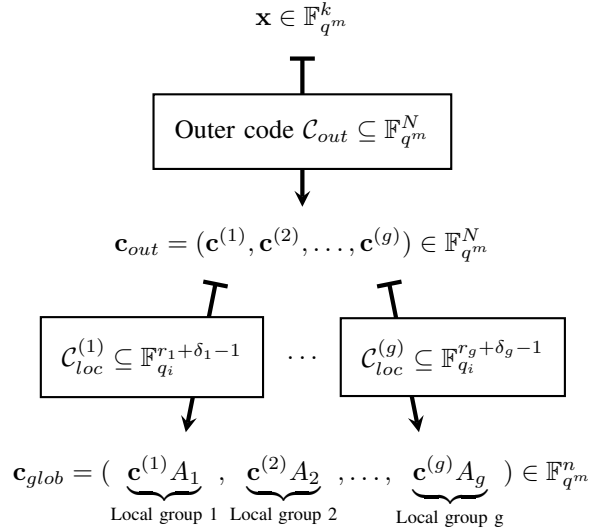

\begin{remark}
Typically, the local field sizes must satisfy $ q_i \geq r_i + \delta_i - 1 $ so that we may choose MDS local codes (for instance, Reed-Solomon codes \cite{reed-solomon}). However, if $ \delta_i = 2 $, we may always choose $ q_i = 2 $ if $ 2 \mid q $, and local repair in the $ i $th local group can be performed by XORing. 
\end{remark}

Observe that the difference with \cite{rawat, kadhe, kim, calis} is that Gabidulin codes do not exist for the parameters described in Construction \ref{construction 1} (they require $ m \geq N = \sum_{i=1}^g r_i $), whereas we may use linearized Reed-Solomon codes (Definition \ref{def lin RS codes}) for such parameters, which are still MSRD. See Section \ref{sec comparisons} for detailed comparisons of global field sizes.

The following main result follows from Corollary \ref{cor MSRD equivalent to MR LRC}, which will be proven in Subsection \ref{subsec architecture and constr}. The minimum distance of the global code in Construction \ref{construction 1} will be estimated in Theorem \ref{th global distance and threshold}, Subsection \ref{subsec global distance arbitrary local codes}.

\begin{theorem} \label{th constr 1 is MR}
Let $ \mathcal{C}_{glob} \subseteq \mathbb{F}_{q^m}^n $ be the global code from Construction \ref{construction 1}, and let $ \Gamma_i \subseteq [n] $ be the subset of coordinates ranging from $ \sum_{j=1}^{i-1} (r_j+\delta_j-1) + 1 $ to $ \sum_{j=1}^i (r_j + \delta_j - 1) $, for $ i = 1,2, \ldots, g $. Then the code $ \mathcal{C}_{glob} \subseteq \mathbb{F}_{q^m}^n $ has $ (\Gamma_i, r_i, \delta_i)_{i=1}^g $-localities and is maximally recoverable.\usetikzlibrary{shapes.misc, fit}
\end{theorem}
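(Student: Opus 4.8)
The plan is to prove the two claims of Theorem \ref{th constr 1 is MR} separately: first that $ \mathcal{C}_{glob} $ has the claimed $ (\Gamma_i, r_i, \delta_i)_{i=1}^g $-localities, and then that it is maximally recoverable. The locality part should be essentially immediate from the block-diagonal structure. Since $ \mathcal{C}_{glob} = \mathcal{C}_{out} \diag(A_1, \ldots, A_g) $, the restriction $ (\mathcal{C}_{glob})_{\Gamma_i} $ is exactly the image of $ \mathcal{C}_{out} $ projected onto its $ i $th block and then multiplied by $ A_i $. Because $ \mathcal{C}_{out} $ has minimum sum-rank distance $ N-k+1 \geq 1 $, its projection onto the $ i $th block of $ r_i $ coordinates is all of $ \mathbb{F}_{q^m}^{r_i} $ (a codeword of sum-rank weight at most $ r_i < N-k+1 $ when $ k < N $... more carefully: one shows the projection is surjective from the MSRD/MDS property). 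Then $ (\mathcal{C}_{glob})_{\Gamma_i} = \mathbb{F}_{q^m}^{r_i} A_i $, which is the linearized Reed-Solomon/MDS-type scalar extension of $ \mathcal{C}_{loc}^{(i)} $ to $ \mathbb{F}_{q^m} $; this code has length $ r_i + \delta_i - 1 $ and Hamming distance $ \delta_i $, giving both conditions in Definition \ref{def LRC}. I should double check the degenerate case $ k = N $ (no global parities), where the projection argument still works since $ A_i $ itself is the generator matrix of an MDS code.

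For the maximal recoverability part, the strategy is to reduce it to Corollary \ref{cor sum-rank erasure correction} (equivalently Corollary \ref{cor sum rank singleton}) applied to $ \mathcal{C}_{out} $. Fix subsets $ \Delta_i \subseteq \Gamma_i $ with $ |\Gamma_i \setminus \Delta_i| = \delta_i - 1 $, so $ |\Delta_i| = r_i $, and set $ \Delta = \bigcup_i \Delta_i $, $ |\Delta| = \sum_i r_i = N $. Write $ A_i' \in \mathbb{F}_{q_i}^{r_i \times r_i} $ for the submatrix of the generator matrix $ A_i $ consisting of the columns indexed by $ \Delta_i $. The key observation is that $ (\mathcal{C}_{glob})_\Delta = \mathcal{C}_{out} \diag(A_1', A_2', \ldots, A_g') $, and that each $ A_i' $ is \emph{invertible}: it is an $ r_i \times r_i $ submatrix of the generator matrix of an $ (r_i+\delta_i-1, r_i) $ MDS code, hence any $ r_i $ columns are linearly independent. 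Now the MSRD hypothesis on $ \mathcal{C}_{out} $ together with the equality condition in Corollary \ref{cor sum rank singleton} says precisely that $ \mathcal{C}_{out} \diag(A_1', \ldots, A_g') $ is MDS for \emph{every} choice of invertible block-diagonal scaling with $ A_i' \in \mathbb{F}_q^{r_i \times r_i} $. Since $ \mathbb{F}_{q_i} \subseteq \mathbb{F}_q \subseteq \mathbb{F}_{q^m} $ (as $ q $ is a power of $ q_i $), the matrices $ A_i' $ indeed have entries in $ \mathbb{F}_q $, so the corollary applies verbatim and $ \mathcal{C}_\Delta = (\mathcal{C}_{glob})_\Delta $ is MDS. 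This is exactly the condition in Definition \ref{def MR}.

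The main obstacle, and the step I would be most careful about, is making precise the identification $ (\mathcal{C}_{glob})_\Delta = \mathcal{C}_{out}\diag(A_1', \ldots, A_g') $ and the block-diagonal bookkeeping: the coordinates of $ \Gamma_i $ correspond to the columns of $ A_i $, so restricting to $ \Delta_i $ picks out the corresponding columns of $ A_i $ within the $ i $th block while leaving the other blocks at full width — one must check this respects the $ \diag $ structure, which it does since the blocks are independent. A secondary subtlety is the field containment chain $ \mathbb{F}_{q_i} \subseteq \mathbb{F}_q $: this is where the hypothesis ``$ q $ a power of $ q_i $'' in Construction \ref{construction 1} is used, and it is essential so that the scaling matrices live over the base field $ \mathbb{F}_q $ for which $ \mathcal{C}_{out} $ is MSRD. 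Once these identifications are in place, the result is a direct application of Corollary \ref{cor sum rank singleton}; in fact it is cleanest to phrase the whole argument as: $ \mathcal{C}_{out} $ MSRD over $ \mathbb{F}_q $ $ \iff $ $ \mathcal{C}_{out}A $ is MDS for all invertible block-diagonal $ A $ over $ \mathbb{F}_q $, and every valid erasure pattern in Definition \ref{def MR} corresponds to exactly such an $ A $. I would also remark that this argument is ``if and only if'' — it foreshadows Corollary \ref{cor MSRD equivalent to MR LRC} — but for the theorem as stated only the stated direction is needed.
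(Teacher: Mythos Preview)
Your proof is correct and takes essentially the same route as the paper: the MR claim is reduced to the equality case of Corollary \ref{cor sum rank singleton} via the observation that each $A_i'$ is an invertible $r_i \times r_i$ matrix over $\mathbb{F}_q$ (the paper packages this as Corollary \ref{cor MSRD equivalent to MR LRC}, which combines Corollaries \ref{cor sum-rank erasure correction}, \ref{cor sum rank singleton} and Theorem \ref{th optimal arbitrary local codes}, but the decisive step is identical). One small caveat on the locality part: the surjectivity of the projection of $\mathcal{C}_{out}$ onto the $i$th block can fail when $k < r_i$, but you only need the inclusion $(\mathcal{C}_{glob})_{\Gamma_i} \subseteq \mathbb{F}_{q^m}^{r_i} A_i$ to get $\dd_H((\mathcal{C}_{glob})_{\Gamma_i}) \geq \delta_i$, so the argument still goes through.
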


We conclude by noting that we may easily find a systematic form of the global code in Construction \ref{construction 1}. Here \textit{systematic} only means that certain $ k $ symbols of each codeword form the uncoded file, but they are not necessarily the first $ k $ symbols. Actually, we may distribute the $ k $ uncoded symbols as wanted among the local groups, up to the locality of each group (as shown, for instance, in Fig. \ref{fig LRCs unequal}). This more general systematic form is of interest if certain local groups are required to store a certain part of the original file.

Since $ \mathcal{C}_{out} \subseteq \mathbb{F}_{q^m}^N $ is MSRD, then it is MDS, thus any set of $ k $ coordinates is an information set. Since $ N = \sum_{i=1}^g r_i $, we may partition $ k = \sum_{i=1}^g k_i $, such that $ 0 \leq k_i \leq r_i $, for $ i = 1,2, \ldots, g $. We may then find a systematic generator matrix of $ \mathcal{C}_{out} $ of the form
\begin{equation}
G = (\widetilde{I}_{k_1}, G_1 | \widetilde{I}_{k_2}, G_2 | \ldots | \widetilde{I}_{k_g}, G_g) \in \mathbb{F}_{q^m}^{k \times N}, 
\label{eq systematic generator linRS}
\end{equation}
where $ G_i \in \mathbb{F}_{q^m}^{k \times (r_i - k_i)} $, and $ \widetilde{I}_{k_i} \in \mathbb{F}_{q^m}^{k \times k_i} $ is identically zero except for the rows in the $ i $th block of $ k_i $ rows, where it is the $ k_i \times k_i $ identity matrix, for $ i = 1,2, \ldots, g $.

If $ A = \diag(A_1, A_2, \ldots, A_g) \in \mathbb{F}_q^{N \times n} $ is such that $ A_i = (I_{r_i}, B_i) \in \mathbb{F}_q^{r_i \times n_i} $ is systematic, for $ i = 1,2, \ldots, g $, then
\begin{equation} 
GA = (\widetilde{I}_{k_1}, \widetilde{G}_1 | \widetilde{I}_{k_2}, \widetilde{G}_2 | \ldots | \widetilde{I}_{k_g}, \widetilde{G}_g) \in \mathbb{F}_{q^m}^{k \times n}
\label{eq global systematic generator}
\end{equation}
is a systematic generator matrix of $ \mathcal{C}_{glob} \subseteq \mathbb{F}_{q^m}^n $, where $ \widetilde{G}_i = (G_i, (\widetilde{I}_{k_i}, G_i) B_i) \in \mathbb{F}_{q^m}^{k \times (n_i - k_i)} $, for $ i = 1,2, \ldots, g $. 

Finally, note that this systematic encoding procedure follows the same steps as in Fig. \ref{fig encoding for construction I}. We first add the global parities and then the local parities.

\section{MR-LRCs with any Local Linear Codes} \label{sec universal lrc arbitrary local codes}

In this section, we study LRCs where local groups are disjoint, but locally encoded with arbitrary linear codes over some subfield $ \mathbb{F}_q \subseteq \mathbb{F} $. We will give the connection between MSRD codes and MR-LRCs in Subsection \ref{subsec architecture and constr}, and we will study global distances in Subsection \ref{subsec global distance arbitrary local codes}. As a consequence, we show that Construction \ref{construction 1} gives MR-LRCs (thus LRCs with optimal global distance) for any choice of local linear codes.  

As shown later in Section \ref{sec dynamic properties}, a direct application of this study, among others, will be partitioning the local MDS codes into Cartesian products of shorter local MDS codes, over smaller fields, to modify localities dynamically and adapt the DSS to new hot and cold data, or to obtain hierarchical  MR-LRCs.

\subsection{General MR-LRCs and MSRD Codes} \label{subsec architecture and constr}

In this subsection, we show that LRCs with disjoint local linear codes always have the architecture of Construction \ref{construction 1} (Lemma \ref{lemma equivalence linear local codes}), depicted in Fig. \ref{fig encoding for construction I}. We then show that MSRD outer codes achieve MR simultaneously for all families of local linear codes (Corollary \ref{cor MSRD equivalent to MR LRC}). We conclude by showing that the maximum locality $ r = \max_i r_i $ is the smallest extension degree of $ \mathbb{F} $ over $ \mathbb{F}_q $ satisfying this property, which is achieved by Construction \ref{construction 1} (Proposition \ref{prop minimum m for universal LRC}).

Fix a subfield $ \mathbb{F}_q \subseteq \mathbb{F} $. The proof of the following lemma is straightforward by linear algebra, and is left to the reader.

\begin{lemma} \label{lemma equivalence linear local codes}
Let $ \mathcal{C}_{glob} \subseteq \mathbb{F}^n $ be a (linear or non-linear) code, where $ n = n_1 + n_2 + \cdots + n_g $, and let $ \Gamma_i $ be the set of coordinates ranging from $ \sum_{j=1}^{i-1} n_j + 1 $ to $ \sum_{j=1}^i n_j $, for $ i = 1,2, \ldots, g $. The following are equivalent:
\begin{enumerate}
\item
$ (\mathcal{C}_{glob})_{\Gamma_i} \subseteq \mathcal{C}_i $, where $ \mathcal{C}_i \subseteq \mathbb{F}^{n_i} $ is an $ r_i $-dimensional linear code with a generator matrix with coefficients in $ \mathbb{F}_q $, for $ i = 1,2, \ldots, g $.
\item
There exist a full-rank matrix $ H_i \in \mathbb{F}_q^{(n_i - r_i) \times n_i} $ such that $ \mathbf{c}_{\Gamma_i} H_i = \mathbf{0} $, for all $ \mathbf{c} \in \mathcal{C}_{glob} $ and all $ i = 1,2, \ldots, g $.
\item 
There exist full-rank matrices $ A_i \in \mathbb{F}_q^{r_i \times n_i} $ with $ 1 \leq r_i \leq n_i $, for $ i = 1,2, \ldots, g $, such that
\begin{equation}
\mathcal{C}_{glob} = \mathcal{C}_{out} \diag(A_1, A_2, \ldots, A_g),
\label{eq architecture arbitrary local codes}
\end{equation}
for some outer code $ \mathcal{C}_{out} \subseteq \mathbb{F}^N $, where $ N = \sum_{i=1}^g r_i $ and $ | \mathcal{C}_{out} | = | \mathcal{C}_{glob} | $. Moreover, $ \mathcal{C}_{glob} $ is linear if, and only if, $ \mathcal{C}_{out} $ is linear.
\end{enumerate}
The relation between these items is that $ A_i $ and $ H_i $ are generator and parity-check matrices of $ \mathcal{C}_i $, respectively, for $ i = 1,2, \ldots, g $.
\end{lemma}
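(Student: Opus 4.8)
The plan is to prove the cyclic chain $(1)\Rightarrow(2)\Rightarrow(3)\Rightarrow(1)$, handling a single local block $\Gamma_i$ at a time and then assembling the $g$ blocks through the block-diagonal structure. The one fact that does all the real work is the invariance of matrix rank under the field extension $\mathbb{F}_q\subseteq\mathbb{F}$: an $\mathbb{F}_q$-matrix of rank $\rho$ still has rank $\rho$ over $\mathbb{F}$. Consequently the $\mathbb{F}$-linear span of the rows of a full-rank $A_i\in\mathbb{F}_q^{r_i\times n_i}$ has dimension $r_i$; the $\mathbb{F}$-kernel of a full-rank $H_i\in\mathbb{F}_q^{(n_i-r_i)\times n_i}$ has dimension $r_i$; and if $A_iH_i^\top=0$ with both of full rank then, over $\mathbb{F}_q$ and hence over $\mathbb{F}$, the $\mathbb{F}$-row space of $A_i$ coincides with the $\mathbb{F}$-kernel of $H_i$. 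I will invoke this (mostly implicitly) at every step.

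For $(1)\Rightarrow(2)$: fix $i$, let $A_i\in\mathbb{F}_q^{r_i\times n_i}$ be the given generator matrix of $\mathcal{C}_i$, and pick any full-rank $H_i\in\mathbb{F}_q^{(n_i-r_i)\times n_i}$ with $A_iH_i^\top=0$, i.e.\ a parity-check matrix of the $\mathbb{F}_q$-code generated by $A_i$. By the rank remark, $\mathcal{C}_i$ (the $\mathbb{F}$-span of the rows of $A_i$) is exactly the set of $\mathbf{x}\in\mathbb{F}^{n_i}$ annihilated by $H_i$, so $(\mathcal{C}_{glob})_{\Gamma_i}\subseteq\mathcal{C}_i$ immediately gives that every $\mathbf{c}\in\mathcal{C}_{glob}$ satisfies the orthogonality condition of item~(2) with this $H_i$. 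The converse $(2)\Rightarrow(1)$ is the mirror image: given the $H_i$, let $\mathcal{C}_i$ be the $\mathbb{F}$-kernel of $H_i$ (dimension $r_i$), obtain a full-rank $A_i\in\mathbb{F}_q^{r_i\times n_i}$ by solving $A_iH_i^\top=0$ over $\mathbb{F}_q$, observe that its $\mathbb{F}$-row space is an $r_i$-dimensional subspace of $\mathcal{C}_i$ hence equals $\mathcal{C}_i$, so $A_i$ is a generator matrix of $\mathcal{C}_i$ over $\mathbb{F}_q$; then item~(2) is precisely $(\mathcal{C}_{glob})_{\Gamma_i}\subseteq\mathcal{C}_i$.

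For $(2)\Rightarrow(3)$ I will convert the parity checks into a block-wise ``local decoding'' map. With $A_i$ as just constructed, for any $\mathbf{c}=(\mathbf{c}_{\Gamma_1},\dots,\mathbf{c}_{\Gamma_g})\in\mathcal{C}_{glob}$ each block $\mathbf{c}_{\Gamma_i}$ lies in the $\mathbb{F}$-row space of $A_i$, so there is a \emph{unique} $\mathbf{x}^{(i)}\in\mathbb{F}^{r_i}$ with $\mathbf{c}_{\Gamma_i}=\mathbf{x}^{(i)}A_i$ (uniqueness because $A_i$ has full row rank $r_i$ over $\mathbb{F}$); explicitly $\mathbf{x}^{(i)}=\mathbf{c}_{\Gamma_i}A_i^{+}$ for any fixed right inverse $A_i^{+}\in\mathbb{F}^{n_i\times r_i}$ of $A_i$, which is $\mathbb{F}$-linear in $\mathbf{c}_{\Gamma_i}$. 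Setting $\mathbf{x}=(\mathbf{x}^{(1)},\dots,\mathbf{x}^{(g)})$ and $\mathcal{C}_{out}=\{\mathbf{x}:\mathbf{c}\in\mathcal{C}_{glob}\}\subseteq\mathbb{F}^N$ with $N=\sum_i r_i$, we get $\mathbf{c}=\mathbf{x}\,\diag(A_1,\dots,A_g)$ and an $\mathbb{F}$-linear surjection $\varphi:\mathcal{C}_{glob}\to\mathcal{C}_{out}$, $\mathbf{c}\mapsto\mathbf{x}$. Since $\diag(A_1,\dots,A_g)$ has full row rank $N$ over $\mathbb{F}$, right multiplication by it is injective and left-inverts $\varphi$, so $\varphi$ is a bijection; hence $|\mathcal{C}_{out}|=|\mathcal{C}_{glob}|$, $\mathcal{C}_{glob}=\mathcal{C}_{out}\diag(A_1,\dots,A_g)$, and, $\varphi$ and its inverse being $\mathbb{F}$-linear, $\mathcal{C}_{glob}$ is linear iff $\mathcal{C}_{out}$ is. Finally $(3)\Rightarrow(1)$ is immediate: $(\mathcal{C}_{glob})_{\Gamma_i}=\{\mathbf{x}^{(i)}A_i:\mathbf{x}\in\mathcal{C}_{out}\}$ is contained in the $\mathbb{F}$-row space $\mathcal{C}_i$ of $A_i$, an $r_i$-dimensional code having the $\mathbb{F}_q$-matrix $A_i$ as generator matrix. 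Tracking the objects through these steps records the stated correspondence: $A_i$ is a generator matrix and $H_i$ a parity-check matrix of $\mathcal{C}_i$.

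There is no genuine obstacle here; every step is elementary. The only point that deserves care — and the reason the statement insists on generator and parity-check matrices \emph{over the subfield} $\mathbb{F}_q$ rather than over $\mathbb{F}$ — is exactly the rank-invariance under the extension used throughout: without it one could not conclude that an $\mathbb{F}_q$-parity-check matrix still cuts out an $r_i$-dimensional $\mathbb{F}$-subspace, nor that $A_i$ (and hence $\diag(A_1,\dots,A_g)$) remains full rank over $\mathbb{F}$, which is what makes the block-wise decoding map $\varphi$ well defined and $\mathbb{F}$-linear.
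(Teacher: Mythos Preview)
Your proof is correct and complete; the paper itself omits the proof entirely, stating only that it ``is straightforward by linear algebra, and is left to the reader.'' Your cyclic argument $(1)\Rightarrow(2)\Rightarrow(3)\Rightarrow(1)$, together with the explicit observation that rank is preserved under the field extension $\mathbb{F}_q\subseteq\mathbb{F}$, is exactly the elementary verification the authors had in mind.
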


Encoding with the code $ \mathcal{C}_{glob} $ satisfying the conditions in Lemma \ref{lemma equivalence linear local codes} also follows the steps in Fig. \ref{fig encoding for construction I}. We only need to replace $ r_i + \delta_i - 1 $ by $ n_i $ and choose $ \mathcal{C}_{loc}^{(i)} $ as the subfield subcode of $ \mathcal{C}_i $ over $ \mathbb{F}_{q_i} $, for $ i = 1,2, \ldots, g $, which need not be MDS. 

By Item 2, codes with this structure are included among those described in \cite[Def. 2.1]{grid-like}. By Item 1, they are included among those in Definition \ref{def LRC}, and by Item 3, they include Construction \ref{construction 1}. LRCs with non-MDS local linear codes have also been considered recently in \cite{huang-MELRC, blaum-MELRC}, where they are called \textit{Multi-Erasure LRCs}. Their approach is however focused on product codes, and the MR condition is not pursued.

We deduce the following two consequences.

\begin{corollary}
With notation as in Lemma \ref{lemma equivalence linear local codes}, any erasure pattern $ \mathcal{E}_i \subseteq \Gamma_i $ that can be corrected by the local code $ \mathcal{C}_i $, can be corrected by the global code $ \mathcal{C}_{glob} $ with the same complexity over the same field as with $ \mathcal{C}_i $, for $ i = 1,2, \ldots, g $.
\end{corollary}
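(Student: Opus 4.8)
The plan is to leverage the decomposition from Item 3 of Lemma \ref{lemma equivalence linear local codes}, namely $ \mathcal{C}_{glob} = \mathcal{C}_{out} \diag(A_1, A_2, \ldots, A_g) $ with $ A_i \in \mathbb{F}_q^{r_i \times n_i} $ a generator matrix of $ \mathcal{C}_i $, together with the relation (recorded in the same lemma) that $ H_i \in \mathbb{F}_q^{(n_i - r_i) \times n_i} $ is a parity-check matrix of $ \mathcal{C}_i $. First I would observe that the statement is purely local: it only concerns correcting an erasure pattern $ \mathcal{E}_i \subseteq \Gamma_i $ contained in a single local group, so all reasoning takes place inside the $ i $th block of coordinates. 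Let $ \mathbf{c} \in \mathcal{C}_{glob} $ and write $ \mathbf{c}^{(i)} = \mathbf{c}_{\Gamma_i} $; by Item 1 of the lemma we have $ \mathbf{c}^{(i)} \in \mathcal{C}_i $. Now suppose the coordinates indexed by $ \mathcal{E}_i $ have been erased, leaving the restriction $ \mathbf{c}^{(i)}_{\Gamma_i \setminus \mathcal{E}_i} $ known.

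The key step is that a decoder for $ \mathcal{C}_i $ correcting $ \mathcal{E}_i $, applied to $ \mathbf{c}^{(i)}_{\Gamma_i \setminus \mathcal{E}_i} $, returns $ \mathbf{c}^{(i)} $ exactly, by the defining hypothesis that $ \mathcal{E}_i $ is correctable by $ \mathcal{C}_i $. Because $ \mathbf{c}^{(i)} $ is precisely the $ i $th block of $ \mathbf{c} $, this simultaneously recovers the erased entries of $ \mathbf{c} $ inside $ \Gamma_i $. Concretely, one uses the parity-check matrix $ H_i $: the equation $ \mathbf{c}^{(i)} H_i = \mathbf{0} $ (valid for every $ \mathbf{c} \in \mathcal{C}_{glob} $ by Item 2) is a linear system over $ \mathbb{F}_q $ whose unknowns are the erased symbols in $ \mathcal{E}_i $; solvability and uniqueness of the solution are exactly the condition that $ \mathcal{E}_i $ is correctable by $ \mathcal{C}_i $. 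Since $ H_i $ has entries in $ \mathbb{F}_q $, the solve is carried out with $ \mathbb{F}_q $-arithmetic only — no reference to the global field $ \mathbb{F}_{q^m} $ is needed for the coefficients, and the number of operations is whatever the chosen $ \mathcal{C}_i $-decoder uses, identical to the stand-alone case. This is the content of "same complexity over the same field."

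One subtlety to address is that the entries of $ \mathbf{c} $ themselves lie in $ \mathbb{F}_{q^m} $, not in $ \mathbb{F}_q $, so strictly speaking the linear solve operates on vectors over $ \mathbb{F}_{q^m} $ with coefficient matrix over $ \mathbb{F}_q $. I would note that by fixing an $ \mathbb{F}_q $-basis of $ \mathbb{F}_{q^m} $ one may decompose each coordinate into $ m $ components in $ \mathbb{F}_q $ and run the decoder coordinate-wise; but this is a statement about how one measures complexity, and the natural reading — counting operations over the field in which $ \mathcal{C}_i $ natively operates, namely $ \mathbb{F}_q $ — makes the claim immediate once one treats a symbol of $ \mathbb{F}_{q^m} $ as a formal scalar acted on by $ \mathbb{F}_q $-linear maps. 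I do not expect any real obstacle here; the only mild care needed is to state precisely that the decoder of $ \mathcal{C}_i $ extends, entry-wise or basis-wise, to inputs over the extension field without changing its operation count, which is the sense in which the corollary is meant.
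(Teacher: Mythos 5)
Your proposal is correct and matches the paper's intended argument: the paper states this corollary as an immediate consequence of Lemma \ref{lemma equivalence linear local codes} (Item 1 gives $(\mathcal{C}_{glob})_{\Gamma_i}\subseteq\mathcal{C}_i$, so the local decoder applies verbatim to the $i$th block), which is exactly what you spell out. Your additional remark on decomposing $\mathbb{F}_{q^m}$-symbols into $\mathbb{F}_q$-components to justify ``same complexity over the same field'' is the right way to make the complexity claim precise.
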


\begin{corollary} \label{cor max k for initial localities}
With notation as in Lemma \ref{lemma equivalence linear local codes}, it holds that 
\begin{equation}
k = \log_{|\mathbb{F}|}| \mathcal{C}_{glob} | \leq N = \sum_{i=1}^g r_i.
\label{eq upper bound on k info rate}
\end{equation}
\end{corollary}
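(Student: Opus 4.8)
The plan is to prove Corollary \ref{cor max k for initial localities} directly from the structural decomposition in Item 3 of Lemma \ref{lemma equivalence linear local codes}. By that item, we may write $\mathcal{C}_{glob} = \mathcal{C}_{out} \diag(A_1, A_2, \ldots, A_g)$ with $|\mathcal{C}_{out}| = |\mathcal{C}_{glob}|$, where $\mathcal{C}_{out} \subseteq \mathbb{F}^N$ and $N = \sum_{i=1}^g r_i$. Taking $\log_{|\mathbb{F}|}$ of both sides gives $k = \log_{|\mathbb{F}|}|\mathcal{C}_{glob}| = \log_{|\mathbb{F}|}|\mathcal{C}_{out}|$, so it suffices to observe that any code (linear or not) in $\mathbb{F}^N$ has at most $|\mathbb{F}|^N$ codewords, hence $\log_{|\mathbb{F}|}|\mathcal{C}_{out}| \leq N$.

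Concretely, first I would invoke Lemma \ref{lemma equivalence linear local codes}, Item 3, to obtain the outer code $\mathcal{C}_{out} \subseteq \mathbb{F}^N$ with $|\mathcal{C}_{out}| = |\mathcal{C}_{glob}|$. Second, I would note the trivial cardinality bound $|\mathcal{C}_{out}| \leq |\mathbb{F}^N| = |\mathbb{F}|^N$. Third, combining these, $k = \log_{|\mathbb{F}|}|\mathcal{C}_{glob}| = \log_{|\mathbb{F}|}|\mathcal{C}_{out}| \leq \log_{|\mathbb{F}|}|\mathbb{F}|^N = N = \sum_{i=1}^g r_i$, which is exactly (\ref{eq upper bound on k info rate}).

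There is essentially no obstacle here: the content of the corollary is entirely carried by Lemma \ref{lemma equivalence linear local codes}, whose proof is deferred to the reader as "straightforward by linear algebra." The only point worth a line of care is that the map $\mathbf{x} \mapsto \mathbf{x}\,\diag(A_1,\ldots,A_g)$ is injective on $\mathcal{C}_{out}$ — which is guaranteed by the $|\mathcal{C}_{out}| = |\mathcal{C}_{glob}|$ clause in Item 3, itself a consequence of each $A_i$ having full row rank $r_i$ so that each block map $\mathbf{x}^{(i)} \mapsto \mathbf{x}^{(i)} A_i$ is injective on $\mathbb{F}^{r_i}$. Given that, the cardinality identity and the ambient-space bound finish the proof in two lines. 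The proof I would write is therefore:

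\begin{proof}
By Item 3 of Lemma \ref{lemma equivalence linear local codes}, we may write $\mathcal{C}_{glob} = \mathcal{C}_{out} \diag(A_1, A_2, \ldots, A_g)$ for some code $\mathcal{C}_{out} \subseteq \mathbb{F}^N$ with $N = \sum_{i=1}^g r_i$ and $|\mathcal{C}_{out}| = |\mathcal{C}_{glob}|$. Since $\mathcal{C}_{out} \subseteq \mathbb{F}^N$, we have $|\mathcal{C}_{glob}| = |\mathcal{C}_{out}| \leq |\mathbb{F}|^N$. Taking logarithms base $|\mathbb{F}|$ yields $k = \log_{|\mathbb{F}|}|\mathcal{C}_{glob}| \leq N = \sum_{i=1}^g r_i$.
\end{proof}
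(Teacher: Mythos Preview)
Your proof is correct and matches the paper's intended approach: the paper states this corollary immediately after Lemma \ref{lemma equivalence linear local codes} as one of two direct consequences, without giving an explicit proof, so your argument via Item 3 and the trivial cardinality bound is exactly what is meant.
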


The previous corollary motivates the following definition.

\begin{definition} \label{def global and local parities}
With notation as in Lemma \ref{lemma equivalence linear local codes}, we call $ h = \sum_{i=1}^g r_i - k \geq 0 $ the \textit{number of global parities} of the global code $ \mathcal{C}_{glob} $, which coincides with the number of conventional parities of the outer code $ \mathcal{C}_{out} $. The other $ \sum_{i=1}^g (n_i - r_i) $ parities of $ \mathcal{C}_{glob} $ are given by the parities of the local codes, and therefore are called \textit{local parities}. See also Figs. \ref{fig LRCs equal} and \ref{fig LRCs unequal} for a graphical description.
\end{definition}

The following definition is a natural extension of Definition \ref{def LRC} for arbitrary disjoint local linear codes. 

\begin{definition} \label{def locally linear LRC}
Let $ \mathcal{C}_{glob} \subseteq \mathbb{F}^n $ be a (linear or non-linear) code, where $ n = n_1 + n_2 + \cdots + n_g $ and define $ \Gamma_i $ as the set of coordinates ranging from $ \sum_{j=1}^{i-1} n_j + 1 $ to $ \sum_{j=1}^i n_j $, for $ i = 1,2, \ldots, g $. We say that $ \mathcal{C}_{glob} $ is a $ (\Gamma_i, \mathcal{C}_i)_{i=1}^g $-LRC if the equivalent conditions in Lemma \ref{lemma equivalence linear local codes} hold.
\end{definition}

We now characterize the global erasure patterns that are information-theoretically correctable. This holds in particular when the local codes are MDS. 

\begin{theorem} \label{th optimal arbitrary local codes}
Fix an $ (n,k) $ and $ (\Gamma_i, \mathcal{C}_i)_{i=1}^g $-LRC $ \mathcal{C}_{glob} \subseteq \mathbb{F}^n $ as in Definition \ref{def locally linear LRC}. Let $ \mathcal{E} \subseteq [n] $ be an erasure pattern, and define $ \mathcal{E}_i = \mathcal{E} \cap \Gamma_i $ and $ \mathcal{R}_i = \Gamma_i \setminus \mathcal{E}_i $, for $ i = 1,2, \ldots, g $. The following hold:
\begin{enumerate}
\item
If $ \sum_{i=1}^g {\rm Rk}(A_i|_{\mathcal{R}_i}) < k $, then the erasure pattern cannot be corrected by $ \mathcal{C}_{glob} $ for all codewords $ \mathbf{c} \in \mathcal{C}_{glob} $, independently of what outer code $ \mathcal{C}_{out} $ is used.
\item
If $ \sum_{i=1}^g {\rm Rk}(A_i|_{\mathcal{R}_i}) \geq k $ and $ \mathcal{C}_{out} $ is an MSRD code over $ \mathbb{F}_q $ for the sum-rank length partition $ N = \sum_{i=1}^g r_i $, then the erasure pattern can be corrected by $ \mathcal{C}_{glob} $ for all codewords $ \mathbf{c} \in \mathcal{C}_{glob} $.
\end{enumerate}
\end{theorem}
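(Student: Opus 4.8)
The plan is to reduce both items to the erasure-correction behaviour of the outer code, exploiting the structural identity $\mathcal{C}_{glob} = \mathcal{C}_{out}\diag(A_1,\dots,A_g)$ furnished by Item~3 of Lemma~\ref{lemma equivalence linear local codes}. First I would record two elementary facts. Since each $A_i$ is full-rank with $r_i \le n_i$, the matrix $\diag(A_1,\dots,A_g)$ has rank $N=\sum_i r_i$, so $\mathbf{x}\mapsto \mathbf{x}\diag(A_1,\dots,A_g)$ is a bijection from $\mathcal{C}_{out}$ onto $\mathcal{C}_{glob}$. And because the local groups $\Gamma_i$ are disjoint, restricting the columns of $\diag(A_1,\dots,A_g)$ to the surviving set $\mathcal{R}=[n]\setminus\mathcal{E}=\mathcal{R}_1\cup\dots\cup\mathcal{R}_g$ produces exactly $B:=\diag(A_1|_{\mathcal{R}_1},\dots,A_g|_{\mathcal{R}_g})$, whence $(\mathcal{C}_{glob})_{\mathcal{R}}=\mathcal{C}_{out}B$ and $ {\rm Rk}(B)=\sum_{i=1}^g {\rm Rk}(A_i|_{\mathcal{R}_i})$ (the rank over $\mathbb{F}_q$ agreeing with the rank over $\mathbb{F}_{q^m}$). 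Consequently, correcting $\mathcal{E}$ on $\mathcal{C}_{glob}$ is equivalent to recovering each $\mathbf{x}\in\mathcal{C}_{out}$ from $\mathbf{x}B$.

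For Item~1, I would argue by cardinality. The set $\{\mathbf{x}B : \mathbf{x}\in\mathbb{F}_{q^m}^N\}$ is an $\mathbb{F}_{q^m}$-subspace of dimension $ {\rm Rk}(B)$, so $|\mathcal{C}_{out}B| \le q^{m\,{\rm Rk}(B)}$; under the hypothesis $ {\rm Rk}(B)=\sum_i{\rm Rk}(A_i|_{\mathcal{R}_i}) < k$ and using $k=\log_{|\mathbb{F}|}|\mathcal{C}_{out}|$ from Corollary~\ref{cor max k for initial localities}, this gives $|\mathcal{C}_{out}B| < |\mathcal{C}_{out}|$. Hence $\mathbf{x}\mapsto\mathbf{x}B$ is not injective on $\mathcal{C}_{out}$, and pulling back through the bijection above yields distinct codewords of $\mathcal{C}_{glob}$ that agree on $\mathcal{R}$; so $\mathcal{E}$ is uncorrectable. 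The argument uses nothing about $\mathcal{C}_{out}$ beyond its size, which is exactly the asserted independence of the outer code.

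For Item~2 I would invoke Corollary~\ref{cor sum-rank erasure correction} applied to $\mathcal{C}_{out}$ with $\rho = N-k$ (assuming $k\ge 1$; the case $k=0$ is trivial). Since $\mathcal{C}_{out}$ is MSRD, $ {\rm d}_{SR}(\mathcal{C}_{out}) = N-k+1$, so $0\le \rho < {\rm d}_{SR}(\mathcal{C}_{out})$ and $\rho<N$; and the hypothesis $\sum_i{\rm Rk}(A_i|_{\mathcal{R}_i}) \ge k$ reads $N - \sum_i{\rm Rk}(A_i|_{\mathcal{R}_i}) \le N-k = \rho$. Taking the matrices of Corollary~\ref{cor sum-rank erasure correction} to be $A_i|_{\mathcal{R}_i}\in\mathbb{F}_q^{r_i\times|\mathcal{R}_i|}$, we obtain a decoder $D$ with $D(\mathbf{x}B)=\mathbf{x}$ for all $\mathbf{x}\in\mathcal{C}_{out}$; composing with the bijection $\mathcal{C}_{out}\to\mathcal{C}_{glob}$ produces a decoder correcting the erasure pattern $\mathcal{E}$ on $\mathcal{C}_{glob}$.

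The substantive input is entirely packaged in Corollary~\ref{cor sum-rank erasure correction} (equivalently Theorem~\ref{th sum-rank distance is min among hamming distances}), so the only real work is assembling the pieces; the main point to get right is the bookkeeping step identifying the column restriction of $\diag(A_1,\dots,A_g)$ to $\mathcal{R}$ with $\diag(A_1|_{\mathcal{R}_1},\dots,A_g|_{\mathcal{R}_g})$ --- where disjointness of the local groups is essential --- together with keeping the rank and cardinality inequalities valid even when $\mathcal{C}_{out}$ is non-linear and $k$ is not an integer.
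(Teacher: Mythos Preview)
Your proposal is correct and follows essentially the same approach as the paper: both items are reduced to the outer code via the identity $\mathcal{C}_{glob}=\mathcal{C}_{out}\diag(A_1,\dots,A_g)$, Item~1 is a cardinality/pigeonhole argument, and Item~2 is a direct appeal to Corollary~\ref{cor sum-rank erasure correction}. Your Item~1 is in fact slightly more streamlined than the paper's version---the paper first passes to subsets $\mathcal{S}_i\subseteq\mathcal{R}_i$ with $|\mathcal{S}_i|=\Rk(A_i|_{\mathcal{R}_i})$ before counting, whereas you bound $|\mathcal{C}_{out}B|$ directly by $q^{m\,\Rk(B)}$---but the underlying idea is identical.
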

\begin{proof}
We prove each item separately:

1) Assume that there exists a decoder $ D : \mathcal{C}_{glob}|_\mathcal{R} \longrightarrow \mathcal{C}_{glob} $, where $ \mathcal{R} = [n] \setminus \mathcal{E} $, such that $ D(\mathbf{c}_\mathcal{R}) = \mathbf{c} $, for all $ \mathbf{c} \in \mathcal{C}_{glob} $. Let $ A |_\mathcal{R} = \diag(A_1 |_{\mathcal{R}_1}, A_2 |_{\mathcal{R}_2}, \ldots, A_g |_{\mathcal{R}_g}) $. Then the decoder can be rewritten as
$$ D : \mathcal{C}_{out} (A |_\mathcal{R}) \longrightarrow \mathcal{C}_{out}, $$
where $ D(\mathbf{c} (A |_\mathcal{R})) = \mathbf{c} $, for all $ \mathbf{c} \in \mathcal{C}_{out} $. In particular, $ D : \mathcal{C}_{out} (A |_\mathcal{R}) \longrightarrow \mathcal{C}_{out} $ is a bijective map.

Fix $ i = 1,2, \ldots, g $ and let $ s_i = \Rk (A_i|_{\mathcal{R}_i}) $. There exists $ \mathcal{S}_i \subseteq \mathcal{R}_i $ such that $ |\mathcal{S}_i| = \Rk (A_i|_{\mathcal{S}_i}) = s_i $. Hence the restriction map $ \pi : \mathcal{C}_{out} (A |_\mathcal{R}) \longrightarrow \mathcal{C}_{out} (A |_\mathcal{S}) $ is also bijective, where $ \mathcal{S} = \bigcup_{i=1}^g \mathcal{S}_i $. Therefore, we conclude that
$$ | \mathcal{C}_{out} (A |_\mathcal{S}) | = | \mathcal{C}_{out} (A |_\mathcal{R}) | = | \mathcal{C}_{out} | = | \mathcal{C}_{glob} | . $$
However, $ \mathcal{C}_{out} (A |_\mathcal{S}) \subseteq \mathbb{F}^{| \mathcal{S} |} $ and $ | \mathcal{S} | = \sum_{i=1}^g s_i = \sum_{i=1}^g {\rm Rk}(A_i|_{\mathcal{R}_i}) < k $, which is absurd since $ | \mathcal{C}_{glob} | = | \mathbb{F} |^k $.

2) Since $ \sum_{i=1}^g {\rm Rk}(A_i|_{\mathcal{R}_i}) \geq k $ and $ \mathcal{C}_{out} $ is an MSRD code, the erasure pattern can be corrected by Corollary \ref{cor sum-rank erasure correction}.
\end{proof}

This motivates the following definition.

\begin{definition} [\textbf{General MR-LRCs}] \label{def general MR LRC}
With notation as in Theorem \ref{th optimal arbitrary local codes}, we say that $ \mathcal{C}_{glob} $ is maximally recoverable (MR) for $ (\Gamma_i,\mathcal{C}_i)_{i=1}^g $ if it can correct all erasure paterns $ \mathcal{E} \subseteq [n] $ such that $ \sum_{i=1}^g {\rm Rk}(A_i|_{\mathcal{R}_i}) \geq k $, where $ \mathcal{E}_i = \mathcal{E} \cap \Gamma_i $ and $ \mathcal{R}_i = \Gamma_i \setminus \mathcal{E}_i $, for $ i = 1,2, \ldots, g $.
\end{definition}

We now show that this definition extends Definition \ref{def MR}.

\begin{corollary}
Let the notation be as in Theorem \ref{th optimal arbitrary local codes}, and assume that $ \mathcal{C}_i $ is an $ (r_i+\delta_i-1, r_i) $ MDS code, for $ i = 1,2, \ldots, g $. The following are equivalent:
\begin{enumerate}
\item
The code $ \mathcal{C}_{glob} $ is an MR-LRC for its $ (\Gamma_i, r_i ,\delta_i)_{i=1}^g $-localities according to Definition \ref{def MR}.
\item
The code $ \mathcal{C}_{glob} $ is an MR-LRC for $ (\Gamma_i,\mathcal{C}_i)_{i=1}^g $ according to Definition \ref{def general MR LRC}.
\end{enumerate} 
\end{corollary}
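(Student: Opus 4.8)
The plan is to prove the equivalence by unravelling what each definition demands in the special case of MDS local codes and showing they describe the same set of correctable erasure patterns. First I would recall that, since $\mathcal{C}_i$ is an $(r_i+\delta_i-1,r_i)$ MDS code, its generator matrix $A_i$ has the property that \emph{every} set of $r_i$ columns is linearly independent; equivalently, for any $\mathcal{R}_i\subseteq\Gamma_i$ we have $\operatorname{Rk}(A_i|_{\mathcal{R}_i}) = \min\{|\mathcal{R}_i|, r_i\}$. This is the bridge between the two definitions: Definition \ref{def MR} is phrased in terms of the sizes $|\Gamma_i\setminus\Delta_i| = \delta_i - 1$, while Definition \ref{def general MR LRC} is phrased in terms of the ranks $\operatorname{Rk}(A_i|_{\mathcal{R}_i})$, and the MDS property lets me pass freely between sizes and ranks.

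Next I would handle the direction (2) $\Rightarrow$ (1). Assume $\mathcal{C}_{glob}$ is MR for $(\Gamma_i,\mathcal{C}_i)_{i=1}^g$ in the sense of Definition \ref{def general MR LRC}. Take any $\Delta_i\subseteq\Gamma_i$ with $|\Gamma_i\setminus\Delta_i| = \delta_i-1$ as in Definition \ref{def MR}; I must show $\mathcal{C}_\Delta$ is MDS, i.e.\ that it corrects every erasure pattern of size $< \dim$, which amounts to showing that for every $\mathcal{E}\subseteq\Delta$ with $|\Delta\setminus\mathcal{E}|\geq k$ the restriction to $\Delta\setminus\mathcal{E}$ is injective on $\mathcal{C}_{glob}$. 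Writing $\mathcal{E}_i = \mathcal{E}\cap\Gamma_i$ (note $\mathcal{E}_i$ avoids the $\delta_i-1$ already-removed coordinates, so the total erasure set within $\Gamma_i$ relative to the full code is $\mathcal{E}_i\cup(\Gamma_i\setminus\Delta_i)$, of size $|\mathcal{E}_i| + \delta_i-1$), I set $\mathcal{R}_i = \Gamma_i\setminus(\mathcal{E}_i\cup(\Gamma_i\setminus\Delta_i))$ and compute $\operatorname{Rk}(A_i|_{\mathcal{R}_i}) = \min\{|\mathcal{R}_i|, r_i\}$. Summing and using $|\Delta\setminus\mathcal{E}| = \sum_i |\mathcal{R}_i| \geq k$ together with the fact that $k \leq N = \sum_i r_i$ (Corollary \ref{cor max k for initial localities}), a short counting argument shows $\sum_i \operatorname{Rk}(A_i|_{\mathcal{R}_i}) \geq k$, so Definition \ref{def general MR LRC} applies and the pattern is correctable; hence $\mathcal{C}_\Delta$ is MDS.

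For (1) $\Rightarrow$ (2), I argue contrapositively or directly: given an erasure pattern $\mathcal{E}$ with $\sum_i \operatorname{Rk}(A_i|_{\mathcal{R}_i}) \geq k$, I want to exhibit it as a sub-pattern of a ``worst case'' pattern of the Definition \ref{def MR} type. Using $\operatorname{Rk}(A_i|_{\mathcal{R}_i}) = \min\{|\mathcal{R}_i|, r_i\}$, I can enlarge $\mathcal{E}_i$ if necessary so that within each group exactly $\delta_i-1$ coordinates lie outside the ``information-bearing'' part, i.e.\ choose $\Delta_i\supseteq\mathcal{R}_i'$ with $|\Gamma_i\setminus\Delta_i| = \delta_i-1$ for a suitable $\mathcal{R}_i'\subseteq\mathcal{R}_i$, arranged so that $\mathcal{C}_\Delta$ being MDS forces correction of the original $\mathcal{E}$. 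Concretely: erasing the local parities first (which MDS local codes recover for free, by the first corollary after Lemma \ref{lemma equivalence linear local codes}) reduces to a pattern inside $\Delta = \bigcup\Delta_i$, where MDS-ness of $\mathcal{C}_\Delta$ from Definition \ref{def MR} gives the decoder since the number of remaining coordinates is $\geq k$.

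The main obstacle, I expect, is the bookkeeping in matching the combinatorics of the two definitions: Definition \ref{def MR} fixes a uniform ``cushion'' of $\delta_i-1$ erasures per group and asks for MDS-ness of what remains, whereas Definition \ref{def general MR LRC} allows arbitrary erasure distributions subject only to the global rank inequality $\sum_i\operatorname{Rk}(A_i|_{\mathcal{R}_i})\geq k$. Reconciling these requires carefully choosing, for a given $\mathcal{E}$, which coordinates to designate as the $\delta_i-1$ ``structural'' erasures versus the ``surplus'' ones, and verifying the rank counts line up; the MDS property $\operatorname{Rk}(A_i|_{\mathcal{R}_i}) = \min\{|\mathcal{R}_i|, r_i\}$ is exactly what makes this possible, but one must be attentive to the boundary cases where $|\mathcal{R}_i| < r_i$ (so a group is ``over-erased'' and contributes fewer than $r_i$ to the rank sum) versus $|\mathcal{R}_i|\geq r_i$ (so the group contributes its full $r_i$). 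I would organize the argument around these two regimes per group and then sum.
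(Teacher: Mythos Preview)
Your proposal is correct and takes essentially the same approach as the paper. The paper's proof is a one-liner: it cites exactly the identity you isolate, namely $\Rk(A_i|_{\mathcal{R}_i}) = \min\{r_i,|\mathcal{R}_i|\}$ for MDS local codes, and says the equivalence ``follows from the definitions'' via this fact; your write-up simply unpacks the two directions in detail. One small clean-up: in $(2)\Rightarrow(1)$ you do not need the side remark $k\leq N$, since $\mathcal{R}_i\subseteq\Delta_i$ already forces $|\mathcal{R}_i|\leq r_i$ and hence $\Rk(A_i|_{\mathcal{R}_i})=|\mathcal{R}_i|$ directly; and in $(1)\Rightarrow(2)$ the phrase about ``erasing the local parities first (which MDS local codes recover for free)'' is slightly misstated---what you actually use is that restricting to $\mathcal{R}'=\bigcup_i\mathcal{R}_i'\subseteq\Delta$ with $|\mathcal{R}'|\geq k$ determines the codeword in $\mathcal{C}_\Delta$ by MDS-ness, and then the bijection $\mathcal{C}_{glob}\to\mathcal{C}_\Delta$ (from invertibility of each $A_i|_{\Delta_i}$) recovers the full codeword. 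With those tweaks your argument is a faithful expansion of the paper's.
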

\begin{proof}
It follows from the definitions and the fact that, if $ A_i \in \mathbb{F}_q^{r_i \times (r_i+\delta_i-1)} $ generates $ \mathcal{C}_i $, then
$$ \Rk(A_i|_{\mathcal{R}_i}) = \min \{ r_i, | \mathcal{R}_i | \}, $$
since $ \mathcal{C}_i $ is MDS, for $ i = 1,2, \ldots, g $.
\end{proof}

We also deduce the following result, which proves Theorem \ref{th constr 1 is MR}.

\begin{corollary} \label{cor MSRD equivalent to MR LRC}
With notation as in Theorem \ref{th optimal arbitrary local codes}, the following are equivalent:
\begin{enumerate}
\item
$ \mathcal{C}_{out} \subseteq \mathbb{F}^N $ is MSRD over $ \mathbb{F}_q $ for the sum-rank length partition $ N = \sum_{i=1}^g r_i $.
\item
For all full-rank matrices $ A_i \in \mathbb{F}_q^{r_i \times n_i} $, for $ i = 1,2, \ldots, g $, the code $ \mathcal{C}_{glob} = \mathcal{C}_{out} \diag(A_1, A_2, \ldots, A_g) $ is an MR-LRC for $ (\Gamma_i,\mathcal{C}_i)_{i=1}^g $, being $ \mathcal{C}_i \subseteq \mathbb{F}^{n_i} $ the linear code generated by $ A_i $.
\end{enumerate}
In particular, Construction \ref{construction 1} gives MR-LRCs for arbitrary local linear codes with global fields $ \mathbb{F} = \mathbb{F}_{q^m} $, where $ q > g $ and $ m \geq \max_i r_i $, being $ r_i = \dim(\mathcal{C}_i) $, for $ i = 1,2, \ldots, g $. 
\end{corollary}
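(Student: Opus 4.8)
The plan is to prove the two implications separately, reducing each to a result already established in the excerpt.

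For $(1)\Rightarrow(2)$, the statement is essentially a repackaging of Theorem~\ref{th optimal arbitrary local codes}. Suppose $\mathcal{C}_{out}$ is MSRD for the partition $N=\sum_{i=1}^g r_i$, and fix arbitrary full-rank matrices $A_i\in\mathbb{F}_q^{r_i\times n_i}$. Since each $A_i$ has rank $r_i\le n_i$, the map $\mathbf{u}\mapsto\mathbf{u}\diag(A_1,A_2,\ldots,A_g)$ is injective on $\mathcal{C}_{out}$, so $|\mathcal{C}_{glob}|=|\mathcal{C}_{out}|$ and, by Item~3 of Lemma~\ref{lemma equivalence linear local codes}, $\mathcal{C}_{glob}$ is a $(\Gamma_i,\mathcal{C}_i)_{i=1}^g$-LRC with $\mathcal{C}_i$ generated by $A_i$. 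By Definition~\ref{def general MR LRC}, being MR for $(\Gamma_i,\mathcal{C}_i)_{i=1}^g$ means correcting every erasure pattern $\mathcal{E}\subseteq[n]$ with $\sum_{i=1}^g\Rk(A_i|_{\mathcal{R}_i})\ge k$; this is exactly the conclusion of Item~2 of Theorem~\ref{th optimal arbitrary local codes}, so no further work is required.

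For $(2)\Rightarrow(1)$, the idea is to feed the hypothesis a well-chosen family of local codes so that the general MR condition collapses onto the characterization of MSRD codes in Corollary~\ref{cor sum rank singleton}. For any invertible block-diagonal matrix $B=\diag(B_1,\ldots,B_g)$ with $B_i\in\mathbb{F}_q^{r_i\times r_i}$, apply the hypothesis to the full-rank matrices $A_i=(B_i\mid 0_{r_i\times(n_i-r_i)})\in\mathbb{F}_q^{r_i\times n_i}$. In every codeword of $\mathcal{C}_{glob}=\mathcal{C}_{out}\diag(A_1,\ldots,A_g)$, the last $n_i-r_i$ coordinates of block $i$ are identically zero, so those coordinates contribute nothing to $\Rk(A_i|_{\mathcal{R}_i})$ and nothing to correctability; hence $\Rk(A_i|_{\mathcal{R}_i})=|\mathcal{R}_i\cap\{1,\ldots,r_i\}|$ (using that $B_i$ is invertible), and recovering a codeword of $\mathcal{C}_{glob}$ from its surviving coordinates is equivalent to recovering $\mathbf{u}\in\mathcal{C}_{out}$ from the projection of $\mathbf{u}\,\diag(B_1,\ldots,B_g)$ onto a subset $\mathcal{T}\subseteq[N]$ of size $\sum_i\Rk(A_i|_{\mathcal{R}_i})$. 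Since, by choosing $\mathcal{E}$ appropriately, $\mathcal{T}$ can be made any subset of $[N]$ of a prescribed size, the MR hypothesis forces the restriction of $\mathcal{C}_{out}B$ to every $k$-subset of $[N]$ to be injective; as $|\mathcal{C}_{out}B|=q^{mk}$, the Hamming Singleton bound then upgrades this to $d_H(\mathcal{C}_{out}B)=N-k+1$, i.e. $\mathcal{C}_{out}B$ is MDS. Letting $B$ range over all invertible block-diagonal matrices of this shape, Corollary~\ref{cor sum rank singleton} gives that $\mathcal{C}_{out}$ attains the first Singleton bound, i.e. it is MSRD for the partition $N=\sum_{i=1}^g r_i$.

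I expect the step needing the most care to be the reduction inside $(2)\Rightarrow(1)$: one must check carefully that padding the local generator matrices with zero columns genuinely decouples the ``extra'' local coordinates, so that the information-theoretic MR condition on $\mathcal{C}_{glob}$ becomes a plain Hamming-erasure statement about $\mathcal{C}_{out}B$, and that sweeping over all erasure patterns with $\sum_i\Rk(A_i|_{\mathcal{R}_i})\ge k$ really does exhaust all $k$-subsets of $[N]$. Everything else is bookkeeping with cardinalities and the classical Singleton bound, together with the ``if and only if'' part of Corollary~\ref{cor sum rank singleton}.
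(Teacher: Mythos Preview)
Your argument is correct and follows essentially the same route as the paper, which simply cites Corollary~\ref{cor sum-rank erasure correction}, Corollary~\ref{cor sum rank singleton} and Theorem~\ref{th optimal arbitrary local codes}: the forward direction is Item~2 of Theorem~\ref{th optimal arbitrary local codes}, and the reverse direction is the ``if and only if'' part of Corollary~\ref{cor sum rank singleton} applied after specializing the local codes. The only cosmetic simplification is that the zero-padding is unnecessary: since hypothesis~(2) ranges over all $n_i\ge r_i$, you may take $n_i=r_i$ and $A_i=B_i$ directly, so that $\mathcal{C}_{glob}=\mathcal{C}_{out}B$ and the MR condition immediately reads ``every erasure of size at most $N-k$ is correctable,'' giving MDS without the detour through the extra zero coordinates.
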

\begin{proof}
It follows by combining Corollary \ref{cor sum-rank erasure correction}, Corollary \ref{cor sum rank singleton} and Theorem \ref{th optimal arbitrary local codes}.
\end{proof}

We now show that $ m=r $ is the smallest extension degree of $ \mathbb{F} $ over $ \mathbb{F}_q $ that allows arbitrary local linear codes with localities up to $ r $, which is achieved by Construction \ref{construction 1}.

\begin{proposition} \label{prop minimum m for universal LRC}
For the positive integers $ g $ and $ r $ and the field $ \mathbb{F}_q $, the following are equivalent:
\begin{enumerate}
\item
$ \mathbb{F} = \mathbb{F}_{q^m} $ with $ m \geq r $.
\item
There exists a $ (gr,k) $ MSRD code $ \mathcal{C}_{out} \subsetneqq \mathbb{F}^{gr} $ over $ \mathbb{F}_q $, with $ k < gr $, for the sum-rank length partition $ gr = \sum_{i=1}^g r $.
\item
For all $ 1 \leq r_i \leq r $, $ i = 1,2, \ldots, g $, there exists an $ (N,k) $ MSRD code $ \mathcal{C}_{out} \subsetneqq \mathbb{F}^N $ over $ \mathbb{F}_q $, with $ k < \sum_{i=1}^g r_i $, for the sum-rank length partition $ N = \sum_{i=1}^g r_i $.
\end{enumerate}
\end{proposition}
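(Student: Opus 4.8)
The plan is to prove the cyclic chain of implications $(3) \Rightarrow (2) \Rightarrow (1) \Rightarrow (3)$. The implication $(3) \Rightarrow (2)$ is immediate: simply specialize to the case $r_1 = r_2 = \cdots = r_g = r$, which is a valid choice since $1 \leq r \leq r$. For $(2) \Rightarrow (1)$, I would invoke Corollary \ref{cor minimum m for MSRD} (the Second Singleton bound). That corollary states that, when all sublengths are equal to $N/g = r$, the existence of a proper MSRD code $\mathcal{C}_{out} \subsetneqq \mathbb{F}^{gr}$ with $\dd_{SR}(\mathcal{C}_{out}) > 1$ over $\mathbb{F}_q$ forces $m \geq gr/g = r$. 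Note that $k < gr$ together with $\mathcal{C}_{out}$ being MSRD gives $\dd_{SR}(\mathcal{C}_{out}) = gr - k + 1 \geq 2 > 1$, so the hypothesis of the corollary is met, and we conclude $m \geq r$, i.e. $\mathbb{F} = \mathbb{F}_{q^m}$ with $m \geq r$.

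The substantive implication is $(1) \Rightarrow (3)$, and this is where the work lies. Given $m \geq r$, for any choice of $1 \leq r_i \leq r$ we have $r_i \leq m$ for all $i$, and we need $1 \leq g$; but we must also ensure the linearized Reed-Solomon construction of Definition \ref{def lin RS codes} applies, which additionally requires $g \leq q - 1$. This is the point I expect to be the main obstacle: the statement of the proposition as written only fixes $g$, $r$, and $\mathbb{F}_q$, with no visible hypothesis relating $g$ to $q$. I would either (a) assume implicitly that $q > g$ — consistent with Construction \ref{construction 1}, which always takes $q > g$ — and note this is without loss of generality since one may replace $\mathbb{F}_q$ by a sufficiently large subfield-extension-free enlargement, or more carefully, (b) observe that if $q \leq g$ one can pass to $\mathbb{F}_{q'} = \mathbb{F}_{q^e}$ for a suitable $e$ (with $m' = m/e$ still $\geq$ something) — but this is delicate and likely not intended. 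The cleaner route: once $g \leq q - 1$ and $r_i \leq m$, apply Proposition \ref{prop linRS are MSRD} directly. Pick a primitive element $\gamma$ of $\mathbb{F}_{q^m}$ and a basis $\mathcal{B}$ of $\mathbb{F}_{q^m}$ over $\mathbb{F}_q$, form $\mathcal{C}^\sigma_{L,k}(\mathcal{B},\gamma) \subseteq \mathbb{F}_{q^m}^N$ with $N = \sum_{i=1}^g r_i$ and any $k$ with $1 \leq k < N$; by Proposition \ref{prop linRS are MSRD} this is a $k$-dimensional MSRD code for the partition $N = \sum r_i$, and since $k < N$ it is a proper subcode. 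This establishes $(3)$.

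To summarize the order of steps: first dispatch $(3) \Rightarrow (2)$ by specialization; second, prove $(2) \Rightarrow (1)$ via Corollary \ref{cor minimum m for MSRD}, checking that $\dd_{SR} \geq 2$; third, prove $(1) \Rightarrow (3)$ by exhibiting a linearized Reed-Solomon code via Definition \ref{def lin RS codes} and Proposition \ref{prop linRS are MSRD}, after dealing with the $g \leq q-1$ requirement — this last point being the one requiring care, and presumably handled by the standing assumption $q > g$ inherited from Construction \ref{construction 1}.
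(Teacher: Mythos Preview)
Your proposal is correct and follows essentially the same approach as the paper: the paper's proof is a one-liner invoking Corollary~\ref{cor minimum m for MSRD} for the necessity direction and the existence of linearized Reed-Solomon codes (Definition~\ref{def lin RS codes} and Proposition~\ref{prop linRS are MSRD}) for sufficiency. Your observation about the hidden hypothesis $g \leq q-1$ is apt; the paper silently relies on the standing assumption made at the start of Subsection~\ref{subsec lin RS codes}.
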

\begin{proof}
Immediate from Corollary \ref{cor minimum m for MSRD} and the fact that linearized Reed-Solomon codes are MSRD and exist for the considered parameters (Definition \ref{def lin RS codes} and Proposition \ref{prop linRS are MSRD}).
\end{proof}

Observe that:

1) For $ h = 0 $, i.e., $ k = N = \sum_{i=1}^g r_i $, the whole space $ \mathcal{C}_{out} = \mathbb{F}^N $ is MSRD over any subfield (note that Corollary \ref{cor minimum m for MSRD} does not apply since $ \dd_{SR}(\mathbb{F}^N) = 1 $), hence we may take $ \mathbb{F} = \mathbb{F}_q $. In this case, $ \mathcal{C}_{glob} = \mathcal{C}_1 \times \mathcal{C}_2 \times \cdots \times \mathcal{C}_g $, and we recover Cartesian products. Furthermore, we may take $ q $ as the minimum common power of $ q_1, q_2, \ldots, q_g $, where $ \mathbb{F}_{q_i} $ is the local field for $ \mathcal{C}_i $, for $ i = 1,2, \ldots, g $, in accordance with the general construction.

2) For $ r = 1 $, we may take $ m=1 $, thus $ \mathbb{F} = \mathbb{F}_q $ again. Since the sum-rank metric for $ r_1 = r_2 = \ldots = r_g = 1 $ coincides with the Hamming metric, $ \mathcal{C}_{out} $ only needs to be MDS, and the corresponding linearized Reed-Solomon codes are classical Reed-Solomon codes. Hence we recover MDS global codes with local replication ($ \dim(\mathcal{C}_i) = 1 $, for $ i = 1,2, \ldots, g $).

An example of parameters for an MR-LRC with an MSRD outer code and different local linear codes can be found in Example \ref{example intro} in Subsection \ref{subsec partitioning initial localities}.

\subsection{Global Distances and Thresholds for Erasure Correction}  \label{subsec global distance arbitrary local codes}

In Theorem \ref{th optimal arbitrary local codes} we showed that MSRD outer codes correct all erasure patterns that are information-theoretically correctable for arbitrary disjoint local linear codes, and gave a description of such patterns. However, it is usual in the LRC literature to give the minimum distance of the global code, although its optimality is in general weaker than the MR condition. In this subsection, we give a formula for such global distances. It also shows the optimality of the global distance of Construction \ref{construction 1} even without assuming that $ r_1 \leq r_2 \leq \ldots \leq r_g $ and $ \delta_1 \geq \delta_2 \geq \ldots \geq \delta_g $, in contrast with \cite{chen-hao, kim}, and hence in contrast with all previous studies. 

Fix a full-rank matrix $ A_i \in \mathbb{F}_q^{r_i \times n_i} $ with $ 1 \leq r_i \leq n_i $, and let $ \mathcal{C}_i \subseteq \mathbb{F}^{n_i} $ be the linear code generated by $ A_i $, for $ i = 1, $ $2, $ $ \ldots, $ $ g $. Define $ n = n_1 + n_2 + \cdots + n_g $. For $ k = 1, $ $2, $ $ \ldots, $ $ \sum_{i=1}^g r_i $, define
\begin{equation}
\begin{split}
e(A,k) = \max \{ & e \in [n] \mid \min \{ \Rk (A|_\mathcal{R}) \mid \\
 & | \mathcal{R} | = n - e \} \geq k \},
\end{split}
\label{eq def of e_max}
\end{equation}
where $ A = \diag(A_1, A_2, \ldots, A_g) \in \mathbb{F}_q^{N \times n} $.

\begin{theorem} \label{th global distance and threshold}
Fix an $ (n,k) $ and $ (\Gamma_i, \mathcal{C}_i)_{i=1}^g $-LRC $ \mathcal{C}_{glob} \subseteq \mathbb{F}^n $ as in Definition \ref{def locally linear LRC}. The following hold:
\begin{enumerate}
\item
$ \dd_H(\mathcal{C}_{glob}) \leq e(A,k) + 1 $, for any outer code $ \mathcal{C}_{out} \subseteq \mathbb{F}^N $.
\item
$ \dd_H(\mathcal{C}_{glob}) = e(A,k) + 1 $ if $ \mathcal{C}_{glob} $ is an MR-LRC for the given $ (\Gamma_i, \mathcal{C}_i)_{i=1}^g $-localities.
\item
$ \dd_H(\mathcal{C}_{glob}) = e(A,k) + 1 $ if the outer code $ \mathcal{C}_{out} \subseteq \mathbb{F}^N $ is MSRD over $ \mathbb{F}_q $ for the sum-rank length partition $ N = \sum_{i=1}^g r_i $.
\end{enumerate}
\end{theorem}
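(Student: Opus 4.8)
The plan is to prove the three items in the natural order, first establishing the upper bound on $\dd_H(\mathcal{C}_{glob})$ that holds for any outer code, and then showing that this bound is met whenever $\mathcal{C}_{glob}$ is MR for the given localities, after which Item 3 follows immediately by combining Item 2 with Corollary~\ref{cor MSRD equivalent to MR LRC}. The conceptual bridge in all three items is the equivalence, already developed in Corollary~\ref{cor sum-rank erasure correction} and Theorem~\ref{th optimal arbitrary local codes}, between ``correcting an erasure pattern $\mathcal{E}$'' and the rank condition $\sum_{i=1}^g \Rk(A_i|_{\mathcal{R}_i}) \geq k$ on the surviving coordinates $\mathcal{R}_i = \Gamma_i \setminus \mathcal{E}_i$; note that $\sum_i \Rk(A_i|_{\mathcal{R}_i}) = \Rk(A|_{\mathcal{R}})$ because $A$ is block-diagonal, so the quantity $e(A,k)$ is exactly the largest $e$ such that \emph{every} erasure pattern of size $e$ leaves enough rank to recover the $k$ message symbols.

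For Item~1, I would argue contrapositively: if $\dd_H(\mathcal{C}_{glob}) \geq e+1$ then $\mathcal{C}_{glob}$ corrects every erasure pattern of size $e$, so for every $\mathcal{R}$ with $|\mathcal{R}| = n-e$ the restriction $\mathcal{C}_{glob}|_{\mathcal{R}}$ has the full $|\mathbb{F}|^k$ codewords; but $\mathcal{C}_{glob}|_{\mathcal{R}} = \mathcal{C}_{out}(A|_{\mathcal{R}})$ maps onto a subset of $\mathbb{F}^{|\mathcal{R}|}$ whose image has size at most $|\mathbb{F}|^{\Rk(A|_{\mathcal{R}})}$, forcing $\Rk(A|_{\mathcal{R}}) \geq k$ for all such $\mathcal{R}$, hence $\dd_H(\mathcal{C}_{glob}) - 1 \leq e(A,k)$. (One can phrase this directly: a codeword of minimum weight $d$ has a zero set of size $n-d$, and restricting to a subset $\mathcal{R}$ of that zero set of size exactly $n-d$ would give a nonzero codeword of $\mathcal{C}_{out}$ mapping to zero, contradicting injectivity of $A|_{\mathcal{R}}$ on $\mathcal{C}_{out}$ only if $\Rk(A|_{\mathcal{R}}) \geq k$ --- so if $d > e(A,k)+1$ we could pick $\mathcal{R} \subseteq$ zero set with $|\mathcal{R}| = n - e(A,k) - 1 > n - d$, a contradiction; I would pick whichever phrasing is cleanest.)

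For Item~2, suppose $\mathcal{C}_{glob}$ is MR for $(\Gamma_i,\mathcal{C}_i)_{i=1}^g$; I need $\dd_H(\mathcal{C}_{glob}) \geq e(A,k)+1$, i.e.\ $\mathcal{C}_{glob}$ corrects every erasure pattern $\mathcal{E}$ of size $e(A,k)$. By definition of $e(A,k)$, any such $\mathcal{E}$ --- indeed any $\mathcal{E}$ with $|\mathcal{E}| \leq e(A,k)$ --- satisfies $\Rk(A|_{\mathcal{R}}) \geq k$ where $\mathcal{R} = [n] \setminus \mathcal{E}$, which is precisely $\sum_i \Rk(A_i|_{\mathcal{R}_i}) \geq k$, so by the definition of MR (Definition~\ref{def general MR LRC}) the pattern is correctable; combined with Item~1 this gives equality. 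Item~3 then follows since, by Corollary~\ref{cor MSRD equivalent to MR LRC}, an MSRD outer code makes $\mathcal{C}_{glob}$ MR for $(\Gamma_i,\mathcal{C}_i)_{i=1}^g$, so Item~2 applies.

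I do not expect a genuine obstacle here; the only mildly delicate point is bookkeeping the identity $\Rk(A|_{\mathcal{R}}) = \sum_i \Rk(A_i|_{\mathcal{R}_i})$ and making sure the definition of $e(A,k)$ in \eqref{eq def of e_max} is being read correctly (it is a $\max$ over $e$ of a $\min$ over $\mathcal{R}$ of sizes $n-e$, so larger $e$ is harder, and the threshold is monotone). Everything else is a repackaging of Theorem~\ref{th optimal arbitrary local codes} and Corollary~\ref{cor sum-rank erasure correction} into the minimum-distance language, so the proof should be short.
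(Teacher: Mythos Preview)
Your proposal is correct and takes essentially the same approach as the paper, which simply states that the theorem ``follows from Theorem~\ref{th optimal arbitrary local codes} after unfolding the definitions.'' You have carried out exactly that unfolding: Item~1 via Item~1 of Theorem~\ref{th optimal arbitrary local codes} (an uncorrectable pattern forces $\Rk(A|_\mathcal{R})<k$), Item~2 via Definition~\ref{def general MR LRC}, and Item~3 via Corollary~\ref{cor MSRD equivalent to MR LRC}. The only caveat is that your parenthetical alternative phrasing for Item~1 is a bit tangled (the set-inclusion direction is off), so stick with your first argument through erasure correction, which is clean and correct.
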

\begin{proof}
It follows from Theorem \ref{th optimal arbitrary local codes} after unfolding the definitions.
\end{proof}

Observe that, if $ r_i = n_i = 1 $ (or simply $ r_i = n_i $ in general), for $ i = 1, $ $ 2, $ $ \ldots, g $, then 
$$ e(A,k) = n - k. $$
Therefore, the previous theorem recovers the classical Singleton bound and definition of MDS codes \cite{singleton}. In this case, optimal global distance is equivalent to MR due to the lack of linear redundancies in the matrices $ A_i $ (in other words, $ n_i - r_i = 0 $), for $ i = 1,2, \ldots, g $.

A bit more generally, we may give a simple formula for $ e(A,k) $ when the local codes are MDS and $ r_1 \leq r_2 \leq \ldots \leq r_g $ and $ \delta_1 \geq \delta_2 \geq \ldots \geq \delta_g $, which coincides with \cite[Th. 2]{chen-hao} and \cite[Th. 2]{kim} for disjoint local groups. It also recovers \cite[Th. 2]{kadhe} and \cite[Th. 8]{zeh-multiple} when $ \delta_1 = \delta_2 = \ldots = \delta_g = 2 $ for disjoint local groups. Finally, it recovers \cite[Th. 2.1]{kamath} and \cite[Eq. (2)]{gopalan} for equal localities and disjoint local groups.

\begin{proposition}
Assume that $ \mathcal{C}_i $ is MDS, for $ i = 1,2, \ldots, g $, $ r_1 \leq r_2 \leq \ldots \leq r_g $ and $ \delta_1 \geq \delta_2 \geq \ldots \geq \delta_g $, and let $ k = 1,2, \ldots, \sum_{i=1}^g r_i $. Let $ \ell = 0,1,2,\ldots, g-1 $ be the unique integer such that
\begin{equation}
\sum_{i=1}^\ell r_i < k \leq \sum_{i=1}^{\ell + 1} r_i.
\label{eq def of ell}
\end{equation}
Then it holds that
\begin{equation}
e(A,k) = n - k - \sum_{i=1}^\ell (\delta_i - 1).
\label{eq general bound}
\end{equation}
\end{proposition}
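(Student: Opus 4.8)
The plan is to evaluate $e(A,k)$ directly from its definition~\eqref{eq def of e_max}, using that each $\mathcal{C}_i$ is MDS (so $\Rk(A_i|_{\mathcal{R}_i}) = \min\{r_i, |\mathcal{R}_i|\}$ for any $\mathcal{R}_i \subseteq \Gamma_i$). The quantity $e(A,k)$ is the largest $e$ such that \emph{every} restriction $\mathcal{R} \subseteq [n]$ with $|\mathcal{R}| = n-e$ has $\Rk(A|_{\mathcal{R}}) = \sum_{i=1}^g \min\{r_i, |\mathcal{R}_i|\} \geq k$. Equivalently, writing $e_i = |\Gamma_i \setminus \mathcal{R}_i|$ for the number of erasures in group $i$, so that $\sum_{i=1}^g e_i = e$, the condition is
\begin{equation*}
\sum_{i=1}^g \min\{r_i,\ (r_i + \delta_i - 1) - e_i\} \geq k \quad \text{for all } (e_i)_{i=1}^g \text{ with } \sum_i e_i = e,\ 0 \leq e_i \leq r_i + \delta_i - 1.
\end{equation*}
Note $\min\{r_i, (r_i+\delta_i-1)-e_i\} = r_i - \max\{0, e_i - (\delta_i-1)\}$. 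So the left side equals $N - \sum_i \max\{0, e_i - (\delta_i-1)\}$ whenever all $e_i \leq r_i + \delta_i - 1$ (if some $e_i$ exceeds this the term is negative, but such patterns are even worse, so the worst case is attained with $e_i \leq r_i + \delta_i - 1$, and in fact we may further assume $e_i \leq \delta_i - 1$ unless forced otherwise).

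First I would reformulate: $e(A,k)+1 = n - \max\{ \sum_i (e_i - (\delta_i-1))^+ : \sum_i e_i = e(A,k) \text{-type constraint}\}$; more cleanly, I would show that $e(A,k)$ is characterized by: the minimum over all ways of distributing $e$ erasures of the surviving rank is $\geq k$, and this minimum is a monotone decreasing function of $e$, so $e(A,k)$ is the threshold. The adversary distributing $e$ erasures wants to maximize $\sum_i (e_i - (\delta_i-1))^+$; the optimal strategy, given $\delta_1 \geq \cdots \geq \delta_g$, is to first fill each group up to $\delta_i - 1$ "free" erasures (which cost nothing in rank), using up $\sum_i (\delta_i - 1)$ erasures, and then dump the remaining $e - \sum_i(\delta_i-1)$ erasures into the groups, each now costing one unit of rank per erasure — but one may only dump $r_i$ more into group $i$ before exceeding its size. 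Since $\delta_1 \geq \cdots \geq \delta_g$, and we want to minimize surviving rank, the adversary should concentrate the "costly" erasures in groups with large $r_i$ is irrelevant to the count (each costs exactly one), so the real constraint is just total capacity. The surviving rank after $e$ erasures, worst case, is $N - (e - \sum_{i : \text{filled}} (\delta_i - 1))$ adjusted for saturation.

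The cleanest route: I would prove $e(A,k) + 1 = \dd_H(\mathcal{C}_{glob})$ matches the known formula by a counting/exchange argument, then simplify. Specifically, set $e^* = n - k - \sum_{i=1}^\ell(\delta_i - 1)$; I will show (i) there is an erasure pattern of size $e^* + 1$ that is not correctable, i.e. reduces surviving rank below $k$, and (ii) every erasure pattern of size $e^*$ keeps surviving rank $\geq k$. For (i): put $\delta_i - 1$ erasures in each of groups $1, \ldots, \ell$ (cost $0$), then the remaining $e^* + 1 - \sum_{i=1}^\ell (\delta_i-1) = n - k + 1 - 2\sum_{i=1}^\ell(\delta_i-1) = \dots$; I need to verify the arithmetic shows exactly one "costly" unit too many, dropping rank to $k-1$; here the defining inequality~\eqref{eq def of ell} guarantees there is room to place these remaining erasures. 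For (ii): any pattern of size $e^*$ costs at most $e^* - \sum_{i=1}^\ell(\delta_i-1) = n - k - 2\sum_{i=1}^\ell(\delta_i - 1)$... and I must check this leaves rank $\geq k$, using the MDS structure and saturation bounds. The main obstacle I anticipate is the careful bookkeeping in the saturation/optimality argument — proving that the greedy adversary (fill cheap slots first, then spread costly erasures, respecting the $|\Gamma_i| = r_i+\delta_i-1$ caps) is actually optimal, and reconciling this with the precise role of the ordering $r_1 \leq \cdots \leq r_g$ and $\delta_1 \geq \cdots \geq \delta_g$ in pinning down $\ell$ via~\eqref{eq def of ell}. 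Once the adversary's optimal value is identified in closed form, equating it with $k$ and solving for $e$ yields~\eqref{eq general bound} directly, and the result then follows from Theorem~\ref{th global distance and threshold}(3) together with the earlier identification of correctable patterns in Theorem~\ref{th optimal arbitrary local codes}.
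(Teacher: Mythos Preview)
Your framework is right---use the MDS identity $\Rk(A_i|_{\mathcal{R}_i}) = \min\{r_i, |\mathcal{R}_i|\}$, phrase $e(A,k)$ as a threshold for an adversary distributing $e$ erasures, and compute the adversary's optimum---but the optimal strategy you describe is wrong, and this is where the argument breaks. You claim the adversary should ``first fill each group up to $\delta_i - 1$ free erasures'' across \emph{all} groups, then spend the remainder as costly erasures. This wastes budget: free erasures in a group you are not otherwise attacking do nothing. Concretely, with $e = e^*$ your strategy uses $\sum_{i=1}^g(\delta_i-1)$ erasures for zero rank loss and then $e^* - \sum_{i=1}^g(\delta_i-1) = N - k - \sum_{i=1}^\ell(\delta_i-1)$ costly erasures, leaving surviving rank $k + \sum_{i=1}^\ell(\delta_i-1)$, which is strictly above $k$ whenever $\ell \geq 1$ and some $\delta_i > 1$. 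So your construction in (i) will not produce an uncorrectable pattern at size $e^*+1$, and your optimality claim in (ii) is false.

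The paper's approach is the opposite: the adversary \emph{concentrates} erasures in the smallest possible number of groups, and under the ordering $r_1 \leq \cdots \leq r_g$, $\delta_1 \geq \cdots \geq \delta_g$, the cheapest groups to saturate are the \emph{last} ones (small $\delta_i$ means few free erasures to burn through; large $r_i$ means more room before the cap). Erase groups $\ell+2,\ldots,g$ entirely and put $r_{\ell+1}+\delta_{\ell+1}-1-\Delta$ erasures in group $\ell+1$, where $k = \sum_{i=1}^\ell r_i + \Delta$; this uses exactly $e^*$ erasures and leaves surviving rank exactly $k$. One more erasure drops the rank below $k$, giving (i); that no other distribution of $e^*$ erasures does worse is the concentration lemma the paper cites from \cite{rawat}. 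Your exchange-argument instinct is fine for proving that lemma, but the exchange should move erasures \emph{into} already-attacked groups, not spread free erasures outward.
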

\begin{proof}
We present a sketch of the proof. Since $ \mathcal{C}_i $ is MDS, we have that
$$ {\rm Rk}(A_i |_{\mathcal{R}_i}) = \min \{ r_i, | \mathcal{R}_i | \} = \min \{ r_i, r_i + \delta_i - 1 - |\mathcal{E}_i| \}, $$
for any $ \mathcal{R}_i \subseteq [n_i] $, for $ i = 1,2, \ldots, g $. Hence, for a given $ e \in [n] $, the worst-case number of erasures is
\begin{equation*}
\begin{split}
\rho = N - \min \{ & \sum_{i=1}^g \min \{ r_i, r_i + \delta_i - 1 - e_i \} \mid \\
& \sum_{i=1}^g e_i = e, 0 \leq e_i \leq r_i + \delta_i - 1, 1 \leq i \leq g \} .
\end{split}
\end{equation*}
As argued in the proof of \cite[Th. 24]{rawat}, the worst-case erasure pattern is achieved when erasures concentrate in the smallest number of local groups. In our case, these are the last groups since $ r_1 \leq r_2 \leq \ldots \leq r_g $ and $ \delta_1 \geq \delta_2 \geq \ldots \geq \delta_g $. Let $ 0 < \Delta \leq r_{\ell+1} $ be such that $ k = \sum_{i=1}^\ell r_i + \Delta $. If $ e $ is
\begin{equation*}
\begin{split}
e & = \sum_{j=1}^{g-\ell-1} (r_{g-j+1} + \delta_{g-j+1} - 1) + (r_{\ell+1} + \delta_{\ell+1} - 1 - \Delta) \\
 & = N - k + \sum_{j=1}^{g-\ell} (\delta_{g-j+1} - 1) = n - k - \sum_{i=1}^\ell (\delta_i - 1),
\end{split}
\end{equation*}
then it holds that $ \rho = N - k $. Hence more than $ e $ erasures will not be correctable. Thus $ e(A,k) = e $ and we are done.
\end{proof}

\section{Universal and Dynamic Properties} \label{sec dynamic properties}

In this section, we show how to perform local recodings (Subsection \ref{subsec arbitrary local recodings}), partition localities (Subsection \ref{subsec partitioning initial localities}), obtain multi-layer or hierarchical MR-LRCs (Subsection \ref{subsec hierarchical MR-LRCs}), and change the initial localities, number of local groups and file components (Subsection \ref{subsec changes of initial localities}), when using Construction \ref{construction 1}.

\subsection{Arbitrary and Efficient Local Linear Recodings} \label{subsec arbitrary local recodings}

We now show that the architecture described in (\ref{eq architecture arbitrary local codes}) (Lemma \ref{lemma equivalence linear local codes}) enables any local linear recoding. In other words, the local linear codes can be changed to any other local linear codes by only performing linear operations, inside each local group, over the local fields. The outer code remains unchanged, thus in case it is MSRD, the MR condition is preserved by Corollary \ref{cor MSRD equivalent to MR LRC}, and there is no need for global recoding.

We start by introducing local recoding matrices.

\begin{definition} [\textbf{Local recoding matrices}]
Let $ \mathcal{C}_{loc}^{(i)} \subseteq \mathbb{F}_q^{n_i} $ and $ \mathcal{D}_{loc}^{(i)} \subseteq \mathbb{F}_q^{n_i^\prime} $ be local $ \mathbb{F}_q $-linear codes with full-rank generator matrices $ A_i \in \mathbb{F}_{q_i}^{r_i \times n_i} $ and $ B_i \in \mathbb{F}_{q_i^\prime}^{r_i \times n_i^\prime} $, respectively, with $ q $ a power of $ q_i $ and $ q_i^\prime $, for $ i = 1,2, \ldots, g $. We define the corresponding local recoding matrices as the unique rank-$ r_i $ matrices $ T_i \in \mathbb{F}_q^{n_i \times n_i^\prime} $ such that $ B_i = A_i T_i $, for $ i = 1,2, \ldots, g $. 
\end{definition}

The existence of such recoding matrices is straightforward by linear algebra: For $ i = 1,2, \ldots, g $, there exists $ C_i \in \mathbb{F}_{q_i}^{r_i \times n_i} $ such that $ A_i C_i^T = I_{r_i} $, since $ \Rk (A_i) = r_i $. Hence
\begin{equation}
T_i = C_i^T B_i \in \mathbb{F}_q^{n_i \times n_i^\prime} .
\label{eq def local recoding matrices}
\end{equation}

Now let $ \mathcal{C}_{out} \subseteq \mathbb{F}^N $ be an outer MSRD code, with $ N = \sum_{i=1}^g r_i $. The corresponding global codes in (\ref{eq architecture arbitrary local codes}) are given by
\begin{equation*}
\begin{split}
\mathcal{C}_{glob} & = \mathcal{C}_{out} \diag(A_1, A_2, \ldots, A_g) \subseteq \mathbb{F}^n, \\
\mathcal{D}_{glob} & = \mathcal{C}_{out} \diag(B_1, B_2, \ldots, B_g) \subseteq \mathbb{F}^{n^\prime},
\end{split}
\end{equation*}
respectively, where $ n = \sum_{i=1}^g n_i $ and $ n^\prime = \sum_{i=1}^g n_i^\prime $. Therefore, it holds that
\begin{equation}
\mathcal{D}_{glob} = \mathcal{C}_{glob} \diag(T_1, T_2, \ldots, T_g).
\label{eq local recoding}
\end{equation}
This block-diagonal matrix multiplication can be understood as the local groups recoding their local stored data over the local fields, without need of communication between groups, global recoding or change of the outer code. 

The complexity of recoding the $ i $th local group is as follows: First, decoding the initial code has, in general, complexity $ \mathcal{O}( r_i^3 \log(q_i)^2 ) $ over $ \mathbb{F}_2 $. Second, encoding with the new code has, in general, complexity $ \mathcal{O}( r_i^2 \log(q_i^\prime)^2 ) $ over $ \mathbb{F}_2 $. Observe that global recoding in $ \mathbb{F}_{q^m}^{gr} $ has, in general, complexity $ \mathcal{O}( r^3 g^3 m^2 \log(q)^2 ) $ over $ \mathbb{F}_2 $, where $ m \geq r $ and $ g \gg r $.

An example of such local linear recodings can be found in Example \ref{example intro} below.

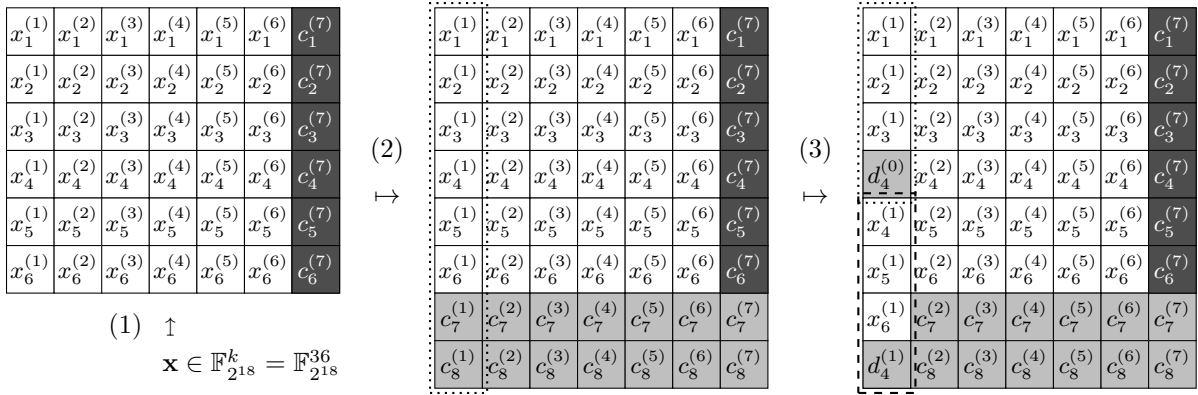
\begin{figure*} [!t]
\begin{center}
\begin{tabular}{c@{\extracolsep{1cm}}c}
\begin{tikzpicture}[
square/.style = {draw, rectangle, 
                 minimum size=\m, outer sep=0, inner sep=0, font=\small,
                 },
                        ]
\def\m{18pt}
\def\w{7}
\def\h{8}
\def\loc{6}
    \pgfmathsetmacro\uw{int(\w/2)}
    \pgfmathsetmacro\uh{int(\h/2)}

   \node [] at (5.65*\m,-8*\m) {$ \mathbf{x} \in \mathbb{F}_{2^{18}}^k = \mathbb{F}_{2^{18}}^{36} $}; 
   \node [] at (3*\m,-7.2*\m) {$ (1) $};   
   \node [] at (4*\m,-7.2*\m) {$ \upmapsto $};

  \foreach \x in {1,...,\w}
    \foreach \y in {1,...,\loc}
       {    
           \node [square, fill=white]  (\x,\y) at (\x*\m,-\y*\m) {$ x^{(\x)}_{\y} $};
       }
       
  \foreach \y in {1,...,\loc}
       {
           \node [square, fill=black!70]  (\w,\y) at (\w*\m,-\y*\m) {{\color{white}$ c^{(\w)}_{\y} $}};
       }

\node [] (\w + 1, \h) at (\w*\m + 1.5*\m, -3.5*\m) {$ (2) $};
\node [] (\w + 1, \h) at (\w*\m + 1.5*\m, -4.5*\m) {$ \mapsto $};

\def\i{9}
  \foreach \x in {1,...,\w}
    \foreach \y in {1,...,\h}
       {    
           \ifnum\y>\loc
               \node [square, fill=gray!50]  (\x,\y) at (\x*\m + \i*\m,-\y*\m) {$ c^{(\x)}_{\y} $};
           \else
               \node [square, fill=white]  (\x,\y) at (\x*\m + \i*\m,-\y*\m) {$ x^{(\x)}_{\y} $};
           \fi
       }
       
  \foreach \y in {1,...,\loc}
       {
           \node [square, fill=black!70]  (\w,\y) at (\w*\m + \i*\m,-\y*\m) {{\color{white}$ c^{(\w)}_{\y} $}};
       }
       
   \draw[thick, dotted] (0.4*\m + \i*\m, -0.4*\m) rectangle (1.6*\m + \i*\m,-8.6*\m); 

\node [] (\w + 1, \h) at (\w*\m + 10.5*\m, -3.5*\m) {$ (3) $};
\node [] (\w + 1, \h) at (\w*\m + 10.5*\m, -4.5*\m) {$ \mapsto $};

\def\i{18}
  \foreach \x in {1,...,\w}
    \foreach \y in {1,...,\h}
       {    
           \ifnum\y>\loc
               \node [square, fill=gray!50]  (\x,\y) at (\x*\m + \i*\m,-\y*\m) {$ c^{(\x)}_{\y} $};
           \else
               \node [square, fill=white]  (\x,\y) at (\x*\m + \i*\m,-\y*\m) {$ x^{(\x)}_{\y} $};
           \fi
       }

  \foreach \y in {1,...,\loc}
       {
           \node [square, fill=black!70]  (\w,\y) at (\w*\m + \i*\m,-\y*\m) {{\color{white}$ c^{(\w)}_{\y} $}};
       }

   \node [square, fill=gray!50] at (\m + \i*\m,-4*\m) {$ d^{(0)}_4 $}; 
   \node [square, fill=white] at (\m + \i*\m,-5*\m) {$ x^{(1)}_4 $}; 
   \node [square, fill=white] at (\m + \i*\m,-6*\m) {$ x^{(1)}_5 $}; 
   \node [square, fill=white] at (\m + \i*\m,-7*\m) {$ x^{(1)}_6 $};
   \node [square, fill=gray!50] at (\m + \i*\m,-8*\m) {$ d^{(1)}_4 $};  
   
   \draw[thick, dotted] (0.4*\m + \i*\m, -0.4*\m) rectangle (1.6*\m + \i*\m,-4.6*\m); 
   \draw[thick, dashed] (0.4*\m + \i*\m, -4.4*\m) rectangle (1.6*\m + \i*\m,-8.6*\m); 
        
\end{tikzpicture}

\end{tabular}
\end{center}

\caption{Illustration of Example \ref{example intro}. The uncoded file is denoted by $ \mathbf{x} $ and consists, in this figure, of $ k = 36 $ symbols over $ \mathbb{F}_{2^{18}} $ (or $ \mathbb{F}_{2^{18}}^\alpha $ by folding). It is encoded systematically using an MSRD outer code over $ \mathbb{F}_{2^{18}} $ to produce $ h = gr-k = 6 $ global parities (depicted in dark grey). All symbols are then arranged in $ g=7 $ local groups (corresponding to columns), each of size $ r = 6 $. This first process is denoted by (1). Finally, we compute two local parities (depicted in lighter grey) for each local group using an $ (8,6) $ MDS code over $ \mathbb{F}_{2^3} $ column-wise (hence $ g=7 $ times). This second process is denoted by (2). As explained in Example \ref{example intro}, we may recode locally the first group using two $ (4,3) $ MDS codes over $ \mathbb{F}_2 $, to generate the new local parities $ d^{(0)}_4 $ and $ d^{(1)}_4 $, while the underlying outer code and the rest of local groups remain unchanged. This recoding is denoted by (3). }
\label{fig partitioning localities}
\end{figure*}

\begin{example} \label{example intro}
Let $ g = 7 $, $ r = 6 $, and $ 1 \leq k \leq 42 $. With Construction \ref{construction 1}, the global field is $ \mathbb{F}_{2^{18}} $ and local fields are $ \mathbb{F}_{2^3} $. We first encode each block of $ k $ symbols (over $ \mathbb{F}_{2^{18}} $) of the file with an MSRD code of length $ gr = 42 $ to obtain an outer codeword. Choose now any seven $ (8,6) $ MDS codes over $ \mathbb{F}_{2^3} $ for the seven groups and recode the outer codeword with their Cartesian product. We then obtain an MR-LRC with $ 7 $ groups, each with locality $ 6 $ over $ \mathbb{F}_{2^3} $, allowing fast local repair. By the MR property, the code can correct any $ h = 42-k $ more erasures than the simple Cartesian product of the MDS codes.

Imagine that the data in the first group becomes hot data. We may partition that group into two subgroups, and recode the corresponding block of the outer codeword with two $ (4,3) $ MDS codes over $ \mathbb{F}_2 $. This allows very fast local repair by only XORing, at the expense of lower local distance (only in that group). Now we have $ 6 $ local groups with locality $ 6 $ over $ \mathbb{F}_{2^3} $, and $ 2 $ local groups with locality $ 3 $ over $ \mathbb{F}_2 $. 

The transition only requires turning an $ (8,6) $ MDS code over $ \mathbb{F}_{2^3} $ into the Cartesian product of two $ (4,3) $ MDS codes over $ \mathbb{F}_2 $, which can be performed efficiently, compared to global recoding of all $ 7 $ groups of length $ 8 $ over $ \mathbb{F}_{2^{18}} $. We may equally return to the original global code, which remains MR-LRC in both settings during the whole process. See Figure \ref{fig partitioning localities} for an illustration when $ k=36 $ (thus $ h=6 $).

Observe that Gabidulin-based LRCs would require the global field $ \mathbb{F}_{2^{3 \times 42}} = \mathbb{F}_{2^{126}} $, and \cite{gabrys} might improve our global field size only if $ h = 42 - k = 1,2,3,4,5 $. The strengths of our approach become clearer as $ g $ grows, while $ r $ remains constant, see Example \ref{example second intro}.
\end{example}

\subsection{Partitioning Local Groups and Initial Localities} \label{subsec partitioning initial localities}

As a consequence of the local recodings in the previous subsection, we show now how to partition localities.

The main observation is that codes that are MSRD for a given sum-rank length partition are also MSRD for finer partitions.

\begin{theorem} \label{th sum-rank partitions}
For $ i = 1,2, \ldots, g $, partition $ r_i = \sum_{j=1}^{g_i} r_{i,j} $. Denote by $ \dd_{SR} $ and $ \dd^{ref}_{SR} $, the sum-rank metrics in $ \mathbb{F}_{q^m}^N $ for the sum-rank length partitions $ N = \sum_{i=1}^g r_i $ and $ N = \sum_{i=1}^g \sum_{j=1}^{g_i} r_{i,j} $, respectively. For a code $ \mathcal{C} \subseteq \mathbb{F}_{q^m}^N $ (linear or non-linear), it holds that
\begin{equation}
\dd_{SR}(\mathcal{C}) \leq \dd^{ref}_{SR}(\mathcal{C}).
\label{eq sum-rank partitions}
\end{equation}
In particular, if $ \mathcal{C} $ is MSRD with respect to $ \dd_{SR} $, then it is MSRD with respect to $ \dd^{ref}_{SR} $.
\end{theorem}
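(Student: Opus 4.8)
The plan is to reduce everything to one elementary fact about matrix rank: if a matrix over $\mathbb{F}_q$ is partitioned into column blocks, its rank is at most the sum of the ranks of those blocks (the column space of the whole matrix is the sum of the column spaces of the blocks). First I would observe that the matrix representation map is compatible with concatenation: if $ \mathbf{c}^{(i)} = (\mathbf{c}^{(i,1)}, \mathbf{c}^{(i,2)}, \ldots, \mathbf{c}^{(i,g_i)}) $ with $ \mathbf{c}^{(i,j)} \in \mathbb{F}_{q^m}^{r_{i,j}} $, then $ M_{\mathcal{A}}(\mathbf{c}^{(i)}) = \left( M_{\mathcal{A}}(\mathbf{c}^{(i,1)}) \mid M_{\mathcal{A}}(\mathbf{c}^{(i,2)}) \mid \cdots \mid M_{\mathcal{A}}(\mathbf{c}^{(i,g_i)}) \right) $ as a block-column decomposition. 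Applying the rank subadditivity fact to this decomposition gives $ \Rk(M_{\mathcal{A}}(\mathbf{c}^{(i)})) \leq \sum_{j=1}^{g_i} \Rk(M_{\mathcal{A}}(\mathbf{c}^{(i,j)})) $.

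Next I would sum this inequality over $ i = 1, 2, \ldots, g $ to conclude $ \wt_{SR}(\mathbf{c}) \leq \wt^{ref}_{SR}(\mathbf{c}) $ for every $ \mathbf{c} \in \mathbb{F}_{q^m}^N $, where $ \wt^{ref}_{SR} $ denotes the weight attached to the refined partition. Taking the minimum of this pointwise inequality over all $ \mathbf{c} = \mathbf{d} - \mathbf{d}' $ with $ \mathbf{d}, \mathbf{d}' \in \mathcal{C} $ distinct yields $ \dd_{SR}(\mathcal{C}) \leq \dd^{ref}_{SR}(\mathcal{C}) $, which is \eqref{eq sum-rank partitions}. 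This step is purely formal once the pointwise weight inequality is in hand.

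For the MSRD claim, I would invoke the First Singleton bound (Corollary \ref{cor sum rank singleton}), which holds for the sum-rank distance of \emph{any} sum-rank length partition, in particular for both $ \dd_{SR} $ and $ \dd^{ref}_{SR} $. If $ \mathcal{C} $ attains equality for $ \dd_{SR} $, then $ |\mathcal{C}| = q^{m(N - \dd_{SR}(\mathcal{C}) + 1)} \leq q^{m(N - \dd^{ref}_{SR}(\mathcal{C}) + 1)} \leq q^{m(N - \dd_{SR}(\mathcal{C}) + 1)} $, where the first inequality is the Singleton bound for the refined partition and the second is \eqref{eq sum-rank partitions}; hence both are equalities, so $ \mathcal{C} $ also attains the Singleton bound for $ \dd^{ref}_{SR} $ and is therefore MSRD with respect to $ \dd^{ref}_{SR} $.

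I do not anticipate a genuine obstacle here; the only point requiring a moment's care is making explicit that $ M_{\mathcal{A}} $ commutes with splitting a vector into consecutive sub-blocks (so that the column-block rank bound applies), and noting that the Singleton bound of Corollary \ref{cor sum rank singleton} is stated for an arbitrary partition and so may legitimately be applied to the refinement.
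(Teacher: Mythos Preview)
Your proof is correct. The route differs from the paper's, which simply cites Theorem~\ref{th sum-rank distance is min among hamming distances} and Corollary~\ref{cor sum rank singleton}: since every block-diagonal invertible matrix for the refined partition $N = \sum_{i,j} r_{i,j}$ is in particular a block-diagonal invertible matrix for the coarser partition $N = \sum_i r_i$, the minimum in Theorem~\ref{th sum-rank distance is min among hamming distances} for $\dd^{ref}_{SR}$ is taken over a subset of the matrices used for $\dd_{SR}$, and hence is at least as large. Your argument instead establishes the pointwise inequality $\wt_{SR}(\mathbf{c}) \leq \wt^{ref}_{SR}(\mathbf{c})$ directly from subadditivity of rank over column blocks, which is more elementary and does not depend on Theorem~\ref{th sum-rank distance is min among hamming distances} at all. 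The paper's approach is shorter given the machinery already in place; yours is self-contained and makes the underlying reason (rank subadditivity) explicit. For the MSRD consequence, both arguments invoke Corollary~\ref{cor sum rank singleton} in the same way.
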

\begin{proof}
Immediate from Theorem \ref{th sum-rank distance is min among hamming distances} and Corollary \ref{cor sum rank singleton}.
\end{proof}

Note that, when $ g = 1 $ and $ g_1 = N $, this theorem recovers the well-known fact that $ \dd_R(\mathcal{C}) \leq \dd_H(\mathcal{C}) $, where $ \dd_R $ and $ \dd_H $ denote rank and Hamming distances, respectively. See, for instance, \cite{gabidulin}.

Hence, using Corollary \ref{cor MSRD equivalent to MR LRC}, we deduce the following result on partitioning local groups in MR-LRCs.

\begin{corollary} \label{cor partitioning localities}
Let $ \mathcal{C}_{out} \subseteq \mathbb{F}_{q^m}^N $ be MSRD for the sum-rank length partition $ N = \sum_{i=1}^g r_i $, with $ m \geq \max_i r_i $. For $ i = 1,2, \ldots, g $, partition $ r_i = \sum_{j=1}^{g_i} r_{i,j} $, and let $ A_{i,j} \in \mathbb{F}_{q_{i,j}}^{r_{i,j} \times n_{i,j}} $ be full-rank generator matrices of codes $ \mathcal{C}_{loc}^{(i,j)} \subseteq \mathbb{F}_{q_{i,j}}^{n_{i,j}} $, with $ 1 \leq r_{i,j} \leq n_{i,j} $ and $ q $ a common power of $ q_{i,j} $, for $ j = 1,2, \ldots, g_i $ and $ i = 1,2, \ldots, g $. The code
$$ \mathcal{C}_{glob} = \mathcal{C}_{out} \diag(A_1, A_2, \ldots, A_g) \subseteq \mathbb{F}_{q^m}^n, $$
where $ A_i = \diag (A_{i,1}, A_{i,2}, \ldots, A_{i,g_i}) \in \mathbb{F}_q^{r_i \times n_i} $, for $ i = 1,2, \ldots, g $, is an MR-LRC for the local codes $ ((\mathcal{C}_{loc}^{(i,j)})_{j=1}^{g_i})_{i=1}^g $.
\end{corollary}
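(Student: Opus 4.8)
The plan is to reduce the corollary to Theorem \ref{th sum-rank partitions} together with Corollary \ref{cor MSRD equivalent to MR LRC}. The key observation is that the doubly-indexed refinement $ N = \sum_{i=1}^g \sum_{j=1}^{g_i} r_{i,j} $ is exactly a sum-rank length partition in its own right, with $ \sum_{i=1}^g g_i $ groups, and that the block-diagonal local generator matrix $ \diag(A_1,A_2,\ldots,A_g) $ with $ A_i = \diag(A_{i,1},\ldots,A_{i,g_i}) $ is literally $ \diag(A_{1,1},\ldots,A_{1,g_1},A_{2,1},\ldots,A_{g,g_g}) $, a block-diagonal matrix indexed by the refined partition. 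So the global code $ \mathcal{C}_{glob} $ is simultaneously expressed in the form required by Lemma \ref{lemma equivalence linear local codes} and Definition \ref{def general MR LRC} with respect to the refined localities $ (r_{i,j}) $ and the local codes $ (\mathcal{C}_{loc}^{(i,j)}) $.

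First I would invoke Theorem \ref{th sum-rank partitions}: since $ \mathcal{C}_{out} $ is MSRD for the coarse partition $ N = \sum_{i=1}^g r_i $, it is MSRD for the refinement $ N = \sum_{i=1}^g\sum_{j=1}^{g_i} r_{i,j} $. (One should note that the field-size hypothesis $ m \geq \max_i r_i \geq \max_{i,j} r_{i,j} $ remains satisfied, so this is consistent with Corollary \ref{cor minimum m for MSRD}, though no hypothesis is actually needed for Theorem \ref{th sum-rank partitions} itself.) Second, I would apply Corollary \ref{cor MSRD equivalent to MR LRC} in the direction (1) $ \Rightarrow $ (2), now with the refined index set $ \{(i,j)\} $ playing the role of $ \{1,\ldots,g\} $, the matrices $ A_{i,j} $ playing the role of the $ A_i $, and $ \mathcal{C}_{loc}^{(i,j)} $ the linear code generated by $ A_{i,j} $: this yields that $ \mathcal{C}_{out}\diag((A_{i,j})_{i,j}) = \mathcal{C}_{glob} $ is a MR-LRC for $ ((\mathcal{C}_{loc}^{(i,j)})_{j=1}^{g_i})_{i=1}^g $. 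Since the matrix product $ \diag(A_1,\ldots,A_g) $ with nested block structure equals $ \diag((A_{i,j})_{i,j}) $, this is exactly the claimed conclusion.

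There is essentially no obstacle here beyond bookkeeping; the only point requiring a line of care is verifying that $ \diag(A_1,A_2,\ldots,A_g) $ with $ A_i=\diag(A_{i,1},\ldots,A_{i,g_i}) $ does coincide, as a map $ \mathbb{F}_{q^m}^N \to \mathbb{F}_{q^m}^n $, with the flat block-diagonal matrix over the refined partition, i.e. that nesting block-diagonals is associative and the concatenation of coordinate blocks is compatible with the ordering $ \Gamma_{i,j} $. This is immediate once one writes $ n = \sum_{i=1}^g n_i $ with $ n_i = \sum_{j=1}^{g_i} n_{i,j} $ and orders the coordinates lexicographically in $ (i,j) $. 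I would state this identification explicitly and then conclude by the two-sentence chain above. The proof is therefore short: apply Theorem \ref{th sum-rank partitions}, then Corollary \ref{cor MSRD equivalent to MR LRC}.
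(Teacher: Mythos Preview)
Your proposal is correct and matches the paper's intended argument: the corollary is stated immediately after Theorem \ref{th sum-rank partitions} with the words ``Hence we deduce the following,'' and the implicit reasoning is precisely to pass to the refined partition via that theorem and then invoke Corollary \ref{cor MSRD equivalent to MR LRC}, together with the trivial observation that a block-diagonal of block-diagonals flattens to a single block-diagonal over the refined index set. There is nothing to add or correct.
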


Observe that such partitionings can be performed efficiently by local linear recoding as in the previous subsection. An example of such partitioning, for only the first initial locality $ r_1 = \sum_{j=1}^{g_1} r_{1,j} $, can be found in Example \ref{example intro} (see also Fig. \ref{fig partitioning localities}).

\subsection{Multi-layer or hierarchical  MR-LRCs} \label{subsec hierarchical MR-LRCs}

In this subsection, we introduce and show how to obtain hierarchical MR-LRCs. Codes with hierarchical localities were introduced in \cite{hierarchical}, and hierarchical MR-LRCs have been introduced independently in the parallel work \cite{nair}. Note however that \cite{hierarchical, nair} consider equal localities and local distances and two-level hierarchies, whereas we consider the general case (see Definition \ref{def hierarchical}). 

We start by noting that partitioning localities as in Corollary \ref{cor partitioning localities} is simply using Cartesian products as local codes. Cartesian products are precisely MR-LRCs for very high information rates (no global parities). Exactly as in the previous corollary, instead of choosing $ A_i = \diag (A_{i,1}, A_{i,2}, \ldots, A_{i,g_i}) $, we may choose $ A_i $ to be the generator matrix of any MR-LRC. 

A precise general definition can be given as follows. To this end, we will need the notion of a \textit{rooted tree}, meaning a connected finite graph with no cycles where a particular vertex is given the name \textit{root}. The \textit{leaves} of a rooted tree are those vertices of the tree with no children (see Fig. \ref{fig hierarchical graph}).

\begin{definition} [\textbf{Hierarchical MR-LRCs}] \label{def hierarchical}
We define linear hierarchical MR-LRCs recursively on the family of rooted trees as follows. First, let $ \mathcal{G} $ be a rooted tree formed by a single vertex and let $ \mathcal{C} $ be a $ (k+h,k) $ MDS code. We say that $ \mathcal{C} $ is a hierarchical MR-LRC with parameters $ (\mathcal{G}, k, h) $.

Let $ \mathcal{G} $ be a rooted tree with root $ v_0 $. Let $ v_1, v_2, \ldots, v_g $ be the children of $ v_0 $, and let $ \mathcal{G}_i $ be the rooted subtree of $ \mathcal{G} $ with root $ v_i $ formed by $ v_i $ and all its descendants, for $ i = 1,2, \ldots, g $. Let also $ l_1, l_2, \ldots, l_L $ be the leaves of $ \mathcal{G} $, and let $ r_j \geq 1 $ be positive integers, for $ j = 1,2, \ldots, L $. Let $ h_v \geq 1 $ be positive integers, for $ v \in \mathcal{V} $, where $ \mathcal{V} $ is the vertex set of $ \mathcal{G} $. Let $ \mathcal{C}_{glob} \subseteq \mathbb{F}^n $ be a linear code satisfying the conditions in Lemma \ref{lemma equivalence linear local codes}. We say that $ \mathcal{C}_{glob} $ is a hierarchical MR-LRC with parameters $ (\mathcal{G}, (r_j)_{j=1}^L, (h_v)_{v \in \mathcal{V}}) $ if:
\begin{enumerate}
\item
$ \mathcal{C}_{glob} $ is an MR-LRC for $ (\Gamma_i,\mathcal{C}_i)_{i=1}^g $, where $ \Gamma_i $ and $ \mathcal{C}_i $ are as in Lemma \ref{lemma equivalence linear local codes}, for $ i = 1,2, \ldots, g $, 
\item
$ h_{v_0} $ is the number of global parities of $ \mathcal{C}_{glob} $ (see Definition \ref{def global and local parities}), and
\item
the local code $ \mathcal{C}_i \subseteq \mathbb{F}^{n_i} $ is itself a hierarchical MR-LRC with parameters $ (\mathcal{G}_i, (r_j)_{j \in J_i}, (h_v)_{v \in \mathcal{V}_i}) $, for $ i = 1,2, \ldots, g $.
\end{enumerate}
Here, $ J_i $ is the set of indices $ j $ such that $ l_j $ is a leaf of $ \mathcal{G}_i $, and $ \mathcal{V}_i $ is the vertex set of $ \mathcal{G}_i $, for $ i = 1,2, \ldots, g $.
\end{definition}

\begin{figure} [!h]
\begin{center}
	
	\begin{tikzpicture}[line width=1pt, scale=1]
		\tikzstyle{every node}=[inner sep=0pt, minimum width=4.5pt]
		
		\draw (0,3) node (v0) [draw, circle, fill=gray] {};
		\draw (-1,2) node (l1) [draw, circle, fill=gray] {};		
		\draw (1,2) node (v1) [draw, circle, fill=gray] {};		
		\draw (0,1) node (v2) [draw, circle, fill=gray] {};			
		\draw (2,1) node (l4) [draw, circle, fill=gray] {};	
		\draw (-1,0) node (l2) [draw, circle, fill=gray] {};	
		\draw (1,0) node (l3) [draw, circle, fill=gray] {};	
				
		\draw[middlearrow={stealth}, ultra thick] (v0) -- (l1);		
		\draw[middlearrow={stealth}, ultra thick] (v0) -- (v1);
		\draw[middlearrow={stealth}, ultra thick] (v1) -- (v2);
		\draw[middlearrow={stealth}, ultra thick] (v2) -- (l2);		
		\draw[middlearrow={stealth}, ultra thick] (v2) -- (l3);
		\draw[middlearrow={stealth}, ultra thick] (v1) -- (l4);
		
		\draw (0,3.5) node [] {$ v_0 $};
		\draw (-1.8,2) node [] {$ v_1 = l_1 $};	
		\draw (1.5,2) node [] {$ v_2 $};	
		\draw (-0.5,1) node [] {$ v_3 $};	
		\draw (2,0.5) node [] {$ l_4 $};
		\draw (-1,-0.5) node [] {$ l_2 $};	
		\draw (1,-0.5) node [] {$ l_3 $};

	\end{tikzpicture}	
	
\end{center}

\caption{The rooted tree, with notation as in Definition \ref{def hierarchical}, corresponding to the hierarchical MR-LRC depicted in Fig. \ref{fig hierarchical} below. Here, $ g = 2 $ and $ L = 4 $. }
\label{fig hierarchical graph}
\end{figure}
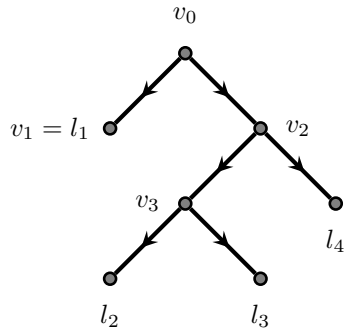

\begin{figure} [!h]
\begin{center}
	
\begin{tikzpicture}[
square/.style = {draw, rectangle, 
                 minimum size=\m, outer sep=0, inner sep=0, font=\small,
                 },
                        ]
\def\m{18pt}
\def\w{7}
\def\h{6}
\def\loc{4}
    \pgfmathsetmacro\uw{int(\w/2)}
    \pgfmathsetmacro\uh{int(\h/2)}

\def\i{10}
  \foreach \x in {1,...,\w}
    \foreach \y in {1,...,\h}
       {    
            \node [square, fill=white]  (\x,\y) at (\x*\m + \i*\m,-\y*\m) {$  $};
       }
       
  \foreach \y in {2,...,\loc}
       {
           \node [square, fill=black!70]  (\w,\y) at (\w*\m + \i*\m,-\y*\m) {{\color{white}$ v_2 $}};
       }
       
  \foreach \y in {3,...,\loc}
       {
           \node [square, fill=black!70]  (5,\y) at (5*\m + \i*\m,-\y*\m) {{\color{white}$ v_3 $}};
       }
       
    \foreach \y in {3,...,\loc}
       {
           \node [square, fill=black!70]  (3,\y) at (3*\m + \i*\m,-\y*\m) {{\color{white}$ v_0 $}};
       }

   \draw [decorate, thick,decoration={brace,amplitude=5pt}]
   (\i*\m + 1.5*\m, -\h*\m - \m) -- (\i*\m + 0.5*\m, -\h*\m - \m) node[midway,yshift=-1.5em]{$ v_1 = l_1 $}; 

   \draw [decorate, thick,decoration={brace,amplitude=5pt}]
   (\i*\m + 3.5*\m, -\h*\m - \m) -- (\i*\m + 1.5*\m, -\h*\m - \m) node[midway,yshift=-1.5em]{$ l_2 $}; 

   \draw [decorate, thick,decoration={brace,amplitude=5pt}]
   (\i*\m + 5.5*\m, -\h*\m - \m) -- (\i*\m + 3.5*\m, -\h*\m - \m) node[midway,yshift=-1.5em]{$ l_3 $}; 

   \draw [decorate, thick,decoration={brace,amplitude=5pt}]
   (\i*\m + 7.5*\m, -\h*\m - \m) -- (\i*\m + 5.5*\m, -\h*\m - \m) node[midway,yshift=-1.5em]{$ l_4 $};

\node [square, fill=gray!50]  at (1*\m + \i*\m,-4*\m) {$ l_1 $};
\node [square, fill=gray!50]  at (1*\m + \i*\m,-5*\m) {$ l_1 $};

\node [square, fill=gray!50]  at (2*\m + \i*\m,-5*\m) {$ l_2 $};
\node [square, fill=gray!50]  at (2*\m + \i*\m,-6*\m) {$ l_2 $};
\node [square, fill=gray!50]  at (3*\m + \i*\m,-5*\m) {$ l_2 $};

\node [square, fill=gray!50]  at (4*\m + \i*\m,-5*\m) {$ l_3 $};
\node [square, fill=gray!50]  at (4*\m + \i*\m,-6*\m) {$ l_3 $};
\node [square, fill=gray!50]  at (5*\m + \i*\m,-5*\m) {$ l_3 $};

\node [square, fill=gray!50]  at (6*\m + \i*\m,-5*\m) {$ l_4 $};
\node [square, fill=gray!50]  at (6*\m + \i*\m,-6*\m) {$ l_4 $};
\node [square, fill=gray!50]  at (7*\m + \i*\m,-4*\m) {$ l_4 $};

\node [square, fill=white]  at (1*\m + \i*\m,-6*\m) {$  $};
\pic [square, fill=white] at (1*\m + \i*\m,-6*\m) {lower={white}{$ $}};
\pic [square, fill=white] at (1*\m + \i*\m,-6*\m) {left={white}{$ $}};

\node [square, fill=white]  at (3*\m + \i*\m,-6*\m) {$  $};
\pic [square, fill=white] at (3*\m + \i*\m,-6*\m) {lower={white}{$ $}};

\node [square, fill=white]  at (5*\m + \i*\m,-6*\m) {$  $};
\pic [square, fill=white] at (5*\m + \i*\m,-6*\m) {lower={white}{$ $}};

\node [square, fill=white]  at (7*\m + \i*\m,-5*\m) {$  $};
\pic [square, fill=white] at (7*\m + \i*\m,-5*\m) {lower={white}{$ $}};
\pic [square, fill=white] at (7*\m + \i*\m,-5*\m) {right={white}{$ $}};

\node [square, fill=white]  at (7*\m + \i*\m,-6*\m) {$  $};
\pic [square, fill=white] at (7*\m + \i*\m,-6*\m) {upper={white}{$ $}};
\pic [square, fill=white] at (7*\m + \i*\m,-6*\m) {lower={white}{$ $}};
\pic [square, fill=white] at (7*\m + \i*\m,-6*\m) {right={white}{$ $}};
 
\draw[thick, dashed] (1.4*\m + \i*\m, -0.4*\m) rectangle (5.6*\m + \i*\m,-6.6*\m); 
\draw[thick, dotted] (1.3*\m + \i*\m, 0.6*\m) rectangle (7.7*\m + \i*\m,-6.7*\m); 
 
\node at (2.9*\m + \i*\m, 0*\m) {$ v_3 $}; 
\node at (3.4*\m + \i*\m, -0.1*\m) {\rotatebox[origin=c]{270}{$\Rsh$}};  
\node at (8.3*\m + \i*\m, 0*\m) {$ v_2 $};

\end{tikzpicture}	
	
\end{center}

\caption{Illustration of a hierarchical MR-LRC corresponding to the rooted tree in Fig. \ref{fig hierarchical graph}. White boxes correspond to information symbols and grey boxes correspond to parities. The vertex associated with the code producing the parity is written inside the box. Boxes inside the dashed box form codewords of the code corresponding to $ v_3 $, whereas boxes inside the dotted box form codewords of the code corresponding to $ v_2 $, and similarly for the leaves. }
\label{fig hierarchical}
\end{figure}
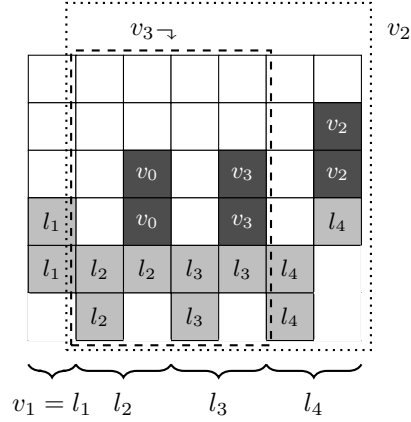

Observe that we only need to specify the rooted tree giving the hierarchy of the local codes, the global redundancy $ h_v $ at each vertex $ v $ of the tree, plus the locality $ r_j $ at each leaf $ l_j $. At a given vertex distinct from the root and leaves, the corresponding code is both a local and a global code. 

Note that the codes at the leaves are MDS by definition. If we denote by $ \delta_j = h_{l_j} + 1 $ the distance of the local code at the leaf $ l_j $, for $ j = 1,2, \ldots, L $, then the length and dimension of $ \mathcal{C}_{glob} $ are given, respectively, by 
\begin{equation}
n = \sum_{j=1}^L (r_j + \delta_j - 1) \quad \textrm{and} \quad k = n - \sum_{v \in \mathcal{V}} h_v.
\label{eq length dim hierarchical}
\end{equation}

Fig. \ref{fig hierarchical graph} shows a rooted tree, and Fig. \ref{fig hierarchical} depicts a hierarchical MR-LRC for such a tree, with a certain choice of localities and parities at each vertex.

Imposing that at each vertex of the tree the corresponding code is MR means that all possible information-theoretically correctable erasure patterns (for the given tree, localities and parities) can be corrected by the code at that vertex. 

The notion of hierarchical MR-LRC from \cite[Def. 6]{nair} is recovered from Definition \ref{def hierarchical} by choosing rooted trees where each leaf has depth two (the length of the path from the root to the leaf), and all localities and parities at vertices with the same depth are equal. This hierarchy is usually called \textit{two-level hierarchy}.

Observe that, if we drop the MR condition in Definition \ref{def hierarchical} (thus the MDS condition on trees with a single vertex), then we obtain general hierarchical LRCs. For two-level hierarchies, we recover the notion of hierarchical LRCs from \cite[Def. 2]{hierarchical}.

Finally, adapting recursively Construction \ref{construction 1} as in Corollary \ref{cor MSRD equivalent to MR LRC}, we may obtain explicit MR-LRCs for any choice of tree, localities and parities. The leaves are chosen as short MDS codes (e.g., Reed-Solomon) and at each vertex different from the leaves, we use a tailored linearized Reed-Solomon.

\begin{theorem} \label{th construction hierarchical}
Let $ \mathcal{G} $ be any rooted tree with vertex set $ \mathcal{V} $ and leaves $ l_1, l_2, \ldots, l_L \in \mathcal{V} $. Let $ r_j , h_v \geq 1 $ be arbitrary integers, for $ j = 1,2, \ldots, L $ and for $ v \in \mathcal{V} $. There exists a hierarchical MR-LRC $ \mathcal{C} \subseteq \mathbb{F}^n $ with parameters $ (\mathcal{G}, (r_j)_{j=1}^L, (h_v)_{v \in \mathcal{V}}) $. Its length and dimension are given as in (\ref{eq length dim hierarchical}). 

Let $ v_1, v_2, \ldots, v_g \in \mathcal{V} $ be the children of the root of $ \mathcal{G} $ and let the notation be as in Definition \ref{def hierarchical}. Then the global field of $ \mathcal{C} $ is $ \mathbb{F} = \mathbb{F}_{q^m} $, where $ q > g $ and $ m = \max \{ k_1, k_2, \ldots, k_g \} $, where $ k_i $ is the dimension of the code $ \mathcal{C}_i $ at vertex $ v_i $, which can be recursively computed using (\ref{eq length dim hierarchical}) for the subtree $ \mathcal{G}_i $, for $ i = 1,2, \ldots, g $.
\end{theorem}

Observe that Fig. \ref{fig hierarchical graph} can be used to describe a systematic encoding procedure for such a construction, as done in Fig. \ref{fig encoding for construction I}.

Consider the two-level hierarchy described above, where the root has children $ v_1, v_2, \ldots, v_g $ and each $ v_i $ has children $ l_{i,1}, l_{i,2}, \ldots, l_{i,t} $, for $ i = 1,2, \ldots, g $, hence $ L = gt $. We will assume equal localities and parities at each level. Let $ r $ and $ \delta $ be the locality and local distance at each leaf, let $ h_2 $ be the global parities at the chilren $ v_1, v_2, \ldots, v_g $, and let $ h_1 $ be the global parities at the root. By Theorem \ref{th construction hierarchical} and (\ref{eq length dim hierarchical}), the field sizes at the root, at its children and at the leaves are roughly
\begin{equation}
g^{tr - h_2}, \quad t^r \quad \textrm{and} \quad r+\delta - 1,
\label{eq field sizes two-level}
\end{equation}
respectively. In general, ``local'' decoding at lower layers is more efficient than ``global'' decoding at upper layers, whose erasure-correction algorithms are triggered less frequently.

\subsection{Recursive Encoding, and Changes of Initial Localities, File Components and Number of Local Groups} \label{subsec changes of initial localities}

In the previous subsection, we studied how to partition the initial localities without global recoding. In this subsection, we show how to modify, without global recoding, the initial localities up to $ m $ ($ \mathbb{F} = \mathbb{F}_{q^m} $), the initial file size $ k $, and the initial number of local groups $ g $ up to $ q-1 $. Note that the restriction $ k \leq \sum_{i=1}^g r_i $ must always hold by Corollary \ref{cor max k for initial localities}. Changes in these three parameters imply changing the number of global parities $ h = \sum_{i=1}^g r_i - k \geq 0 $, whereas changes in the local codes, as in the previous subsections, imply changing the number of local parities.

These processes are of interest when one desires to prepare iteratively a DSS that stores a final file up to a given size. One starts with a small number of local groups and a small file, and adds new groups, localities and file components over time. In such a scenario, it is desirable to encode the final file recursively, protecting intermediate files from erasures, without global recoding at each stage. Since these processes are reversible (the three parameters can be decreased), they can be used to remove and/or update file components over time without global recoding.

Let $ \mathcal{C}_k \subseteq \mathbb{F}_{q^m}^{(q-1)m} $ be a $ k $-dimensional linearized Reed-Solomon code (Definition \ref{def lin RS codes}), for $ k = 0,1,2, \ldots, N_0 = (q-1)m $, for the sum-rank length partition $ N_0 = \sum_{i=1}^{q-1} m $. Assume that $ (\mathcal{C}_k)_{k=0}^{(q-1)m} $ form a nested sequence of codes with nested generator matrices (placing extra rows at the end)
\begin{equation}
G_k = (G_{k,1} | G_{k,2} | \ldots | G_{k,q-1}) \in \mathbb{F}_{q^m}^{k \times (q-1)m},
\label{eq alternative gen matrix linRS}
\end{equation}
where $ G_{k,i} \in \mathbb{F}_{q^m}^{k \times m} $, for $ i = 1,2, \ldots, q-1 $. We may choose such nested linearized Reed-Solomon codes and nested generator matrices by using those in (\ref{eq gen matrix lin RS codes}) or (\ref{eq systematic generator linRS}), for instance. The largest matrix in (\ref{eq alternative gen matrix linRS}), i.e., for $ k = (q-1)m $, can be precomputed and stored for ease of future updates.

Fix an initial number of local groups $ 1 \leq g \leq q-1 $, initial localities $ 1 \leq r_i \leq m $, for $ i = 1,2, \ldots ,g $, and an initial file size $ 1 \leq k \leq \sum_{i=1}^g r_i $. The initial outer code is $ \mathcal{C}_{out} \subseteq \mathbb{F}_{q^m}^N $, $ N = \sum_{i=1}^g r_i $, with generator matrix $ G_k^{in} \in \mathbb{F}_{q^m}^{k \times N} $ obtained by taking the first $ r_i $ columns from $ G_{k,i} $, for $ i = 1,2, \ldots, g $.

Fix $ i = 1,2, \ldots, g $. To go from $ r_i $ to $ r^\prime_i $, we do the following. Let $ \mathbf{c}_{out} \in \mathcal{C}_{out} $ be the outer codeword encoding the file $ \mathbf{f} \in \mathbb{F}_{q^m}^k $. First, decode the $ i $th local group (this has complexity $ \mathcal{O}(r_i^3) $ over $ \mathbb{F}_{q_i} $ in general) to obtain $ \mathbf{c}_{out}^{(i)} \subseteq \mathbb{F}_{q^m}^{r_i} $. Next:
\begin{enumerate}
\item
If $ r^\prime_i < r_i $, then delete the last $ r_i - r_i^\prime $ components of $ \mathbf{c}_{out}^{(i)} \subseteq \mathbb{F}_{q^m}^{r_i} $ to obtain $ \widetilde{\mathbf{c}}_{out}^{(i)} \in \mathbb{F}_{q^m}^{r_i^\prime} $.
\item
If $ r^\prime_i > r_i $, then set $ \widetilde{\mathbf{c}}_{out}^{(i)} = (\mathbf{c}_{out}^{(i)}, \mathbf{f} D) \in \mathbb{F}_{q^m}^{r_i^\prime} $, where $ D $ is the matrix formed by the columns in $ G_{k,i} $ indexed by $ r_i+1 $, $ r_i+2 $, $ \ldots $, $ r_i^\prime $. This has complexity $ \mathcal{O}(k(r_i^\prime - r_i)) $ over $ \mathbb{F}_{q^m} $ in general.
\end{enumerate}
Finally, encode $ \widetilde{\mathbf{c}}_{out}^{(i)} $ using the generator matrix of the new $ r_i^\prime $-dimensional $ i $th local code $ A_i^\prime \in \mathbb{F}_{q_i^\prime}^{r_i^\prime \times n_i^\prime} $, which has complexity $ \mathcal{O}(r_i^{\prime 2}) $ over $ \mathbb{F}_{q_i^\prime} $ in general.

Changing the file size is done as usual with nested linear codes. Assume that $ 1 \leq k < k^\prime \leq \sum_{i=1}^g r_i $, and let $ \mathbf{f} \in \mathbb{F}_{q^m}^k $ and $ \mathbf{f}^\prime = (\mathbf{f}, \mathbf{d}) \in \mathbb{F}_{q^m}^{k^\prime} $ be the initial and final files, respectively. Let $ A_i \in \mathbb{F}_{q_i}^{r_i \times n_i} $ be the generator matrix of the $ i $th local code. If the initial global codeword is $ \mathbf{c}_{glob} \in \mathbb{F}_{q^m}^n $, then the new global codeword is $ \mathbf{c}_{glob}^\prime = \mathbf{c}_{glob} + \mathbf{d} E A \in \mathbb{F}_{q^m}^n $, where $ A = \diag (A_1, A_2, \ldots, A_g) \in \mathbb{F}_q^{N \times n} $, and $ E \in \mathbb{F}_{q^m}^{(k^\prime - k) \times N} $ is formed by the last $ k^\prime - k $ rows in $ G_{k^\prime}^{in} $. Note that going back from $ k^\prime $ to $ k $ is analogous.

We conclude by showing how to add or remove local groups. Let $ 1 \leq g^\prime \leq q-1 $ be the new number of local groups.

Assume first that $ g^\prime < g $ and $ k \leq \sum_{i=1}^{g^\prime} r_i $. In this case, we only need to delete the entire groups indexed by $ i = g^\prime + 1, g^\prime + 2, \ldots, g $, and we are done.

Assume now that $ g^\prime > g $, choose new localities $ 1 \leq r_i \leq m $ and new local generator matrices $ A_i \in \mathbb{F}_{q_i}^{r_i \times n_i} $, for $ i = g+1 $, $ g+2 $, $ \ldots, g^\prime $. Construct $ F \in \mathbb{F}_{q^m}^{k \times N^\prime} $, where $ N^\prime = \sum_{i=g+1}^{g^\prime} r_i $, by taking from $ G_{k,i} $ its first $ r_i $ columns, for $ i = g+1, g+2, \ldots, g^\prime $. If the initial global codeword is $ \mathbf{c}_{glob} \in \mathbb{F}_{q^m}^N $, then the new global codeword is $ (\mathbf{c}_{glob}, \mathbf{f} F A^\prime) \in \mathbb{F}_{q^m}^{N + N^\prime} $, where $ A^\prime = \diag (A_{g+1}, A_{g+2}, \ldots, A_{g^\prime}) \in \mathbb{F}_q^{N^\prime \times n^\prime} $ and $ n^\prime = \sum_{i=g+1}^{g^\prime} n^\prime_i $.

Finally, observe that the three processes described in this subsection (changes of initial localities, file components and number of groups) can be done sequentially in any order, as long as the restriction $ k \leq \sum_{i=1}^g r_i $ is satisfied in all stages.

\section{Comparisons between Different Optimal and/or MR-LRCs for General Parameters} \label{sec comparisons}

In this section, we compare global field sizes of Construction \ref{construction 1} and MR-LRCs from the literatue. We then focus on comparing local field sizes and computational complexities of local and global erasure correction with Construction \ref{construction 1} and codes from the literature that are defined for general parameters. To this end, we will focus on Tamo-Barg codes \cite{tamo-barg}, which are general optimal LRCs with linear field sizes (although not MR); the codes in \cite{gabrys}, which are the previous known MR-LRCs with smallest global fields; and Gabidulin-based LRCs \cite{rawat, kadhe, kim, calis}, which have outer MSRD codes and thus enjoy the same universality, dynamism and local fields as Construction \ref{construction 1}.

We will not study the minimum required per-node storage (i.e., storage complexity), since in all cases the number of bits per symbol is at most linear in $ gr $, which is not large enough to pose problems in large-scale DSSs.

\subsection{Smallest Global Field for given Initial Localities among Linearized Reed-Solomon Codes} \label{subsec smalles field initial localities}

In this subsection, we find that, when using linearized Reed-Solomon codes (Definition \ref{def lin RS codes}) for equal initial localities $ r = r_1 = r_2 = \ldots = r_g $, the smallest field size is achieved by choosing the finest sum-rank length partition, $ N = gr $. In Corollary \ref{cor minimum m for MSRD}, we found that $ r = \max_i r_i $ is the smallest exponent on the global field size, but finer partitions require a larger base, hence so far it is not completely clear that the finest sum-rank partition gives the smallest global field.

\begin{figure} [!h]
\begin{center}
\begin{tabular}{c@{\extracolsep{1cm}}c}
\begin{tikzpicture}

\begin{axis}[
  xmin=0,
  xmax=32,
  ymin=2^0,
  ymax=2^559,
  xtick={1,6,...,31}, 
  ymode=log,
  log basis y={2},
  xlabel={$x$},
  ylabel={$F(x)$}]
  
  \addplot[black, ultra thick, domain=1:31, samples=101,unbounded coords=jump] {max(8,x+1)^(186/x)};
  

\end{axis}
        
\end{tikzpicture}

\end{tabular}
\end{center}

\caption{The function $ F(x) = \max \{ x+1, r+ \delta - 1 \}^{ \lceil gr/x \rceil } $ with logarithmic y-axis (in base $ 2 $), for $ x = 1,2, \ldots, 31 $, where $ g = 31 $, $ r = 6 $ and $ \delta = 3 $. The value $ F(1) = 2^{558} $ indicates the field size required by a Gabidulin code in this case, whereas $ F(31) = 2^{30} $ indicates the field size required by a linearized Reed-Solomon code for the sum-rank length partition $ N = gr = \sum_{i=1}^g r $. }
\label{fig comparison}
\end{figure}
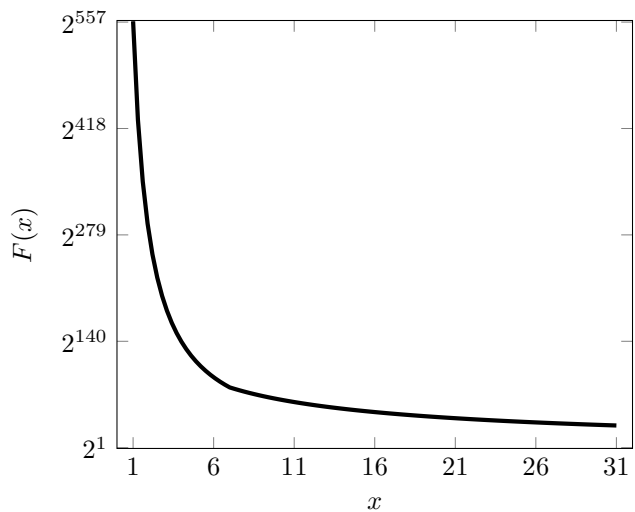

Assume that $ N = gr $ ($ r = r_1 = r_2 = \ldots = r_g $). We argue as follows. By Theorem \ref{th sum-rank partitions}, an MSRD code for a less fine sum-rank length partition than $ N = gr $ is also MSRD for the partition $ N = gr $. Such partitions are of the form $ N = \sum_{i=1}^x r_i^\prime $, for certain $ x = 1,2, \ldots, g $ that acts as the new number of local groups. If the $ r_i^\prime $ are roughly equal, the field size of a linearized Reed-Solomon code with such a sum-rank length partition is roughly $ (x+1)^{\lceil gr/x \rceil} $. Observe that the extremal case $ x=1 $ corresponds to choosing Gabidulin codes. The global field size in Construction \ref{construction 1} would then be 
$$ F(x) = \max \{ x+1, r+ \delta - 1 \}^{ \lceil gr/x \rceil }, $$
since the local MDS codes require field sizes approximately $ r+\delta-1 $. An illustration of this function is given in Fig. \ref{fig comparison} for $ g = 31 $, $ r = 6 $ and $ \delta = 3 $.

First observe that $ F(x) = (r+\delta -1)^{ \lceil gr/x \rceil } $ if $ x \leq r + \delta -2 $, hence the minimum is attained when $ r + \delta - 2 \leq x \leq g $. For these values, we have that $ F(x) = (x+1)^{ \lceil gr/x \rceil } $. Since $ f(x) = \log(x+1) / x $ is decreasing for $ x>0 $, we conclude that 
\begin{equation*}
\begin{split}
(g+1)^r & = F(g) \\
 & = \min \{ \max \{ x+1, r+ \delta - 1 \}^{ \lceil gr/x \rceil } \mid 1 \leq x \leq g \},
\end{split}
\end{equation*}
which is attained by the choice of parameters in Construction \ref{construction 1}, and we are done.

Note that, if $ g \geq 2 $, then any choice $ x \geq 2 $ always decreases the global field size required by Gabidulin codes, without any loss in performance.

\subsection{Comparison with other MR-LRCs from the Literature} \label{subsec other LRC}

In this subsection, we compare Construction \ref{construction 1} with codes from the literature \cite{blaum-RAID, blaum-twoparities, hu, gopalan-MR, calis, gabrys, neri, guruswami-MR}. We will later include the optimal LRCs from \cite{tamo-barg}, although they are not MR-LRCs in general. Throughout this subsection, we fix a dimension $ k $, locality $ r $, local distance $ \delta $, number of local groups $ g $ and number of global parities $ h = gr - k $. The global code length is always $ n = g(r+\delta - 1) $. 

We start by considering families of codes defined for restricted choices of parameters. The work \cite{blaum-RAID} obtains MR-LRCs with linear global field sizes (in the code length $ n $) for $ h = 1 $ and any $ \delta $, and for $ h > 1 $ and $ \delta = 2 $ based on the irreducibility of certain polynomials, which are not known to cover all parameters. A general construction of MR-LRCs for $ \delta = 2 $ with global field sizes of order $ k^h $ is obtained in \cite{gopalan-MR}. MR-LRCs with smaller global field sizes for $ h = 2 $ and $ g=2 $ are obtained in \cite{blaum-twoparities} and \cite{hu}, respectively. 

The first family of MR-LRCs known to cover all parameters was given in \cite{calis} (equivalent to \cite{rawat}). This construction corresponds to our Construction \ref{construction 1} using Gabidulin codes as outer codes, hence it has global field sizes of order $ (r+\delta-1)^{gr} $. The work \cite{gabrys} obtains MR-LRCs with global field sizes of order at least $ g^h = g^{gr-k} $ for general parameters. Observe that, in the PMDS literature, local groups (which are disjoint) are arranged in $ (m \times n) $-array form, as in Figs. \ref{fig LRCs equal} and \ref{fig partitioning localities}. The translation from standard LRC notation to standard PMDS notation is $ m := g $, $ n := r+\delta-1 $, $ s := h = gr-k $, $ r := \delta $. The field sizes in \cite{gabrys} are of orders $ n(mn)^{(r+1)s - 1} $ and $ \max \{ m, n^{r+s} \}^s  $, which in our notation both are at least $ g^h = g^{gr-k} $. This global field size was the smallest so far for general parameters. Recently, the work \cite{neri} obtains MR-LRCs with global field sizes of order $ r^{g(r-\delta + 1)} $ if $ r \geq \delta $. However, in practical scenarios we have that $ g \geq r $, thus $ r^{g(r-\delta + 1)} \gg g^r $ (the size obtained in Construction \ref{construction 1}). Also recently, the work \cite{guruswami-MR} obtains MR-LRCs with global field sizes of order $ n^{\varepsilon h} $, for $ \varepsilon >0 $ satisfying $ h = \Omega(n^{1-\varepsilon}) $ and $ r \ll \varepsilon \log (n) $. However, for such a parameter regime, it holds again that $ n^{\varepsilon h} \gg g^{\varepsilon \log(n)} \gg g^r $.

In conclusion, among families of MR-LRCs that cover general parameters, only those in \cite{gabrys} may have smaller global fields than our Construction \ref{construction 1}. We devote the rest of the subsection to compare, beyond global field sizes, our Construction \ref{construction 1} with the codes in \cite{gabrys}, Tamo-Barg codes \cite{tamo-barg} and MR-LRCs based on Gabidulin codes \cite{rawat, kadhe, kim, calis}. A summary is provided in Table \ref{table comparison}. Note that the global field size of the MR-LRCs from \cite{gabrys} is larger than $ g^h $ for most parameters, but we consider $ g^h $ for simplicity.


\begin{table*}[!t]
\caption{Construction \ref{construction 1} and LRCs from the literature for any dimension $ k $, locality $ r $, number of local groups $ g $ and local distance $ \delta $}
\label{table comparison}
\centering
\begin{tabular}{c||c|c|c|c|c|c}
\hline
&&&&&\\[-1.5em]
LRC family & MR & Global field ($ \approx $) & Local fields ($ \approx $) & Unequal $ (r_i,\delta_i)_{i=1}^g $ & Universal \& Dynamic & Hierarchical  \\[1.6pt]
\hline\hline
&&&&&\\[-1.5em]
Construction \ref{construction 1} & Yes & $ (g+1)^r $ & $ r+\delta-1 $ & Any up to $ r $ & Yes & Yes \\[1.6pt]
\hline
&&&&&\\[-1.5em]
Tamo-Barg \cite{tamo-barg} & No & $ (r+\delta-1)g $ & $ (r+\delta-1) g $ & Unknown & No & Unknown \\[1.6pt]
\hline
&&&&&\\[-1.5em]
Gabrys \textit{et al.} \cite{gabrys} & Yes & $ g^h $, $ h = gr-k $ & $ r+\delta-1 $ & Unknown & No & Unknown \\[1.6pt]
\hline
&&&&&\\[-1.5em]
\cite{rawat, kadhe, kim, calis} & Yes & $ (r+\delta-1)^{gr} $ & $ r+\delta-1 $ & Any choice & Yes & Yes \\[1.6pt]
\hline 
\end{tabular}
\end{table*}


The global field sizes required by Construction \ref{construction 1}, Tamo-Barg codes \cite{tamo-barg}, the codes by Gabrys \textit{et al.} \cite{gabrys} and Gabidulin-based LRCs \cite{rawat, kadhe, kim, calis} are in general,
\begin{equation*}
\begin{split}
q_{C1} & \geq (g+1)^r, \\
q_{TB} & \geq (r + \delta - 1)g, \\
q_{GYBS} & \geq g^{gr - k}, \\
q_{Gab} & \geq (r + \delta - 1)^{gr},
\end{split}
\end{equation*}
respectively, where always $ (r + \delta - 1)^{gr} > (g+1)^r $ as shown in the previous subsection. On the other hand, the minimum field sizes for the $ i $th local code $ \mathcal{C}_{loc}^{(i)} $ are, in general,
\begin{equation*}
\begin{split}
q_{C1}^{loc} & \geq r + \delta - 1, \\
q_{TB}^{loc} & \geq (r + \delta - 1)g, \\
q_{GYBS}^{loc} & \geq r + \delta - 1, \\
q_{Gab}^{loc} & \geq r + \delta - 1, 
\end{split}
\end{equation*}
respectively, where global field sizes must be powers of such local field sizes. See Example \ref{example intro} in Subsection \ref{subsec partitioning initial localities}.

Local repair (LR) by each local code requires the following number of operations over $ \mathbb{F}_2 $ per each block of $ \log(q_{loc}) $ bits, for the corresponding local field size $ q_{loc} $:
\begin{equation*}
\begin{split}
{\rm LR}_{C1} & = \mathcal{O}(r^2 \log(r)^2), \\
{\rm LR}_{TB} & = \mathcal{O}(r^2 \log(g)^2), \\
{\rm LR}_{GYBS} & = \mathcal{O}(r^2 \log(r)^2), \\
{\rm LR}_{Gab} & = \mathcal{O}(r^2 \log(r)^2), 
\end{split}
\end{equation*}
respectively. Here we assume that local decoding algorithms of quadratic complexity exist by Newton-type interpolation. To count the number of operations over $ \mathbb{F}_2 $, we are also assuming that a multiplication in a field $ \mathbb{F} $ of characteristic $ 2 $ costs about $ (\log |\mathbb{F}|)^2 $ operations in $ \mathbb{F}_2 $. 

Finally, global decoding (GD) requires the following number of operations over $ \mathbb{F}_2 $ per each block of $ \log(q_{glob}) $ bits, for the corresponding global field size $ q_{glob} $:
\begin{equation*}
\begin{split}
{\rm GD}_{C1} & = \mathcal{O}(r^4 g^2 \log(g)^2), \\
{\rm GD}_{TB} & = \mathcal{O}(r^2 \log(r)^2 g^2 \log(g)^2), \\
{\rm GD}_{GYBS} & = \mathcal{O}(r^2 (gr-k)^2 g^2 \log(g)^2), \\
{\rm GD}_{Gab} & = \mathcal{O}(r^4 \log(r)^2 g^4),
\end{split}
\end{equation*}
respectively. Here, we assume again that we may apply quadratic-complexity decoding algorithms via Newton-type interpolation (see \cite[Sec. V]{secure-multishot} and \cite[App. B]{secure-multishot} for linearized Reed-Solomon codes) and that a multiplication in $ \mathbb{F} $ costs roughly $ (\log |\mathbb{F}|)^2 $ operations in $ \mathbb{F}_2 $.  

In conclusion, local repair is similar with Construction \ref{construction 1}, \cite{gabrys} and Gabidulin-based LRCs, and is more efficient than for Tamo-Barg codes \cite{tamo-barg}, whereas these latter codes have more efficient global decoding. Gabidulin-based LRCs never achieve more efficient global decoding than Construction \ref{construction 1}. Finally, Construction \ref{construction 1} has more efficient global decoding than \cite{gabrys} whenever $ r \leq h = gr - k $.

Assume now that $ g \gg r $ or that $ r $ is constant, therefore $ g = \Theta(n) $. Then Gabidulin-based LRCs' global decoding complexity is of order $ \mathcal{O}(n^4) $ over $ \mathbb{F}_2 $, and that of the codes in \cite{gabrys} is of order $ \mathcal{O}(h^2 n^2 \log(n)^2) $ over $ \mathbb{F}_2 $. Meanwhile, Construction \ref{construction 1} and Tamo-Barg's global decoding complexities are comparable and of order $ \mathcal{O}(n^2 \log(n)^2) $ over $ \mathbb{F}_2 $, the same as quadratic decoding of Reed-Solomon codes with $ r=1 $, length $ n $, and local replication. See also Example \ref{example second intro} below.

\begin{example} \label{example second intro}
Fix $ r = 9 $, $ \delta = 2 $ and $ h = g $ (thus total length $ n = 10g $). An MR-LRC with such parameters has a local redundancy of $ 10 \% $ (because $ \delta - 1 = (1/10)(r+\delta-1) $) and a global redundancy of $ 10 \% $ (because $ h = (1/10)n $), hence a total redundancy of $ 20 \% $ (note that $ k = gr - h = (8/10)n $), while being able to correct a fraction $ n/10 $ of extra global erasures compared to the Cartesian product of $ g $ $ (r+\delta-1,r) $ MDS codes. The global field sizes in such a case would be roughly $ g^9 = (n/10)^9 $ (polynomial) for our construction, and $ g^h = (n/10)^{n/10} $ (exponential) for the construction in \cite{gabrys}. As argued above, global decoding when keeping $ r $ constant would have complexity of $ \mathcal{O}(n^2 \log(n)^2) $ operations in $ \mathbb{F}_2 $ for our construction, Tamo-Barg codes \cite{tamo-barg} and Reed-Solomon codes \cite{reed-solomon} with local replication. In this case, local repair in our construction is simple XORing, whose complexity does not grow, in contrast to Tamo-Barg codes. 
\end{example}

\section{Further Field Size Reductions: Subextension Subcodes and Sum-rank Alternant Codes} \label{sec further field reductions}

In this section, we introduce the concept of \textit{subextension subcode} of a sum-rank code, which plays the same role as subfield subcodes for the Hamming metric. When applied to linearized Reed-Solomon codes (Definition \ref{def lin RS codes}), we obtain \textit{sum-rank alternant codes}, which have not been considered yet, to the best of our knowledge. 

We will give an estimation on their minimum sum-rank distance and dimension, analogous to the classical estimations for alternant codes \cite{delsarte}. We conclude by analyzing their performance as universal LRCs. Since they can be used as outer codes with the architecture in (\ref{eq architecture arbitrary local codes}), all results in this paper hold, except that recoverability is no longer maximal. As was the case for linearized Reed-Solomon codes, by setting $ m = r_1 = r_2 = \ldots = r_g = 1 $ in this section, we obtain classical alternant codes with arbitrary local replication.

Fix a prime power $ q_0 $, a positive integer $ s $ and $ q = q_0^s $. We also fix a sum-rank length partition $ N = \sum_{i=1}^g r_i $.

\begin{definition}[\textbf{Subextension subcodes}]
Given a code $ \mathcal{C} \subseteq \mathbb{F}_{q^m}^N $, we define its subextension subcode of degree $ m $ over $ \mathbb{F}_{q_0} $ as the subfield subcode 
\begin{equation}
\mathcal{C}_{q_0,m} = \mathcal{C}|_{\mathbb{F}_{q_0^m}} = \mathcal{C} \cap \mathbb{F}_{q_0^m}^N \subseteq \mathbb{F}_{q_0^m}^N.
\end{equation} 
\end{definition}

Denote now by $ \wt_{SR}^q $ and $ \wt_{SR}^{q_0} $ the sum-rank weights in $ \mathbb{F}_{q^m}^N $ and $ \mathbb{F}_{q_0^m}^N $ over $ \mathbb{F}_q $ and $ \mathbb{F}_{q_0} $, respectively. Similarly for distances $ \dd_{SR}^q $ and $ \dd_{SR}^{q_0} $, respectively.

The crucial fact about subextension subcodes is that they inherit their minimum sum-rank distance from the original code. The case $ m=1 $ recovers the well-known fact on the minimum Hamming distance of subfield subcodes.

\begin{theorem} \label{th min sum-rank distance subextension subcodes}
For a (linear or non-linear) code $ \mathcal{C} \subseteq \mathbb{F}_{q^m}^N $, it holds that
\begin{equation}
\dd_{SR}^{q_0}(\mathcal{C}_{q_0,m}) \geq \dd_{SR}^{q}(\mathcal{C}).
\label{eq subext subcode inherits sum-rank distance}
\end{equation}
\end{theorem}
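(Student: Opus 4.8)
The plan is to reduce the claim to a statement about weights of individual nonzero codewords and then exploit the fact that the sum-rank weight of a vector over the larger field $\mathbb{F}_{q^m}$ cannot exceed its sum-rank weight over the smaller field $\mathbb{F}_{q_0^m}$, because passing to a smaller coefficient field can only make matrix ranks larger (or equal). Concretely, since $\mathcal{C}_{q_0,m}=\mathcal{C}\cap\mathbb{F}_{q_0^m}^N\subseteq\mathcal{C}$, every nonzero $\mathbf{c}\in\mathcal{C}_{q_0,m}$ is also a nonzero codeword of $\mathcal{C}$, so it suffices to show $\wt_{SR}^{q_0}(\mathbf{c})\geq\wt_{SR}^{q}(\mathbf{c})$ for every $\mathbf{c}\in\mathbb{F}_{q_0^m}^N$. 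Taking the minimum over all nonzero $\mathbf{c}\in\mathcal{C}_{q_0,m}$ then gives $\dd_{SR}^{q_0}(\mathcal{C}_{q_0,m})\geq\dd_{SR}^{q}(\mathcal{C})$ at once (using that $\dd_{SR}$ of a code, linear or not, is the minimum over differences of distinct codewords, and differences of codewords in $\mathcal{C}_{q_0,m}$ are differences of codewords in $\mathcal{C}$).

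The heart of the matter is therefore the block-by-block rank comparison. Fix one block $\mathbf{c}^{(i)}\in\mathbb{F}_{q_0^m}^{r_i}$. I would pick the matrix representation map $M_{\mathcal{A}_0}$ with respect to an $\mathbb{F}_{q_0}$-basis $\mathcal{A}_0$ of $\mathbb{F}_{q_0^m}$, and a matrix representation $M_{\mathcal{A}}$ with respect to an $\mathbb{F}_q$-basis $\mathcal{A}$ of $\mathbb{F}_{q^m}$. The key point is that $\mathbb{F}_{q_0}\subseteq\mathbb{F}_q$, so $M_{\mathcal{A}}(\mathbf{c}^{(i)})$ is an $\mathbb{F}_q$-matrix obtained from $M_{\mathcal{A}_0}(\mathbf{c}^{(i)})$ (an $\mathbb{F}_{q_0}$-matrix) by a change of basis that is itself $\mathbb{F}_{q_0}$-linear in the ``row'' direction: writing $\mathbb{F}_{q_0^m}$ as an $\mathbb{F}_q$-vector space of dimension $m/\deg$ sitting inside $\mathbb{F}_q^m$ via a fixed $\mathbb{F}_q$-linear embedding determined by the chosen bases, one sees $M_{\mathcal{A}}(\mathbf{c}^{(i)})=P\,M_{\mathcal{A}_0}(\mathbf{c}^{(i)})$ for a fixed $\mathbb{F}_{q_0}$-matrix $P$ of full column rank $m$ (actually a suitable block structure; the precise shape does not matter). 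Since left multiplication by a fixed matrix cannot increase the $\mathbb{F}_q$-rank beyond the $\mathbb{F}_{q_0}$-rank of the original — more simply, $\Rk_{q}(M_{\mathcal{A}}(\mathbf{c}^{(i)}))=\Rk_q(\text{column span over }\mathbb{F}_q)\le\dim_{\mathbb{F}_{q_0}}(\text{same column span over }\mathbb{F}_{q_0})=\Rk_{q_0}(M_{\mathcal{A}_0}(\mathbf{c}^{(i)}))$ — we get the blockwise inequality, and summing over $i=1,\dots,g$ gives $\wt_{SR}^{q}(\mathbf{c})\le\wt_{SR}^{q_0}(\mathbf{c})$.

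A cleaner way to package the rank comparison, which I would probably prefer in the write-up, is to invoke Theorem~\ref{th sum-rank distance is min among hamming distances}: the sum-rank weight over $\mathbb{F}_q$ of a fixed vector equals the minimum Hamming weight of its images under block-diagonal invertible $\mathbb{F}_q$-matrices, and likewise over $\mathbb{F}_{q_0}$ with $\mathbb{F}_{q_0}$-matrices. Since every invertible block-diagonal $\mathbb{F}_{q_0}$-matrix is in particular an invertible block-diagonal $\mathbb{F}_q$-matrix, the minimum over the larger family ($\mathbb{F}_q$) is at most the minimum over the smaller family ($\mathbb{F}_{q_0}$), so $\wt_{SR}^q(\mathbf{c})\le\wt_{SR}^{q_0}(\mathbf{c})$ for all $\mathbf{c}\in\mathbb{F}_{q_0^m}^N$ — no explicit basis manipulation needed. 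Then conclude as above by restricting the minimum in the definition of $\dd_{SR}^{q}(\mathcal{C})$ to codewords lying in $\mathcal{C}_{q_0,m}$.

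The main obstacle is purely bookkeeping: making sure the two matrix-representation maps $M_{\mathcal{A}}$ and $M_{\mathcal{A}_0}$ are compatible, i.e. that a vector with entries in $\mathbb{F}_{q_0^m}$ really does have its $\mathbb{F}_q$-rank bounded by its $\mathbb{F}_{q_0}$-rank when the row dimensions ($m$) agree but the entry fields differ. This is the standard ``subfield subcode'' phenomenon and the case $m=1$ is exactly the classical fact that subfield subcodes do not decrease minimum Hamming distance; the $m>1$ case adds nothing conceptually once one notes $\Rk_{q}(M)\le\Rk_{q_0}(M)$ for any matrix $M$ with entries in a common subfield $\mathbb{F}_{q_0}\subseteq\mathbb{F}_q$ (column-rank over a smaller field is at least column-rank over a larger field, since an $\mathbb{F}_q$-linear dependence among columns need not descend to $\mathbb{F}_{q_0}$). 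I would flag that last inequality explicitly, as it is the only non-formal ingredient, and then the theorem drops out.
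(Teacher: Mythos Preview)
Your proposal is correct, and your second (``cleaner'') approach is essentially the paper's own proof: the paper takes distinct $\mathbf{c},\mathbf{d}\in\mathcal{C}_{q_0,m}$, observes that any invertible block-diagonal $A$ over $\mathbb{F}_{q_0}$ is also one over $\mathbb{F}_q$, and chains $\dd_H(\mathbf{c}A,\mathbf{d}A)\geq\dd_{SR}^q(\mathbf{c}A,\mathbf{d}A)=\dd_{SR}^q(\mathbf{c},\mathbf{d})\geq\dd_{SR}^q(\mathcal{C})$, then invokes Theorem~\ref{th sum-rank distance is min among hamming distances} over $\mathbb{F}_{q_0}$ to conclude. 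Your first approach via explicit rank comparison of matrix representations is valid too (the essential point being that the $\mathbb{F}_q$-dimension of the span of the entries of $\mathbf{c}^{(i)}$ is at most their $\mathbb{F}_{q_0}$-dimension), but the paper bypasses this basis bookkeeping entirely by going through Theorem~\ref{th sum-rank distance is min among hamming distances}, just as you suggest in your preferred route.
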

\begin{proof}
Let $ \mathbf{c}, \mathbf{d} \in \mathcal{C}_{q_0,m} $, $ \mathbf{c} \neq \mathbf{d} $, let $ A_i \in \mathbb{F}_{q_0}^{r_i \times r_i} $ be invertible, for $ i = 1,2, \ldots, g $, and define $ A = \diag(A_1, A_2, \ldots, A_g) \in \mathbb{F}_{q_0}^{N \times N} $. Since $ \mathbb{F}_{q_0} \subseteq \mathbb{F}_q $ and $ \mathbf{c}, \mathbf{d} \in \mathcal{C} $, we deduce from Theorem \ref{th sum-rank distance is min among hamming distances} for $ q $ that
$$ \dd_H(\mathbf{c}A, \mathbf{d}A) \geq \dd_{SR}^q(\mathbf{c}A, \mathbf{d}A) = \dd_{SR}^q(\mathbf{c}, \mathbf{d}) \geq \dd_{SR}^q(\mathcal{C}). $$
The result follows now from Theorem \ref{th sum-rank distance is min among hamming distances} for $ q_0 $, after running over all such block-diagonal matrices $ A = $ $ \diag(A_1, A_2,$ $ \ldots, $ $ A_g) $ $ \in \mathbb{F}_{q_0}^{N \times N} $.
\end{proof}

We may now introduce sum-rank alternant codes.

\begin{definition} [\textbf{Sum-rank alternant codes}] \label{def sum-rank alternant}
For a primitive element $ \gamma $ of $ \mathbb{F}_{q^m} $ and a basis $ \mathcal{B} $ of $ \mathbb{F}_{q^m} $ over $ \mathbb{F}_q $, we define the sum-rank alternant code of degree $ m $ over $ \mathbb{F}_{q_0} $, with designed sum-rank distance $ \delta^* $, as the ($ \mathbb{F}_{q_0^m} $-linear) code
$$ \mathcal{C}^{\sigma,q_0,m}_{Alt}(\mathcal{B},\gamma, \delta^*) = (\mathcal{C}^\sigma_{L,\delta^* - 1}(\mathcal{B},\gamma)^\perp)_{q_0,m} \subseteq \mathbb{F}_{q_0^m}^N, $$
where $ \mathcal{C}^\sigma_{L,\delta^* - 1}(\mathcal{B},\gamma) $ is the $ (\delta^* - 1) $-dimensional linearized Reed-Solomon code in Definition \ref{def lin RS codes}.
\end{definition}

We now give estimates on the minimum sum-rank distance and dimension of sum-rank alternant codes.

\begin{corollary} \label{cor estimates sum-rank alternant codes}
The sum-rank alternant code $ \mathcal{C}_{Alt} = \mathcal{C}^{\sigma,q_0,m}_{Alt}(\mathcal{B},\gamma, \delta^*) $ $ \subseteq \mathbb{F}_{q_0^m}^N $ in Definition \ref{def sum-rank alternant} satisfies that:
\begin{enumerate}
\item
$ \dd_{SR}^{q_0}(\mathcal{C}_{Alt}) \geq \delta^* $.
\item
$ \dim(\mathcal{C}_{Alt}) \geq N - s(\delta^* - 1) $, where $ q = q_0^s $.
\end{enumerate}
\end{corollary}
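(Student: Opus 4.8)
The plan is to derive both estimates directly from the definition of the sum-rank alternant code as a subextension subcode of the dual of a linearized Reed-Solomon code, invoking Theorem~\ref{th min sum-rank distance subextension subcodes} for the distance bound and a standard dimension count for subfield subcodes for the dimension bound.

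For item~1, first I would observe that $\mathcal{C}^\sigma_{L,\delta^*-1}(\mathcal{B},\gamma) \subseteq \mathbb{F}_{q^m}^N$ is, by Proposition~\ref{prop linRS are MSRD}, a $(\delta^*-1)$-dimensional MSRD code for the sum-rank length partition $N = \sum_{i=1}^g r_i$, hence has minimum sum-rank distance $N - (\delta^*-1) + 1 = N - \delta^* + 2$. The dual code $\mathcal{C}^\sigma_{L,\delta^*-1}(\mathcal{B},\gamma)^\perp$ is also MSRD (for instance because the sum-rank Singleton bound, Corollary~\ref{cor sum rank singleton}, combined with the characterization of equality via all $\mathcal{C}A$ being MDS, is preserved under duality: if every $\mathcal{C}A$ is MDS then every $\mathcal{C}^\perp A' = (\mathcal{C}A)^\perp$ is MDS, using that $A' = \diag(A_1^{-T}, \ldots, A_g^{-T})$ ranges over the same family as $A$ ranges over block-diagonal invertible matrices over $\mathbb{F}_q$). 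Therefore $\dd_{SR}^q\bigl(\mathcal{C}^\sigma_{L,\delta^*-1}(\mathcal{B},\gamma)^\perp\bigr) = N - \dim\bigl(\mathcal{C}^\sigma_{L,\delta^*-1}(\mathcal{B},\gamma)^\perp\bigr) + 1 = N - (N - (\delta^*-1)) + 1 = \delta^*$. Now applying Theorem~\ref{th min sum-rank distance subextension subcodes} with the chain $\mathbb{F}_{q_0} \subseteq \mathbb{F}_q$ gives $\dd_{SR}^{q_0}(\mathcal{C}_{Alt}) \geq \dd_{SR}^q\bigl(\mathcal{C}^\sigma_{L,\delta^*-1}(\mathcal{B},\gamma)^\perp\bigr) = \delta^*$, which is item~1.

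For item~2, I would argue exactly as in the classical alternant code case. The code $\mathcal{C}^\sigma_{L,\delta^*-1}(\mathcal{B},\gamma)^\perp \subseteq \mathbb{F}_{q^m}^N$ has dimension $N - (\delta^*-1)$ over $\mathbb{F}_{q^m}$, so it is cut out by a parity-check matrix $H \in \mathbb{F}_{q^m}^{(\delta^*-1) \times N}$ (namely, a generator matrix of $\mathcal{C}^\sigma_{L,\delta^*-1}(\mathcal{B},\gamma)$ itself). The subextension subcode $\mathcal{C}_{Alt} = \mathcal{C} \cap \mathbb{F}_{q_0^m}^N$ consists of those $\mathbf{c} \in \mathbb{F}_{q_0^m}^N$ with $H\mathbf{c}^T = 0$. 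Expanding $\mathbb{F}_{q^m} = \mathbb{F}_{q_0^m} \otimes_{\mathbb{F}_{q_0}} \mathbb{F}_q$ (equivalently, fixing an $\mathbb{F}_{q_0^m}$-basis of $\mathbb{F}_{q^m}$ of size $s$, since $q = q_0^s$), each of the $\delta^*-1$ scalar parity-check equations over $\mathbb{F}_{q^m}$ becomes at most $s$ parity-check equations over $\mathbb{F}_{q_0^m}$, so $\mathcal{C}_{Alt}$ is the kernel of an $\mathbb{F}_{q_0^m}$-linear map given by at most $s(\delta^*-1)$ equations in $N$ unknowns, whence $\dim(\mathcal{C}_{Alt}) \geq N - s(\delta^*-1)$, establishing item~2.

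The main obstacle is the claim used in item~1 that the dual of an MSRD code is MSRD; I expect this to require the characterization of MSRD codes via ``$\mathcal{C}A$ is MDS for all block-diagonal invertible $A$ over $\mathbb{F}_q$'' from Corollary~\ref{cor sum rank singleton}, together with the observation that duality commutes with the block-diagonal action up to replacing each $A_i$ by $A_i^{-T}$, so that the family of test matrices is stable under this correspondence. If one prefers to avoid the duality-of-MSRD argument entirely, an alternative is to bound $\dd_{SR}^{q_0}(\mathcal{C}_{Alt})$ from below by noting that $\mathcal{C}_{Alt} \subseteq \mathcal{C}^\sigma_{L,\delta^*-1}(\mathcal{B},\gamma)^\perp$ and that any codeword of small sum-rank weight, after the block-diagonal column operations of Theorem~\ref{th sum-rank distance is min among hamming distances}, would yield a low-Hamming-weight vector orthogonal to a suitably transformed generator matrix of a linearized Reed-Solomon code, contradicting the MDS property of the latter after the same transformation; but the duality route is cleaner and self-contained given the results already in the excerpt.
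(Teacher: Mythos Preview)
Your proof is correct and follows the same overall structure as the paper's: for Item~1, establish that the dual $\mathcal{C}^\sigma_{L,\delta^*-1}(\mathcal{B},\gamma)^\perp$ is MSRD and then apply Theorem~\ref{th min sum-rank distance subextension subcodes}; for Item~2, use Delsarte's subfield-subcode dimension bound via the extension $\mathbb{F}_{q_0^m} \subseteq \mathbb{F}_{q^m}$ of degree $s$. The one difference is in how the MSRD property of the dual is justified: the paper simply cites the external fact that the dual of a linearized Reed-Solomon code is again a linearized Reed-Solomon code, whereas you give a self-contained argument that the dual of \emph{any} $\mathbb{F}_{q^m}$-linear MSRD code is MSRD, using the characterization in Corollary~\ref{cor sum rank singleton} together with $(\mathcal{C}A)^\perp = \mathcal{C}^\perp A^{-T}$ and the stability of the family of block-diagonal invertible $\mathbb{F}_q$-matrices under $A \mapsto A^{-T}$. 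Your route is slightly more general and avoids the external reference, at the cost of not identifying the dual explicitly.
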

\begin{proof}
First, the dual of a linearized Reed-Solomon code is again a linearized Reed-Solomon code (see \cite[Th. 4]{secure-multishot}), hence is also MSRD. Thus Item 1 follows from Theorem \ref{th min sum-rank distance subextension subcodes}.

Next, since $ q^m = (q_0^m)^s $ by hypothesis, Item 2 follows from Delsarte's lower bound on dimensions of subfield subcodes \cite{delsarte}. 
\end{proof}

Observe that setting $ m=1 $, we recover the Hamming metric, classical Reed-Solomon codes and classical alternant codes. The previous estimations become then the classical ones \cite{delsarte}. 

From the study in Section \ref{sec universal lrc arbitrary local codes}, we deduce the following on sum-rank alternant-based LRCs. 

\begin{theorem}
Let $ \mathcal{C}_{out} = \mathcal{C}^{\sigma,q_0,m}_{Alt}(\mathcal{B},\gamma, \delta^*) $ $ \subseteq \mathbb{F}_{q_0^m}^N $ be the sum-rank alternant code in Definition \ref{def sum-rank alternant}. Fix full-rank matrices $ A_i \in \mathbb{F}_{q_0}^{r_i \times n_i} $, $ 1 \leq r_i \leq n_i $, for $ i = 1,2, \ldots, g $, and define $ A = $ $ \diag(A_1, A_2,$ $ \ldots, $ $ A_g) $ $ \in \mathbb{F}_{q_0}^{N \times n} $ and $ n = n_1 + n_2 + \cdots + n_g $. The global code $ \mathcal{C}_{glob} = \mathcal{C}_{out}A \subseteq \mathbb{F}_{q_0^m}^n $ (see (\ref{eq architecture arbitrary local codes})) is a $ (\Gamma_i, \mathcal{C}_i)_{i=1}^g $-LRC as in Definition \ref{def locally linear LRC}, such that:
\begin{enumerate}
\item
If $ \mathcal{R} \subseteq [n] $ and $ \Rk(A|_\mathcal{R}) \geq N - \delta^* + 1 $, then the erasure pattern $ \mathcal{E} = [n] \setminus \mathcal{R} $ can be corrected by $ \mathcal{C}_{glob} $ for all codewords $ \mathbf{c} \in \mathcal{C}_{glob} $.
\item
It holds that 
$$ e(A, N - \delta^* + 1) \leq \dd_H(\mathcal{C}_{glob}) - 1 \leq e(A,N - s (\delta^*-1)), $$
where $ e(A,k) $ is as in (\ref{eq def of e_max}), for $ k \in \mathbb{N} $.
\end{enumerate}
\end{theorem}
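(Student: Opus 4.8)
The plan is to reduce everything to the machinery of Sections~\ref{sec preliminaries sum-rank} and~\ref{sec universal lrc arbitrary local codes}, applied with base field $\mathbb{F}_{q_0}$ and global field $\mathbb{F}_{q_0^m}$ (so the role played by ``$q$'' in those statements is here played by $q_0$). First I would note that $\mathcal{C}_{glob} = \mathcal{C}_{out}\diag(A_1,\ldots,A_g)$ with each $A_i$ full-rank over $\mathbb{F}_{q_0}$ and $\mathcal{C}_{out} \subseteq \mathbb{F}_{q_0^m}^N$ linear; hence Item~3 of Lemma~\ref{lemma equivalence linear local codes} holds, so Item~1 holds, and therefore $\mathcal{C}_{glob}$ is a $(\Gamma_i,\mathcal{C}_i)_{i=1}^g$-LRC in the sense of Definition~\ref{def locally linear LRC} with $\mathcal{C}_i$ generated by $A_i$. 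For Item~1, I would run the argument of the proof of Theorem~\ref{th optimal arbitrary local codes}(2) verbatim, replacing the MSRD hypothesis by the weaker bound $\dd_{SR}^{q_0}(\mathcal{C}_{out}) \geq \delta^*$ from Corollary~\ref{cor estimates sum-rank alternant codes}(1): writing $\mathcal{R}_i = \mathcal{R}\cap\Gamma_i$, block-diagonality gives $A|_{\mathcal{R}} = \diag(A_1|_{\mathcal{R}_1},\ldots,A_g|_{\mathcal{R}_g})$ and $\sum_{i=1}^g \Rk(A_i|_{\mathcal{R}_i}) = \Rk(A|_{\mathcal{R}}) \geq N-\delta^*+1$, so $N - \sum_{i=1}^g \Rk(A_i|_{\mathcal{R}_i}) \leq \delta^*-1 < \delta^* \leq \dd_{SR}^{q_0}(\mathcal{C}_{out})$, and Corollary~\ref{cor sum-rank erasure correction} furnishes a decoder $\mathcal{C}_{out}(A|_{\mathcal{R}}) \to \mathcal{C}_{out}$, i.e. the restriction map $\mathcal{C}_{glob} \to (\mathcal{C}_{glob})_{\mathcal{R}}$ has a left inverse, which is exactly correctability of $\mathcal{E} = [n]\setminus\mathcal{R}$.

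For the upper bound in Item~2, let $k = \dim_{\mathbb{F}_{q_0^m}}(\mathcal{C}_{glob}) = \dim_{\mathbb{F}_{q_0^m}}(\mathcal{C}_{out})$. Theorem~\ref{th global distance and threshold}(1), which holds for an arbitrary outer code, gives $\dd_H(\mathcal{C}_{glob}) \leq e(A,k)+1$. By Corollary~\ref{cor estimates sum-rank alternant codes}(2) we have $k \geq N - s(\delta^*-1)$, and $e(A,\cdot)$ is non-increasing in its second argument (raising the rank threshold in (\ref{eq def of e_max}) can only shrink the admissible set of $e$), so $e(A,k) \leq e(A, N-s(\delta^*-1))$, whence $\dd_H(\mathcal{C}_{glob}) - 1 \leq e(A, N-s(\delta^*-1))$.

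For the lower bound, set $e_0 = e(A, N-\delta^*+1)$; this is well defined since $\Rk(A|_{[n]}) = N \geq N-\delta^*+1$. By definition of $e(A,\cdot)$, every $\mathcal{R}\subseteq[n]$ with $|\mathcal{R}| = n - e_0$ satisfies $\Rk(A|_{\mathcal{R}}) \geq N-\delta^*+1$, so by Item~1 the restriction $\mathcal{C}_{glob} \to (\mathcal{C}_{glob})_{\mathcal{R}}$ is injective for every such $\mathcal{R}$. A nonzero codeword of Hamming weight $w \leq e_0$ would vanish on a set of size $n-w \geq n-e_0$, hence on some $\mathcal{R}$ of size exactly $n-e_0$, contradicting injectivity; therefore $\dd_H(\mathcal{C}_{glob}) \geq e_0 + 1$, i.e. $e(A, N-\delta^*+1) \leq \dd_H(\mathcal{C}_{glob}) - 1$. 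Combining the two inequalities yields Item~2.

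I expect no genuine obstacle: the whole argument is an unfolding of definitions together with the two estimates of Corollary~\ref{cor estimates sum-rank alternant codes}. The points requiring a little care are the monotonicity of $e(A,\cdot)$, the block-diagonal identity $\sum_i \Rk(A_i|_{\mathcal{R}_i}) = \Rk(A|_{\mathcal{R}})$, and the routine translation between erasure-correctability (a left-inverse restriction map) and a lower bound on $\dd_H$; one should also implicitly assume $1 \leq \delta^* \leq N$ and $N - s(\delta^*-1) \geq 1$ so that $e(A, N-\delta^*+1)$ and $e(A, N-s(\delta^*-1))$ fall within the range of the definition (\ref{eq def of e_max}).
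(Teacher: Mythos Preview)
Your proposal is correct and follows essentially the same approach as the paper's proof: Item~1 via Corollary~\ref{cor sum-rank erasure correction} together with the bound $\dd_{SR}^{q_0}(\mathcal{C}_{out})\geq\delta^*$ from Corollary~\ref{cor estimates sum-rank alternant codes}; the lower bound in Item~2 from Item~1 and the definition~(\ref{eq def of e_max}); and the upper bound from Theorem~\ref{th global distance and threshold}(1), the dimension estimate in Corollary~\ref{cor estimates sum-rank alternant codes}(2), and monotonicity of $e(A,\cdot)$. You have simply unpacked the details (block-diagonal rank identity, the injectivity/correctability translation, and the monotonicity step) that the paper leaves implicit.
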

\begin{proof}
Item 1 follows from Corollary \ref{cor sum-rank erasure correction} and $ \dd_{SR}^{q_0}(\mathcal{C}_{out}) \geq \delta^* $ (Corollary \ref{cor estimates sum-rank alternant codes}). The first inequality in Item 2 follows from Item 1 and (\ref{eq def of e_max}). The second inequality in Item 2 follows from $ \dim(\mathcal{C}_{glob}) \geq N - s(\delta^* - 1) $ (Corollary \ref{cor estimates sum-rank alternant codes}), Theorem \ref{th global distance and threshold} and the fact that $ e(A,k) \leq e(A,k^\prime) $ whenever $ k \geq k^\prime $, which follows from (\ref{eq def of e_max}).
\end{proof}

In conclusion, we obtain an exponential reduction in field size, with exponent $ s $, by reducing the entropy of the stored file by at most $ (s - 1)(\delta^* - 1) $. However, this reduction of information rate is only a bound. It would be of interest to find sharper lower bounds on the dimension of subextension subcodes, as done in \cite{stichtenoth} for classical subfield subcodes. Adapting known decoding algorithms of alternant codes is also of interest, as well as finding what type of cyclicity certain sum-rank alternant codes may have.

\section{Conclusion} \label{sec conclusion}

In this work, we have proposed an architecture for LRCs based on those in \cite{rawat, kadhe, kim, calis}, but substituting Gabidulin codes \cite{gabidulin, roth} by general MSRD codes, in particular, by linearized Reed-Solomon codes \cite{linearizedRS}. Construction \ref{construction 1} achieves maximal recoverability and all the flexibility advantages of Gabidulin-based LRCs, but with global field sizes roughly $ g^r $. Such field sizes improve the smallest known global fields of MR-LRCs when $ r \leq h $ \cite{gabrys} for equal localities, and all previous best known MR-LRCs for unequal localities \cite{kadhe, kim}. 

The flexibility features of Construction \ref{construction 1} include being compatible with arbitrary local linear codes (not necessarily MDS) over much smaller local fields, partitioning the initial localities without global recoding, and changing the initial localities, file components and number of local groups, without global recoding. It also enabled us to obtain explicit multi-layer or hierarchical MR-LRCs for any type of hierarchy and any choice of (equal or unequal) localities and local distances.

To further reduce global field sizes, subextension subcodes and sum-rank alternant codes have been introduced. As in the classical case, exponential field size reductions are possible at the cost of reducing the information rate.

\ifCLASSOPTIONcaptionsoff
  \newpage
\fi



\bibliographystyle{IEEEtran}

\begin{thebibliography}{10}
\providecommand{\url}[1]{#1}
\csname url@samestyle\endcsname
\providecommand{\newblock}{\relax}
\providecommand{\bibinfo}[2]{#2}
\providecommand{\BIBentrySTDinterwordspacing}{\spaceskip=0pt\relax}
\providecommand{\BIBentryALTinterwordstretchfactor}{4}
\providecommand{\BIBentryALTinterwordspacing}{\spaceskip=\fontdimen2\font plus
\BIBentryALTinterwordstretchfactor\fontdimen3\font minus
  \fontdimen4\font\relax}
\providecommand{\BIBforeignlanguage}[2]{{%
\expandafter\ifx\csname l@#1\endcsname\relax
\typeout{** WARNING: IEEEtran.bst: No hyphenation pattern has been}%
\typeout{** loaded for the language `#1'. Using the pattern for}%
\typeout{** the default language instead.}%
\else
\language=\csname l@#1\endcsname
\fi
#2}}
\providecommand{\BIBdecl}{\relax}
\BIBdecl

\bibitem{pyramid}
C.~Huang, M.~Chen, and J.~Li, ``Pyramid codes: Flexible schemes to trade space
  for access efficiency in reliable data storage systems,'' in \emph{Sixth IEEE
  Int. Symp. Network Comp. App.}, July 2007, pp. 79--86.

\bibitem{oggier}
F.~Oggier and A.~Datta, ``Self-repairing homomorphic codes for distributed
  storage systems,'' in \emph{2011 Proc. IEEE INFOCOM}, April 2011, pp.
  1215--1223.

\bibitem{gopalan}
P.~Gopalan, C.~Huang, H.~Simitci, and S.~Yekhanin, ``On the locality of
  codeword symbols,'' \emph{IEEE Trans.\ Info.\ Theory}, vol.~58, no.~11, pp.
  6925--6934, Nov 2012.

\bibitem{azure}
C.~Huang, H.~Simitci, Y.~Xu, A.~Ogus, B.~Calder, P.~Gopalan, J.~Li, and
  S.~Yekhanin, ``Erasure coding in {W}indows {A}zure storage,'' in \emph{2012
  {USENIX} Annual Technical Conference}, Boston, MA, 2012, pp. 15--26.

\bibitem{xoring}
M.~Sathiamoorthy, M.~Asteris, D.~Papailiopoulos, A.~G. Dimakis, R.~Vadali,
  S.~Chen, and D.~Borthakur, ``{XOR}ing elephants: novel erasure codes for big
  data,'' in \emph{Proc. 39th int. conf. {V}ery {L}arge {D}ata {B}ases}, ser.
  PVLDB'13, 2013, pp. 325--336.

\bibitem{kamath}
G.~M. Kamath, N.~Prakash, V.~Lalitha, and P.~V. Kumar, ``Codes with local
  regeneration and erasure correction,'' \emph{IEEE Trans.\ Info.\ Theory},
  vol.~60, no.~8, pp. 4637--4660, Aug 2014.

\bibitem{lastras}
J.~Han and L.~A. Lastras-Monta{\~n}o, ``Reliable memories with subline
  accesses,'' in \emph{Proc.\ IEEE Int.\ Symp. Info.\ Theory}, 2007, pp.
  2531--2535.

\bibitem{rawat}
A.~S. Rawat, O.~O. Koyluoglu, N.~Silberstein, and S.~Vishwanath, ``Optimal
  locally repairable and secure codes for distributed storage systems,''
  \emph{IEEE Trans.\ Info.\ Theory}, vol.~60, no.~1, pp. 212--236, 2014.

\bibitem{papailio-LRC}
D.~S. Papailiopoulos and A.~G. Dimakis, ``Locally repairable codes,''
  \emph{IEEE Trans.\ Info.\ Theory}, vol.~60, no.~10, pp. 5843--5855, Oct 2014.

\bibitem{song}
W.~Song, S.~H. Dau, C.~Yuen, and T.~J. Li, ``Optimal locally repairable linear
  codes,'' \emph{IEEE J.\ Select.\ Areas Comm.}, vol.~32, no.~5, pp.
  1019--1036, May 2014.

\bibitem{tamo-matroid}
I.~Tamo, D.~S. Papailiopoulos, and A.~G. Dimakis, ``Optimal locally repairable
  codes and connections to matroid theory,'' \emph{IEEE Trans.\ Info.\ Theory},
  vol.~62, no.~12, pp. 6661--6671, Dec 2016.

\bibitem{tamo-barg}
I.~Tamo and A.~Barg, ``A family of optimal locally recoverable codes,''
  \emph{IEEE Trans.\ Info.\ Theory}, vol.~60, no.~8, pp. 4661--4676, Aug 2014.

\bibitem{BargAGlrc}
A.~Barg, K.~Haymaker, E.~W. Howe, G.~L. Matthews, and A.~V{\'a}rilly-Alvarado,
  ``Locally recoverable codes from algebraic curves and surfaces,'' in
  \emph{Algebraic Geometry for Coding Theory and Cryptography}.\hskip 1em plus
  0.5em minus 0.4em\relax Springer, 2017, pp. 95--127.

\bibitem{li-lrc}
X.~{Li}, L.~{Ma}, and C.~{Xing}, ``Optimal locally repairable codes via
  elliptic curves,'' \emph{IEEE Trans.\ Info.\ Theory}, vol.~65, no.~1, pp.
  108--117, Jan 2019.

\bibitem{luo-lrc}
Y.~{Luo}, C.~{Xing}, and C.~{Yuan}, ``Optimal locally repairable codes of
  distance 3 and 4 via cyclic codes,'' \emph{IEEE Trans.\ Info.\ Theory},
  vol.~65, no.~2, pp. 1048--1053, Feb 2019.

\bibitem{howlonglrc}
V.~{Guruswami}, C.~{Xing}, and C.~{Yuan}, ``How long can optimal locally
  repairable codes be?'' \emph{IEEE Trans.\ Info.\ Theory}, pp. 1--1, 2019.

\bibitem{kadhe}
S.~Kadhe and A.~Sprintson, ``Codes with unequal locality,'' in \emph{Proc.\
  IEEE Int.\ Symp. Info.\ Theory}, July 2016, pp. 435--439.

\bibitem{zeh-multiple}
A.~Zeh and E.~Yaakobi, ``Bounds and constructions of codes with multiple
  localities,'' in \emph{Proc.\ IEEE Int.\ Symp. Info.\ Theory}, July 2016, pp.
  640--644.

\bibitem{chen-hao}
B.~Chen, S.~T. Xia, and J.~Hao, ``Locally repairable codes with multiple $
  (r_i,\delta_i) $-localities,'' in \emph{Proc.\ IEEE Int.\ Symp. Info.\
  Theory}, June 2017, pp. 2038--2042.

\bibitem{kim}
G.~Kim and J.~Lee, ``Locally repairable codes with unequal local erasure
  correction,'' \emph{IEEE Trans.\ Info.\ Theory}, vol.~64, no.~11, pp.
  7137--7152, May 2018.

\bibitem{gabidulin}
E.~M. Gabidulin, ``Theory of codes with maximum rank distance,'' \emph{Problems
  Inform. Transmission}, vol.~21, no.~1, pp. 1--12, 1985.

\bibitem{roth}
R.~M. Roth, ``Maximum-rank array codes and their application to crisscross
  error correction,'' \emph{IEEE Trans.\ Info.\ Theory}, vol.~37, no.~2, pp.
  328--336, Mar. 1991.

\bibitem{MR-property}
M.~Chen, C.~Huang, and J.~Li, ``On the maximally recoverable property for
  multi-protection group codes,'' in \emph{Proc.\ IEEE Int.\ Symp. Info.\
  Theory}, June 2007, pp. 486--490.

\bibitem{blaum-RAID}
M.~Blaum, J.~L. Hafner, and S.~Hetzler, ``Partial-{MDS} codes and their
  application to {RAID} type of architectures,'' \emph{IEEE Trans.\ Info.\
  Theory}, vol.~59, no.~7, pp. 4510--4519, July 2013.

\bibitem{gopalan-MR}
P.~Gopalan, C.~Huang, B.~Jenkins, and S.~Yekhanin, ``Explicit maximally
  recoverable codes with locality,'' \emph{IEEE Trans.\ Info.\ Theory},
  vol.~60, no.~9, pp. 5245--5256, Sept 2014.

\bibitem{gopi}
S.~Gopi, V.~Guruswami, and S.~Yekhanin, ``On maximally recoverable local
  reconstruction codes,'' \emph{Electr. Colloq. Comp. Complexity (ECCC)},
  vol.~24, no. 183, 2017.

\bibitem{grid-like}
P.~Gopalan, G.~Hu, S.~Kopparty, S.~Saraf, C.~Wang, and S.~Yekhanin, ``Maximally
  recoverable codes for grid-like topologies,'' in \emph{Proc. 28th Annual
  ACM-SIAM Symp. Discrete Alg.}, ser. SODA '17, Philadelphia, PA, USA, 2017,
  pp. 2092--2108.

\bibitem{blaum-twoparities}
M.~Blaum, J.~S. Plank, M.~Schwartz, and E.~Yaakobi, ``Construction of partial
  {MDS} and {S}ector-{D}isk codes with two global parity symbols,'' \emph{IEEE
  Trans.\ Info.\ Theory}, vol.~62, no.~5, pp. 2673--2681, May 2016.

\bibitem{hu}
G.~Hu and S.~Yekhanin, ``New constructions of {SD} and {MR} codes over small
  finite fields,'' in \emph{Proc.\ IEEE Int.\ Symp. Info.\ Theory}, July 2016,
  pp. 1591--1595.

\bibitem{calis}
G.~Calis and O.~O. Koyluoglu, ``A general construction for {PMDS} codes,''
  \emph{IEEE Communications Letters}, vol.~21, no.~3, pp. 452--455, March 2017.

\bibitem{gabrys}
R.~{Gabrys}, E.~{Yaakobi}, M.~{Blaum}, and P.~H. {Siegel}, ``Constructions of
  partial {MDS} codes over small fields,'' \emph{IEEE Trans.\ Info.\ Theory},
  pp. 1--1, 2018.

\bibitem{neri}
A.~Neri and A.-L. Horlemann-Trautmann, ``Random construction of partial {MDS}
  codes,'' 2018, preprint. Available: https://arxiv.org/abs/1801.05848.

\bibitem{guruswami-MR}
V.~Guruswami, L.~Jin, and C.~Xing, ``Constructions of maximally recoverable
  local reconstruction codes via function fields,'' 2018, preprint. Available:
  https://arxiv.org/abs/1808.04539.

\bibitem{linearizedRS}
U.~Mart{\'i}nez-Pe{\~n}as, ``Skew and linearized {R}eed-{S}olomon codes and
  maximum sum rank distance codes over any division ring,'' \emph{J.\ Algebra},
  vol. 504, pp. 587--612, 2018.

\bibitem{reed-solomon}
I.~S. Reed and G.~Solomon, ``Polynomial codes over certain finite fields,''
  \emph{J. Soc.\ Ind.\ and Appl.\ Math.}, vol.~8, no.~2, pp. 300--304, 1960.

\bibitem{hierarchical}
B.~{Sasidharan}, G.~K. {Agarwal}, and P.~V. {Kumar}, ``Codes with hierarchical
  locality,'' in \emph{Proc.\ IEEE Int.\ Symp. Info.\ Theory}, June 2015, pp.
  1257--1261.

\bibitem{nair}
A.~M. Nair and V.~Lalitha, ``Maximally recoverable codes with hierarchical
  locality,'' in \emph{Proc. 25th National Conf. Comm. (NCC) (To appear)},
  2019, available: https://arxiv.org/abs/1901.02867.

\bibitem{multishot}
R.~W. N{\'o}brega and B.~F. Uch{\^o}a-Filho, ``Multishot codes for network
  coding using rank-metric codes,'' in \emph{Proc. 2010 Third IEEE Int.\
  Workshop on Wireless Network Coding}, 2010, pp. 1--6.

\bibitem{space-time-kumar}
H.-F. Lu and P.~V. Kumar, ``A unified construction of space-time codes with
  optimal rate-diversity tradeoff,'' \emph{IEEE Trans.\ Info.\ Theory},
  vol.~51, no.~5, pp. 1709--1730, May 2005.

\bibitem{hamming}
R.~W. Hamming, ``Error detecting and error correcting codes,'' \emph{The Bell
  System Technical Journal}, vol.~29, no.~2, pp. 147--160, April 1950.

\bibitem{delsartebilinear}
P.~Delsarte, ``Bilinear forms over a finite field, with applications to coding
  theory,'' \emph{J. Comb. Theory, S. A}, vol.~25, no.~3, pp. 226--241, 1978.

\bibitem{secure-multishot}
U.~Mart{\'i}nez-Pe{\~n}as and F.~R. Kschischang, ``Reliable and secure
  multishot network coding using linearized {R}eed-{S}olomon codes,''
  \emph{IEEE Trans.\ Info.\ Theory}, pp. 1--1, 2019.

\bibitem{singleton}
R.~Singleton, ``Maximum distance q-nary codes,'' \emph{IEEE Trans.\ Info.\
  Theory}, vol.~10, no.~2, pp. 116--118, Apr. 1964.

\bibitem{lam}
T.~Y. Lam, ``A general theory of {V}andermonde matrices,'' \emph{Expositiones
  Mathematicae}, vol.~4, pp. 193--215, 1986.

\bibitem{lam-leroy}
T.~Y. Lam and A.~Leroy, ``{V}andermonde and {W}ronskian matrices over division
  rings,'' \emph{J.\ Algebra}, vol. 119, no.~2, pp. 308--336, 1988.

\bibitem{huang-MELRC}
P.~Huang, E.~Yaakobi, and P.~H. Siegel, ``Multi-erasure locally recoverable
  codes over small fields,'' in \emph{Proc.\ Allerton Conf. Comm. Control
  Comp.}, Oct 2017, pp. 1123--1130.

\bibitem{blaum-MELRC}
M.~Blaum, ``Extended integrated interleaved codes over any field with
  applications to locally recoverable codes,'' pp. 1--26, 2018, {A}vailable:
  \url{https://arxiv.org/abs/1805.10268}.

\bibitem{delsarte}
P.~Delsarte, ``On subfield subcodes of modified {R}eed-{S}olomon codes,''
  \emph{IEEE Trans.\ Info.\ Theory}, vol.~21, no.~5, pp. 575--576, Sep. 1975.

\bibitem{stichtenoth}
H.~Stichtenoth, ``On the dimension of subfield subcodes,'' \emph{IEEE Trans.\
  Info.\ Theory}, vol.~36, no.~1, pp. 90--93, Jan. 1990.

\end{thebibliography}
\end{document}